\documentclass{llncs}
\sloppy

\usepackage{amssymb,amsmath,stmaryrd,wasysym}
\usepackage{gastex}
\usepackage{multirow}
\usepackage{epsfig}
\usepackage{arydshln}

\newcounter{compressEnum}

{}

\newtheorem{mydefi}{Definition}

\newcommand{\ee}{\hfill \ensuremath{\blacksquare}}  

\newcommand{\Pref}{{\sf Prefs}}
\newcommand{\Play}{{\sf Plays}}
\newcommand{\K}{{\sf K}}
\newcommand{\Last}{{\sf Last}}

\newcommand{\Mem}{{\sf Mem}}

\newcommand{\Num}{{\sf Num}}

\def\abs#1{\ensuremath{\lvert #1\rvert}}

\newcommand{\Good}{\mathsf{Good}}

\newcommand{\Safe}{\mathsf{Safe}}
\newcommand{\Win}{\mathsf{Win}}

\newcommand{\trans}{{\delta}}

\renewcommand{\l}{{\ell}}

\newcommand{\nat}{\mathbb N} 
\newcommand{\tuple}[1]{\langle #1 \rangle}

\newcommand{\Post}{\mathsf{Post}}

\newcommand{\Succ}{\mathsf{Succ}}
\newcommand{\Abs}{\mathsf{Abs}}

\newcommand{\Obs}{{\cal{O}}}

\newcommand{\obs}{\mathsf{obs}}

\newcommand{\target}{{\cal T}}

\newcommand{\sink}{{\sf sink}}

\newcommand{\straa}{\sigma}
\newcommand{\strab}{\pi}
\newcommand{\dist}{{\cal D}}
\newcommand{\Supp}{{\sf Supp}}

\newcommand{\Straa}{\Sigma}
\newcommand{\Strab}{\Pi}

\newcommand{\rank}{\mathsf{Rank}}
\newcommand{\maxrank}{\mathsf{MaxRank}}

\newcommand{\Reach}{\mathsf{Reach}}

\newcommand{\Buchi}{{\sf B\ddot{u}chi}}

\newcommand{\Prb}{\mathrm{Pr}}
\newcommand{\set}[1]{\{\: #1 \:\}}

\newcommand{\Nats}{\mathbb{N}}
\newcommand{\slopefrac}[2]{\leavevmode\kern.1em
  \raise .5ex\hbox{\the\scriptfont0 #1}\kern-.1em
  /\kern-.15em\lower .25ex\hbox{\the\scriptfont0 #2}}

\newcommand{\ov}{\overline}

\newcommand{\inc}{{\sf inc}}
\newcommand{\idle}{{\sf idle}}

\newcommand{\calw}{{\mathcal W}}
\newcommand{\calb}{{\mathcal B}}
\newcommand{\wh}{\widehat}
\newcommand{\PosReachSafe}{{\sc PosReachSureSafe}}

\def\mynote#1{{\sf $\clubsuit$ #1 $\clubsuit$}}

\makeatletter
\DeclareRobustCommand\sfrac[1]{\@ifnextchar/{\@sfrac{#1}}%
                                            {\@sfrac{#1}/}}
\def\@sfrac#1/#2{\leavevmode\scalebox{.9}{\kern.1em\raise.5ex
         \hbox{$\m@th\mbox{\fontsize\sf@size\z@
                           \selectfont#1}$}\kern-.1em
         /\kern-.15em\lower.25ex
          \hbox{$\m@th\mbox{\fontsize\sf@size\z@
                            \selectfont#2}$}}}
\DeclareRobustCommand\numfrac[1]{\@ifnextchar/{\@numfrac{#1}}%
                                            {\@numfrac{#1}}}
\def\@numfrac#1{\leavevmode \hbox{$\m@th\mbox{\fontsize\sf@size\z@
                           \selectfont#1}$}}
\makeatother




\makeatletter

\begingroup \catcode `|=0 \catcode `[= 1
\catcode`]=2 \catcode `\{=12 \catcode `\}=12
\catcode`\\=12 |gdef|@xcomment#1\end{comment}[|end[comment]]
|endgroup

\def\@comment{\let\do\@makeother \dospecials\catcode`\^^M=10\def\par{}}

\def\begincomment{\@comment\@xcomment}

\makeatother

\newenvironment{comment}{\begincomment}{}

\begin{document}
\pagestyle{plain}

\title{{\bf Partial-Observation Stochastic Games: \\ How to Win when Belief Fails}
}

\author{Krishnendu Chatterjee\inst{1} \and Laurent Doyen\inst{2}}

\institute{
IST Austria (Institute of Science and Technology Austria) \\
\and LSV, ENS Cachan \& CNRS, France \\
}

\maketitle
\pagestyle{plain}

\begin{abstract}
In two-player finite-state stochastic games of partial observation on graphs,  
in every state of the graph, the players simultaneously choose an action, and 
their joint actions determine a probability distribution over the successor 
states. 
The game is played for infinitely many rounds and thus the players construct an 
infinite path in the graph. We consider reachability objectives where the first player tries to 
ensure a target state to be visited almost-surely (i.e., with probability~$1$)
or positively (i.e., with positive probability), no matter the strategy of the second
player.

We classify such games according to the information and to the power 
of randomization available to the players. On the basis of information,
the game can be \emph{one-sided} with either $(a)$ player~$1$, or $(b)$ player~$2$ 
having partial observation (and the other player has perfect observation), 
or \emph{two-sided} with $(c)$ both players having partial observation.
On the basis of randomization, $(a)$ the players may not be allowed to use randomization (pure strategies),
or $(b)$ they may choose a probability distribution over actions but the actual random 
choice is external and not visible to the player (actions invisible), or $(c)$ 
they may use full randomization. 

Our main results for pure strategies are as follows:
(1) For one-sided games with player~2 perfect observation we
show that (in contrast to full randomized strategies) \emph{belief-based} 
(subset-construction based) strategies are not sufficient, and we present an 
exponential upper bound on memory both for almost-sure and positive winning strategies; 
we show that the problem of deciding the existence of almost-sure and positive winning strategies 
for player~1 is EXPTIME-complete and present symbolic algorithms that avoid the explicit
exponential construction.
(2) For one-sided games with player~1 perfect observation 
we show that non-elementary memory is both necessary and sufficient for both almost-sure
and positive winning strategies.
(3) We show that for the general (two-sided) case finite-memory strategies are sufficient 
for both positive and almost-sure winning, and at least non-elementary memory is required.
We establish the equivalence of the almost-sure winning problems for pure strategies and for   
randomized strategies with actions invisible.
Our equivalence result exhibit serious flaws in previous results in the literature: we 
show a non-elementary memory lower bound for almost-sure winning whereas an exponential
upper bound was previously claimed. 
\end{abstract}

\section{Introduction}

\smallskip\noindent{\bf Games on graphs.}
Two-player games on graphs play a central role in several important 
problems in computer science, such as controller synthesis~\cite{PR89,RamadgeWonham87}, 
verification of open systems~\cite{AHK02}, realizability and compatibility 
checking~\cite{AbadiLamportWolper,Dill89book,InterfaceAutomata}, and many others.
Most results about two-player games on graphs make the
hypothesis of {\em perfect observation} (i.e., both players have perfect or complete
observation about the state of the game).  
This assumption is often not realistic in practice.
For example in the context of hybrid systems, the controller 
acquires information about the state of a plant using digital sensors 
with finite precision, which gives imperfect information about the state of the plant~\cite{DDR06,HK99}.
Similarly, in a concurrent system where the players represent individual processes, 
each process has only access to the public variables of the other processes, not to 
their private variables~\cite{Reif84,AHK02}.
Such problems are better modeled in the more general framework
of \emph{partial-observation} games~\cite{Reif79,Reif84,RP80,CDHR07,BGG09} and have been 
studied in the context of verification and synthesis~\cite{KV00a,DF08}
(also see~\cite{AMV07} for pushdown partial-observation games). 

\smallskip\noindent{\bf Partial-observation stochastic games and subclasses.}
In two-player partial-observation stochastic games on graphs with a finite
state space, in every round, both players independently and simultaneously 
choose actions which along with the current state give a probability distribution 
over the successor states in the game.
In a general setting, the players may not be able to distinguish certain states
which are observationally equivalent for them (e.g., if they differ only by the value of 
private variables). The state space is partitioned into \emph{observations} 
defined as equivalence classes and the players do not see the actual state of 
the game, but only an observation (which is typically different for the two players).
The model of partial-observation games we consider is the same as the model 
of stochastic games with signals~\cite{BGG09} and is a standard model in 
game theory~\cite{RSV03,SorinBook}. It subsumes other classical game models
such as concurrent games~\cite{Sha53,AHK07}, probabilistic automata~\cite{RabinProb63,Bukharaev,PAZBook}, 
and partial-observation Markov decision processes (POMDPs)~\cite{PT87} 
(see also the recent decidability and complexity results for probabilistic 
automata~\cite{BBG08,BBG09,BG05,CSV09a,CSV09,CSV10,GO10} and for POMDPs~\cite{CDH10a,BBG08,TBG09}).

The special case of \emph{perfect observation} for a player corresponds to 
every observation for this player being a singleton. 
%
%
Depending on which player has perfect observation, 
we consider the following \emph{one-sided} subclasses of the general two-sided 
partial-observation stochastic games: 
(1)~\emph{player~$1$ partial and player~$2$ perfect} where player~2 has perfect observation, and 
player~1 has partial observation; and
(2)~\emph{player~$1$ perfect and player~$2$ partial} where player~1 has perfect observation, and
player~2 has partial observation.
The case where the two players have perfect observation corresponds to the
well-known perfect-information (perfect-observation) stochastic 
games~\cite{Sha53,Con92,AHK07}.

Note that in a given game~$G$, if player~$1$ wins in the setting of 
player~$1$ partial and player~$2$ perfect, then player~$1$ wins in the 
game $G$ as well. Analogously, if player~$1$ cannot win in the setting of 
player~$1$ perfect and player~$2$ partial, then player~$1$ does not win 
in the game $G$ either. In this sense, the one-sided games are conservative
over- and under-approximations of two-sided games. 
In the context of applications in verification and synthesis, the conservative 
approximation is that the adversary is all powerful, and hence player~1 partial
and player~2 perfect games provide the important worst-case analysis of partial-observation 
games. 


\smallskip\noindent{\bf Objectives and qualitative problems.}
In this work we consider partial-observation stochastic games with 
\emph{reachability} objectives where the goal of player~1 is to reach a 
set of target states and the goal of player~2 is to prevent player~1 from 
reaching the target states.
The study of partial-observation games is considerably more complicated than 
games of perfect observation. 
For example, in contrast to perfect-observation games, 
strategies in partial-observation games require both randomization and
memory for reachability objectives; and the \emph{quantitative} problem of deciding whether there exists a  
strategy for player~1 to ensure that the target is reached with probability 
at least $\frac{1}{2}$ can be decided in NP $\cap$ coNP for perfect-observation 
stochastic games~\cite{Con92}, whereas the problem is undecidable even for 
partial-observation stochastic games with only one player~\cite{PAZBook}.
Since the quantitative problem is undecidable we consider the following 
\emph{qualitative} problems: the \emph{almost-sure} (resp. \emph{positive}) problem asks
whether there exists a strategy for player~1 to ensure that the target set is 
reached with probability~1 (resp. positive probability).

\smallskip\noindent{\bf Classes of strategies.}
In general, randomized strategies are necessary to win with probability~$1$ 
in a partial-observation game with reachability objective~\cite{CDHR07}. 
However, there exist two types of randomized strategies where either $(i)$ 
actions are visible, the player can observe the action he played~\cite{CDHR07,BGG09},
or $(ii)$ actions are invisible, the player may choose a probability distribution
over actions, but the source of randomization is external and the
actual choice of the action is invisible to the player~\cite{GS09}.
The second model is more general since the qualitative problems of randomized 
strategies with actions visible can be reduced in polynomial time to randomized strategies 
with actions invisible, by modeling the visibility of actions using the observations on states.

With actions visible, the almost-sure (resp. positive) problem was shown to be EXPTIME-complete (resp. PTIME-complete) for
one-sided games with player~1 partial and player~2 perfect~\cite{CDHR07}, and 2EXPTIME-complete (resp. EXPTIME-complete) 
in the two-sided case~\cite{BGG09}.
For the positive problem memoryless randomized strategies exist, and 
for the  almost-sure problem \emph{belief-based} strategies exist 
(strategies based on subset construction that consider the possible 
current states of the game). 

It was remarked (without any proof) in~\cite[p.4]{CDHR07} that these results easily 
extend to randomized strategies with actions invisible for one-sided games with player~1
partial and player~2 perfect. 
It was claimed in~\cite{GS09} (Theorems~1 \& 2) that the almost-sure problem 
is 2EXPTIME-complete for randomized strategies with actions invisible for two-sided games, 
and that belief-based strategies are sufficient for player~$1$.
Thus it is believed that the two qualitative problems with actions visible or 
actions invisible  are essentially equivalent.

In this paper, we consider the class of \emph{pure} strategies, which do not
use randomization at all. Pure strategies arise naturally in the implementation of 
controllers and processes that do not have access to any source of randomization.
Moreover we will establish deep connections between the qualitative problems for pure strategies 
and for randomized strategies with actions invisible, which on one hand 
exhibit major flaws in previous results of the literature (the remark without proof 
of~\cite{CDHR07} and the main results of~\cite{GS09}), and on the other hand show that 
the solution for almost-sure winning randomized strategies with actions invisible (which is the most
general case) can be surprisingly obtained by solving the problem for pure strategies.


\smallskip\noindent{\bf Contributions.} The contributions of the paper are summarized below.
\begin{enumerate}
\item \emph{Player~$1$ partial and player~$2$ perfect.} 
We show that both for almost-sure and positive winning, belief-based 
pure strategies are not sufficient. This implies that the classical 
approaches relying on the belief-based subset construction cannot work
for solving the qualitative problems for pure strategies. However, we present an optimal
exponential upper bound on the memory needed by pure strategies 
(the exponential lower bound follows from the special case of non-stochastic games~\cite{BD08}). 
By a reduction to a perfect-observation game of exponential size, we show 
that both the almost-sure and positive problems are EXPTIME-complete 
for one-sided games with perfect-observation for player~$2$.
In contrast to the previous proofs of EXPTIME upper bound that rely
either on subset constructions 
or enumeration of belief-based strategies, 
our correctness proof relies on a novel rank-based 
argument that works uniformly both for positive and almost-sure winning.
The structure of this construction also provides symbolic antichain-based
algorithms (see~\cite{DR10} for a survey of the antichain approach) for solving the 
qualitative problems that avoids the explicit exponential construction.
Thus for the important special case of player~1 partial and player~2 perfect
we establish optimal memory bound, complexity bound, and present 
symbolic algorithmic solutions for the qualitative problems. 

\item \emph{Player~$1$ perfect and player~$2$ partial.} 

\begin{enumerate}
\item We show a very surprising result that both for positive and almost-sure winning,
pure strategies for player~$1$ require memory of non-elementary size (i.e., a tower of
exponentials). This is in sharp contrast with $(i)$ the case of randomized strategies 
(with or without actions visible) where memoryless strategies are sufficient 
for positive winning, and with $(ii)$ the previous case where player~1 has partial observation 
and player~2 has perfect observation, where pure strategies for positive winning require only 
exponential memory.
Surprisingly and perhaps counter-intuitively when player~1 has more information
and player~2 has less information, the positive winning strategies for player~1 
require much more memory (non-elementary as compared to exponential).
With more information player~1 can win from more states, but the winning 
strategy is much harder to implement.

\item We present a non-elementary upper bound for the memory needed by 
pure strategies for positive winning.
We then show with an example that for almost-sure winning more memory may be required
as compared to positive winning.
Finally, we show how to combine pure strategies for positive winning in 
a recharging scheme to obtain a non-elementary upper bound for the memory required
by pure strategies for almost-sure winning. 
Thus we establish non-elementary complete bounds for pure strategies
both for positive and almost-sure winning.
\end{enumerate}

\item \emph{General (two-sided) case.} 
We show that in the general case finite memory strategies are 
sufficient both for positive and almost-sure winning. 
The result is obtained essentially 
by a simple generalization of K\"onig's Lemma~\cite{Konig36}.
The non-elementary lower bound for memory follows from the special case when 
player~1 has perfect observation and player~2 has partial observation.

\item \emph{Randomized strategies with actions invisible.}
For randomized strategies with actions invisible we present two reductions
to establish connections with pure strategies. 
First, we show that the almost-sure problem for randomized strategies
with actions invisible can be reduced in polynomial time to the almost-sure problem
for pure strategies. The reduction requires to first establish that finite-memory 
randomized strategies are sufficient in two-sided games.
Second, we show that the problem of almost-sure winning with pure strategies 
can be reduced in polynomial time to the problem of randomized strategies with 
actions invisible. For this reduction it is crucial that the actions are not
visible. 

Our reductions have deep consequences. They unexpectedly imply that the problems
of almost-sure winning with \emph{pure} strategies or \emph{randomized} strategies with actions
invisible are polynomial-time \emph{equivalent}. Moreover, it follows that 
even in one-sided games with player~1 partial and player~2 perfect, 
belief-based randomized strategies (with actions invisible) 
are not sufficient for almost-sure winning.
This shows that the remark (without proof) of~\cite{CDHR07} that the results (such as
existence of belief-based strategies) of randomized strategies with actions visible 
carry over to actions invisible is an oversight.
However from our first reduction and our results for pure strategies it follows
that there is an exponential upper bound on memory and the problem is 
EXPTIME-complete for one-sided games with player~1 partial and player~2 perfect.
More importantly, our results exhibit a serious flaw in the main result of~\cite{GS09}
which showed that belief-based randomized strategies with 
actions invisible are sufficient for almost-sure winning in two-sided games, and concluded  
that enumerating over such strategies yields a 2EXPTIME algorithm for the problem. 
Our second reduction and lower bound for pure strategies show that the result is incorrect,
and that the exponential (belief-based) upper bound is far off.
Instead, 
the lower bound on memory for almost-sure winning 
with randomized strategies and actions invisible is non-elementary. 
Thus, contrary to the general belief, there is a sharp contrast for randomized 
strategies with or without actions visible: if actions are visible, then exponential 
memory is sufficient for almost-sure winning while if actions are not visible, 
then memory of non-elementary size is necessary in general.
\end{enumerate}

The memory requirements are summarized in Table~\ref{tab:memoryreq} 
and the results of this paper are shown in bold font. 
We explain how the other results of the table follow from results of the literature. 
For randomized strategies (with or without actions visible), if a positive winning 
strategy exists, then a memoryless strategy that plays all actions uniformly at random
is also positive winning. 
Thus the memoryless result for positive winning strategies follows for all cases of
randomized strategies.
The belief-based bound for memory of almost-sure winning randomized strategies
with actions visible follows from~\cite{CDHR07,BGG09}. 
The memoryless strategies results for almost-sure winning for one-sided games 
with player~1 perfect and player~2 partial are obtained as follows: 
when actions are visible, then belief-based strategies coincide
with memoryless strategies as player~1 has perfect observation.
If player~1 has perfect observation, then for memoryless strategies whether actions are visible 
or not is irrelevant and thus the memoryless result also follows for randomized strategies
with actions invisible. 
Thus along with our results we obtain Table~\ref{tab:memoryreq}.



\begin{table}[h]
\begin{center}
\begin{scriptsize}
\begin{tabular}{|l|c|c|c|c|c|c|}
\cline{2-7}

\multicolumn{1}{l}{}              & \multicolumn{2}{|c|}{one-sided}          & \multicolumn{2}{|c|}{one-sided}          & \multicolumn{2}{|c|}{\multirow{2}{*}{two-sided}} \\
\multicolumn{1}{l}{}              & \multicolumn{2}{|c|}{player~$2$ perfect} & \multicolumn{2}{|c|}{player~$1$ perfect} & \multicolumn{2}{|c|}{}   \\
\cline{2-7}
\multicolumn{1}{l|}{{\small \strut}} & Positive     & Almost-sure               &  Positive     & Almost-sure              & Positive     & Almost-sure     \\
\hline
Randomized {\small \strut}           & Memoryless   & Exponential               &  Memoryless   & Memoryless               & Memoryless   & Exponential     \\
(actions visible) {\small \strut}    &              & (belief-based)            &               &                          &              & (belief-based)  \\
\hline
Randomized {\small \strut}           & Memoryless   & {\bf Exponential}         &  Memoryless   & Memoryless               & Memoryless   & {\bf Non-elem.} \\
(actions invisible) {\small \strut}  &              & {\bf (more than }         &               &                          &              & {\bf low. bound}  \\
 \cdashline{7-7}                                    &              & {\bf belief)}             &               &                          &              & {\bf Finite }      \\
                                     &              &                           &               &                          &              & {\bf upp. bound}      \\
\hline
Pure {\small \strut}                 & {\bf Exponential} & {\bf Exponential}    & {\bf Non-elem.} & {\bf Non-elem.}        & {\bf Non-elem.}  & {\bf Non-elem.} \\
                                     & {\bf (more than } & {\bf (more than}     & {\bf complete}  & {\bf complete}         & {\bf low. bound} & {\bf low. bound}  \\
\cdashline{6-7}
                                     & {\bf belief)}     & {\bf belief)}        &                 &                        & {\bf Finite } &  {\bf Finite }  \\
                                     &                   &                      &                 &                        & {\bf upp. bound} &  {\bf upp. bound}  \\
\hline
\end{tabular}
\end{scriptsize}
\end{center}
\caption{Memory requirement for player~$1$ and reachability objective.}\label{tab:memoryreq}
\end{table}


\section{Definitions}\label{sec:definitions}


A \emph{probability distribution} on a finite set $S$ is a function
$\kappa: S \to [0,1]$ such that $\sum_{s \in S} \kappa(s) = 1$. 
The \emph{support} of $\kappa$ is the set $\Supp(\kappa) = \{s \in S \mid \kappa(s) > 0\}$.
We denote by $\dist(S)$ the set of probability distributions on $S$.
Given $s \in S$, the \emph{Dirac distribution} on $s$ assigns probability~$1$
to $s$.

\smallskip\noindent{\em Games.}
Given finite alphabets $A_i$ of actions for player~$i$ ($i=1,2$),
a \emph{stochastic game} on $A_1, A_2$ is a tuple $G=\tuple{Q, q_0, \trans}$     
where $Q$ is a finite set of states, 
$q_0 \in Q$ is the initial state, and
$\trans: Q \times A_1 \times A_2 \to \dist(Q)$ is a 
probabilistic transition function that, given a current state $q$ and
actions~$a,b$ for the players gives the transition probability 
$\trans(q,a,b)(q')$ to the next state~$q'$. The game is called \emph{deterministic}
if $\trans(q,a,b)$ is a  Dirac distribution for all $(q,a,b) \in Q\times A_1 \times A_2$.
A state $q$ is \emph{absorbing} if $\trans(q,a,b)$ is the Dirac distribution on $q$
for all $(a,b) \in A_1 \times A_2$.
In some examples, we allow an initial distribution of states. This can be encoded
in our game model by a probabilistic transition from the initial state.

A \emph{player-$1$ state} is a state $q$ where $\trans(q,a,b) = \trans(q,a,b')$
for all $a \in A_1$ and all $b, b' \in A_2$. We use the notation $\trans(q,a,-)$.
\emph{Player-$2$ states} are defined analogously. In figures, we use boxes to emphasize 
that a state is a player-$2$ state, and we represent probabilistic branches
using diamonds (which are not real `states', e.g., as in \figurename~\ref{fig:GS09-wrong-positive}).

In a (two-sided) \emph{partial-observation}
game, the players have a partial or incomplete view of the states visited and 
of the actions played in the game. This view may be different for the two players 
and it is defined by equivalence relations $\approx_i$ on the states and on the actions.
For player~$i$, equivalent states (or actions) are indistinguishable. 
We denote by $\Obs_i \subseteq 2^Q$ ($i=1,2$) the equivalence classes of $\approx_i$
which define two partitions of the state space $Q$, and we call them \emph{observations} (for player~$i$).
These partitions uniquely define functions $\obs_i: Q \to \Obs_i$ 
($i=1,2$) such that $q \in \obs_i(q)$ for all $q \in Q$,
that map each state $q$ to its observation for player~$i$. 

In the case where all states and actions are equivalent (i.e., the relation 
$\approx_i$ is the set $(Q \times Q) \cup (A_1 \times A_1) \cup (A_2 \times A_2)$),
we say that player~$i$ is \emph{blind} and the actions are \emph{invisible}. 
In this case, we have $\Obs_i = \{Q\}$ because all states have the same observation.
Note that the case of perfect observation for player~$i$ corresponds to the case 
$\Obs_i = \{\{q_0\},\{q_1\}, \dots, \{q_n\}\}$ (given  $Q = \{q_0, q_1, \dots, q_n\}$),
and $a \approx_i b$ iff $a=b$, for all actions $a,b$.

For $s \subseteq Q$, $a \in A_1$, and $b \in A_2$, let  
$\Post_{a,b}(s) = \bigcup_{q \in s} \Supp(\trans(q,a,b))$ denote the set of possible successors
of $q$ given action $a$ and $b$,
and let $\Post_{a,-}(s) = \bigcup_{b \in A_2} \Post_{a,b}(s)$.

\medskip\noindent{\em Plays and observations.}
Initially, the game starts in the initial state $q_0$. 
In each round, player~$1$ chooses an action $a \in A_1$, 
player~$2$ (simultaneously and independently) chooses an action  $b \in A_2$, and the successor of the current
state~$q$ is chosen according to the probabilistic transition function $\trans(q,a,b)$.
A \emph{play} in $G$ is an infinite sequence 
$\rho=q_0 a_0 b_0 q_1 a_1 b_1 q_2 \ldots$ such that $q_0$ is the initial state and 
$\trans(q_j,a_j,b_j)(q_{j+1}) > 0$ for all $j \geq 0$
(the actions $a_j$'s and $b_j$'s are the actions \emph{associated} to the play). 
Its \emph{length} is $\abs{\rho} = \infty$. 
The length of a play prefix $\rho=q_0 a_0 b_0 q_1 \ldots q_k$ is 
$\abs{\rho} = k$, and its last element is $\Last(\rho) = q_k$. 
A state $q \in Q$ is \emph{reachable} if it occurs in some play.
We denote by $\Play(G)$ the set of plays in $G$, 
and by $\Pref(G)$ the set of corresponding finite prefixes.
The \emph{observation sequence} for player~$i$ ($i=1,2$) 
of a play (prefix) $\rho$  is the unique (in)finite sequence 
$\obs_i(\rho) = \gamma_0 \gamma_1 \ldots$ 
such that $q_j \in \gamma_j \in \Obs_i$  
for all $0  \leq j \leq \abs{\rho}$. 

The games with \emph{one-sided partial-observation} are the special case 
where either $\approx_1$ is equality and hence $\Obs_1 = \{ \{q\} \mid q \in Q \}$ (player~1 has 
complete observation) or $\approx_2$ is equality and hence $\Obs_2 = \{ \{q\} \mid q \in Q \}$ (player~2 has 
complete observation). The games with \emph{perfect observation} are the special 
cases where $\approx_1$ and $\approx_2$ are equality, i.e., every state and action 
is visible to both players.

\smallskip\noindent{\em Strategies.}
A \emph{pure strategy} in $G$ for player~$1$ is a function $\straa:\Pref(G) \to A_1$. 
A \emph{randomized strategy} in $G$ for player~$1$ is a function $\straa:\Pref(G) \to \dist(A_1)$. 
A (pure or randomized) strategy $\straa$ for player~$1$ is 
\emph{observation-based} if for all prefixes $\rho = q_0 a_0 b_0 q_1 \ldots$ and
$\rho' = q'_0 a'_0 b'_0 q'_1 \ldots$, 
if $a_j \approx_1 a'_j$ and $b_j \approx_1 b'_j$ for all $j \geq 0$, and
$\obs_1(\rho)=\obs_1(\rho')$, then $\straa(\rho)=\straa(\rho')$. 
It is assumed that strategies are observation-based in partial-observation games.
If for all actions $a$ and $b$ we have $a \approx_1 b$ and $a \approx_2 b$ iff $a=b$ 
(all actions are distinguishable), 
then the strategy is \emph{action visible}, and if for all actions $a$ and $b$ 
we have $a \approx_1 b$ and $a \approx_2 b$ (all actions are indistinguishable), 
then the strategy is \emph{action invisible}.
We say that a play (prefix) $\rho=q_0 a_0 b_0 q_1 \ldots$ is \emph{compatible} 
with a pure (resp., randomized) strategy~$\straa$ 
if the associated action of player~$1$ in step $j$ is $a_j = \straa(q_0 a_0 b_0 \ldots q_{j-1})$
(resp., $a_j \in \Supp(\straa(q_0 a_0 b_0 \ldots q_{j-1}))$)
for all $0  \leq j \leq \abs{\rho}$.

We omit analogous definitions of strategies for player~$2$.
We denote by $\Straa_G$, $\Straa_G^O$, $\Straa_G^P$, $\Strab_G$, $\Strab_G^O$, and $\Strab_G^P$ 
the set of all player-$1$ strategies, the set of all observation-based player-$1$ strategies, 
the set of all pure player-$1$ strategies, the set of all player-$2$ strategies in $G$, the set of all 
observation-based player-$2$ strategies, and the set of all pure player-$2$ strategies, respectively.

\paragraph{Remarks.}
\begin{enumerate}
\item The model of games with partial observation on both actions and states
can be encoded in a model of games with actions invisible and observations 
on states only: when actions are invisible, we can use the state space to keep 
track of the last action played, and reveal information about the last action 
played using observations on the states.
Therefore, in the sequel we assume that the actions are invisible to the players
with partial observation. A play is then viewed as a sequence of states only,
and the definition of strategies is updated accordingly.
Note that a player with perfect observation has actions and states 
visible (and the equivalence relation $\approx_i$ is equality).



\item The important special case of partial-observation Markov decision processes (POMDP)
corresponds to the case where either all states in the game are player-$1$ states (player-1 POMDP)
or all states are player-$2$ states (player-2 POMDP). 
For POMDP it is known that randomization is not necessary, and 
pure strategies are as powerful as randomized strategies~\cite{CDGH10}.
%
\end{enumerate}

\smallskip\noindent{\em Finite-memory strategies.}
A player-1 strategy uses \emph{finite-memory} if it can be encoded
by a deterministic transducer $\tuple{\Mem, m_0, \alpha_u, \alpha_n}$
where $\Mem$ is a finite set (the memory of the strategy), 
$m_0 \in \Mem$ is the initial memory value,
$\alpha_u: \Mem \times \Obs_1  \to \Mem$ is an update function, and 
$\alpha_n: \Mem \times \Obs_1 \to \dist(A_1)$ is a next-move function. 
The \emph{size} of the strategy is the number $\abs{\Mem}$ of memory values.
If the current observation is $o$, and the current memory value is $m$,
then the strategy chooses the next action according to 
the probability distribution $\alpha_n(m,o)$,
and the memory is updated to $\alpha_u(m,o)$. 
Formally, $\tuple{\Mem, m_0, \alpha_u, \alpha_n}$
defines the strategy $\straa$ such that $\straa(\rho\cdot q) = \alpha_n(\hat{\alpha}_u(m_0, \obs_1(\rho)), \obs_1(q))$
for all $\rho \in Q^*$ and $q \in Q$, where $\hat{\alpha}_u$ extends $\alpha_u$ to sequences
of observations as expected. 
This definition extends to infinite-memory strategies by dropping the 
assumption that the set $\Mem$ is finite. 
A strategy is \emph{memoryless} if $\abs{\Mem} = 1$.
For a strategy~$\straa$, we denote by $G_{\straa}$ the player-2 POMDP 
obtained as the synchronous product of~$G$ with the transducer defining $\straa$.

\smallskip\noindent{\em Objectives and winning modes.} 
An \emph{objective} (for player~$1$) in $G$ is a set $\phi \subseteq \Play(G)$ of plays.
A play $\rho \in \Play(G)$ \emph{satisfies} the objective $\phi$, denoted $\rho \models \phi$, if 
$\rho \in \phi$.
Objectives are generally Borel measurable: 
a Borel objective is a Borel set in the Cantor topology~\cite{Kechris}.
Given strategies $\straa$ and $\strab$ for the two players, 
the probabilities of a measurable objective~$\phi$ is uniquely defined~\cite{Var85}. 
We denote by $\Prb_{q_0}^{\straa,\strab}(\phi)$ 
the probability that $\phi$ is satisfied by the play obtained from the starting 
state $q_0$ when the strategies $\straa$ and $\strab$ are used.

We specifically consider the following objectives. Given a set $\target \subseteq Q$ of target states,
the \emph{reachability objective} requires that the play visit the set $\target$:
$\Reach(\target)= \{q_0 a_0 b_0 q_1 \ldots \in \Play(G) \mid \exists i \geq 0: q_i \in \target\}$,
and the \emph{B\"uchi objective} requires that the play visit the set $\target$ infinitely often, 
$\Buchi(\target)= \{q_0 a_0 b_0 q_1 \ldots \in \Play(G) \mid \forall i \geq 0 \cdot \exists j \geq i: q_j \in \target\}$.
Our solution for reachability objectives will also use the dual notion of
\emph{safety objectives} that require the play to stay within the set $\target$: 
$\Safe(\target)= \{q_0 a_0 b_0 q_1 \ldots \in \Play(G) \mid \forall i \geq 0: q_i \in \target\}$.
In figures, the target states in $\target$ are double-lined and labeled by $\smiley$.

Given a game structure $G$ and a state $q$, an observation-based strategy $\straa$ for player~$1$ is 
\emph{almost-sure winning}  
(resp. \emph{positive winning}) for the objective $\phi$ from $q$ if 
for all observation-based randomized strategies $\strab$ for 
player~$2$, we have $\Prb_{q}^{\straa,\strab}(\phi)=1$ 
(resp.  $\Prb_{q}^{\straa,\strab}(\phi)>0$).
The strategy~$\straa$ is \emph{sure winning} if all plays compatible with $\straa$ satisfy $\phi$.
We also say that the state~$q$ is almost-sure (or positive, or sure) winning for player~$1$.



\smallskip\noindent{\em Positive and almost-sure winning problems.} 
We are interested in the problems of deciding, given a game structure $G$, a state $q$, and an objective $\phi$, 
whether there exists a $\{$pure, randomized$\}$ strategy
which is $\{$almost-sure, positive$\}$ winning from $q$ for the objective $\phi$.
For safety objectives almost-sure winning coincides with sure winning, 
however for reachability objectives they are different.
The sure winning problem for the objectives we consider has been studied 
in~\cite{Reif79,CDHR07,CD10b}.
The almost-sure winning problem for B\"uchi objectives can be easily reduced
to the almost-sure winning problem for reachability objectives~\cite{BBG08}, and 
the reduction is as follows: given a two-sided stochastic game with B\"uchi objective 
$\Buchi(\target)$, we add an absorbing state $q_T$, make $q_T$ the target 
state for the reachability objective, and from every state  $q \in \target$ we add 
positive probability transitions to $q_T$ (details and correctness proof follow
from~\cite[Lemma~13]{BBG08}).
The positive winning problem for B\"uchi objectives is undecidable even for 
POMDPs~\cite{BBG08}. 
Hence in this paper we only focus on reachability objectives.
In all our analysis, the counter strategies of player~2 can be restricted 
to pure strategies, because once a strategy for player~1 is fixed, then 
we obtain a POMDP for player~2 in which pure strategies are as powerful as
randomized strategies~\cite{CDGH10}.





\begin{figure}[!tb]
\hrule
\begin{center}
\def\fsize{\normalsize}

\begin{picture}(75,60)(0,0)

{\fsize


\node[Nmarks=i, Nmr=0](q0)(10,28.5){$q_0$}

\node[Nmarks=n](q1)(30,40){$q_1$}
\node[Nmarks=n](q2)(30,17){$q_2$}

\rpnode[Nmarks=n](r1)(50,40)(4,3.5){}
\rpnode[Nmarks=n](r2)(50,17)(4,3.5){}

\node[Nmarks=r](qX)(70,28.5){\smiley}

\drawedge[ELpos=45, ELside=l, ELdist=.5](q0,q1){$-,a$}
\drawedge[ELpos=45, ELside=r, ELdist=.5](q0,q2){$-,b$}

\drawloop[ELpos=50, ELside=l, ELdist=1, ELside=l,loopCW=y, loopdiam=6, loopangle=90](q1){$b,-$}
\drawloop[ELpos=50, ELside=l, ELdist=1, ELside=l,loopCW=y, loopdiam=6, loopangle=90](q2){$a,-$}
\drawedge[ELpos=50, ELside=l, ELdist=1, curvedepth=0](q1,r1){$a,-$}
\drawedge[ELpos=50, ELside=l, ELdist=1, curvedepth=0](q2,r2){$b,-$}

\drawedge[ELpos=18, ELside=l, ELdist=.5, curvedepth=3, sxo=2, syo=1.5, exo=-2](r1,qX){\sfrac{1}{2}}
\drawedge[ELpos=30, ELside=l, ELdist=.5](r2,qX){\sfrac{1}{2}}
\drawbpedge[ELpos=28, ELside=r, ELdist=1, eyo=-1](r1,330,20,q1,320,18){\sfrac{1}{2}}
\drawbpedge[ELpos=30, ELside=r, ELdist=1, eyo=-1](r2,330,20,q2,320,18){\sfrac{1}{2}}

\drawloop[ELpos=50, ELside=l, ELdist=1, ELside=l,loopCW=y, loopdiam=5, loopangle=90](qX){}



}
\end{picture}
 
\end{center}
\hrule
\caption{{\bf Belief-only is not enough for positive (as well as almost-sure) reachability.}
A one-sided reachability game with reachability objective in which 
player~$1$ is blind and player~$2$ has perfect observation. 
If we consider pure strategies, 
then player~$1$ has a positive (as well as almost-sure) winning strategy, 
but there is no belief-based memoryless positive winning strategy. 
\label{fig:GS09-wrong-positive}}

\end{figure}
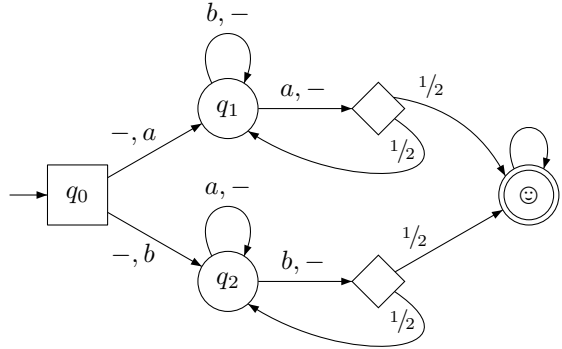

\section{One-sided Games: Player~$1$ Partial and Player~$2$ Perfect}\label{sec:partial-perfect}


In Sections~\ref{sec:partial-perfect} and~\ref{sec:perfect-partial},
we consider one-sided games with partial observation:
one player has perfect observation, and the other player
has partial observation. The player with perfect observation sees the
states visited and the actions played in the game. We present the results
for positive and almost-sure winning for reachability objectives
along with examples that illustrate key elements of the problem such 
as the memory required for winning strategies.

Note that the case of player~$1$ partial and player~$2$ perfect is important
in the context of controller synthesis as it is a conservative approximation 
of two-sided games for player~$1$ (if player~$1$
wins in the one-sided game, then he also wins in the two-sided game).
In the following example we show that for pure strategies \emph{belief-based} 
strategies are not sufficient for positive as well as almost-sure winning.
A strategy is belief-based if its memory relies only on the subset construction,
i.e., the strategy plays only depending on the set of possible current states of 
the game which is called \emph{belief}.


\begin{example}\label{ex:one}
{\bf Belief-only is not enough for positive (as well as almost-sure) reachability.}
Consider the game in \figurename~\ref{fig:GS09-wrong-positive} where player~$1$
is blind (all states have the same observation except the target state, and actions are invisible)
and player~$2$ has perfect observation. Initially, player~$2$ chooses the state
$q_1$ or $q_2$ (which player~$1$ does not see). The belief of player~$1$ is thus 
the set $\{q_1,q_2\}$ (see \figurename~\ref{fig:GS09-wrong-positive-subset-construction}).
We claim that the belief is not a sufficient information to win with a pure strategy for player~$1$
because the belief-based subset construction in \figurename~\ref{fig:GS09-wrong-positive-subset-construction}
suggests that playing always the same action (say $a$) when the belief is $\{q_1,q_2\}$
is an almost-sure winning strategy. However, in the original game this is not even
a positive winning strategy (the counter strategy of player~$2$ is to choose $q_2$ initially).
A winning strategy for player~$1$ is to alternate between $a$ and $b$ when the 
belief is $\{q_1,q_2\}$, which requires to remember more than the belief set.
\ee
\end{example}

We present reductions of the almost-sure and positive winning problem
for reachability objective to the problem of sure-winning in a game of perfect
observation with B\"uchi objective, and reachability objective respectively.
The two reductions are based on the same construction of a game where the state
space $L = \{(s,o) \mid o \subseteq s \subseteq Q \}$ contains the subset 
construction~$s$ enriched with \emph{obligation sets} $o \subseteq s$ which ensure that 
from all states in $s$, the target set $\target$ is reached with positive probability.


\begin{figure}[!tb]
\hrule
\begin{center}
\def\fsize{\normalsize}

\begin{picture}(75,17)(0,0)

{\fsize


\node[Nmarks=i, Nmr=0](q0)(10,7){$q_0$}
\node[Nmarks=n, Nw=10,Nh=10,Nmr=5](q12)(31,7){{\small $q_1,q_2$}}
\rpnode[Nmarks=n](r12)(50,7)(4,4){}
\node[Nmarks=r](qX)(70,7){\smiley}

\drawedge[ELpos=45, ELside=l, ELdist=1](q0,q12){$-,a$}
\drawedge[ELpos=45, ELside=r, ELdist=1](q0,q12){$-,b$}

\drawedge[ELpos=52, ELside=l, ELdist=1, curvedepth=0](q12,r12){$a,-$}
\drawedge[ELpos=52, ELside=r, ELdist=1, curvedepth=0](q12,r12){$b,-$}

\drawedge[ELpos=18, ELside=l, ELdist=.5, curvedepth=2, sxo=2, syo=1.5, exo=-2](r12,qX){\sfrac{1}{2}}
\drawbpedge[ELpos=30, ELside=r, ELdist=1, eyo=-1](r12,330,20,q12,320,18){\sfrac{1}{2}}

\drawloop[ELpos=50, ELside=l, ELdist=1, ELside=l,loopCW=y, loopdiam=5, loopangle=90](qX){}



}
\end{picture}
 
\end{center}
\hrule
\caption{The belief-based subset construction for the reachability game of \figurename~\ref{fig:GS09-wrong-positive}.
Player~$1$ has a pure strategy for positive (as well as almost-sure) winning 
in the subset construction. However, belief-based memoryless pure strategies are
not sufficient in the original game.
\label{fig:GS09-wrong-positive-subset-construction}}

\end{figure}
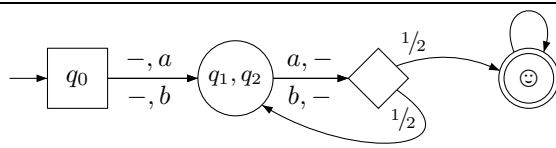


\begin{lemma}\label{lemm:complexity-one-sided-player-one}
Given a one-sided partial-observation stochastic game $G$ with 
player~1 partial and player~2 perfect with a reachability objective for player~1, 
we can construct in time exponential in the size of the game and polynomial in the size of action sets 
a perfect-information deterministic game $H$ with a B\"uchi objective (resp. reachability objective) 
such that player~1 has a pure almost-sure (resp. positive) winning strategy in $G$ iff 
player~1 has a  sure-winning  strategy in $H$.
\end{lemma}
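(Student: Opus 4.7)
The plan is to construct the product game $H$ announced in the excerpt, with state space $L=\{(s,o)\mid o\subseteq s\subseteq Q\}$: here $s$ is the classical belief (set of states of $G$ consistent with player~1's observation history) and the \emph{obligation set} $o\subseteq s$ records those states of the belief from which a positive-probability witness path to $\target$ has not yet been produced in the current ``phase''. The initial state of $H$ is $(\{q_0\},\{q_0\})$. A round in $H$ proceeds in two stages: player~1 picks an action $a\in A_1$; then player~2 (who in $G$ has perfect observation, hence in $H$ is allowed to resolve the observation delivered to player~1) picks $\gamma\in\Obs_1$ with $\Post_{a,-}(s)\cap\gamma\neq\emptyset$, giving the new belief $s'=\Post_{a,-}(s)\cap\gamma$ and the new obligation $o'=\bigcup\{\Post_{a,-}(\{q\})\cap\gamma \mid q\in o,\ \Post_{a,-}(\{q\})\cap\gamma\cap\target=\emptyset\}$. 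Whenever this update would make $o'$ empty, we declare the transition \emph{accepting} and reset the obligation to the new belief $s'$. The game $H$ has perfect observation for both players, state space of size $O(3^{|Q|})$ and per-transition cost polynomial in $|A_1|,|A_2|,|Q|$, giving the announced complexity.

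For positive winning in $G$ the objective in $H$ will be reachability of a state in which an accepting reset has already fired; for almost-sure winning it will be the B\"uchi objective requiring infinitely many such resets. In both cases I would recall from the excerpt that once a player-1 strategy is fixed the remainder is a player-2 POMDP in which pure strategies are as powerful as randomized ones, so it suffices to quantify over pure player-2 strategies when arguing about winning.

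The correctness proof splits into two directions. From a sure-winning strategy in $H$ to a winning pure strategy $\sigma$ in $G$: the $H$-strategy only depends on beliefs and obligations, each a function of player~1's observation history, so it immediately induces an observation-based pure strategy in $G$. A single accepting reset along a $\sigma$-compatible play certifies by construction of the obligation update that, from every state of the initial belief, some finite $\sigma$-compatible path enters $\target$; since player~2's strategy is also fixed, this gives a positive product of transition probabilities, hence positive winning. For almost-sure winning, infinitely many resets yield infinitely many such bounded witnessing windows, each contributing an independent positive lower bound on the chance of hitting $\target$, and a Borel--Cantelli style argument in the Markov chain $G_{\sigma}$ yields reachability with probability~$1$. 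The converse direction exploits that player~2 in $G$, having perfect observation, can faithfully emulate in $H$ any refinement of the belief he wishes; a pure winning strategy in $G$ therefore lifts to a sure-winning strategy in $H$ by recording, along any play of $H$, the corresponding branch of the $G$-tree and showing that the required reset eventually (respectively infinitely often) occurs.

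The main obstacle I expect is the almost-sure direction: the obligation reset only certifies the \emph{existence}, from every state of the current belief, of a finite positive-probability path to $\target$, not a uniform positive lower bound nor independence across phases. Extracting a uniform $p>0$ that survives conditioning on ``not yet reached'' is the delicate step, and this is precisely where the rank-based argument announced in the introduction is needed. I would equip $L$ with a rank that strictly decreases along the player-1 moves of the sure-winning strategy between two consecutive resets, bounding the length of each phase by $|L|$ and thus yielding the uniform bound $p\geq p_{\min}^{|L|}$, where $p_{\min}$ is the smallest nonzero transition probability of $G$; the strong Markov property applied at the sequence of reset times then converts ``infinitely many resets'' into ``$\Reach(\target)$ almost surely''.
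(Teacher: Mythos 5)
There is a genuine gap, and it is concentrated in your obligation-update rule. In the paper's construction, player~$1$'s action in $H$ is a \emph{pair} $(a,u)$ where $u\subseteq Q$ is a witness set constrained by $\Supp(\trans(q,a,b))\cap u\neq\emptyset$ for \emph{all} $q\in o$ and \emph{all} $b\in A_2$, and the new obligation is $o'=(\Post_{a,-}(o)\cap\gamma\cap u)\setminus\target$. You drop $u$ entirely and instead discharge $q\in o$ when $\Post_{a,-}(\{q\})\cap\gamma\cap\target\neq\emptyset$, otherwise keeping \emph{all} of $q$'s successors. This breaks both directions. For soundness ($H\to G$): your discharge condition is existential over player-2 actions, since $\Post_{a,-}$ unions over all $b\in A_2$. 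Take a player-2 state $q$ with actions $b_1,b_2$ where $b_1$ leads deterministically to $\target$ and $b_2$ to a non-target sink; $q$ is discharged in one step in your $H$, so the reset fires and you declare positive winning, yet player~$2$ plays $b_2$ in $G$ and the target is reached with probability $0$. For completeness ($G\to H$): if $q$ moves with probability $\sfrac{1}{2}$ to $q_1$ (from which $\target$ is reachable) and $\sfrac{1}{2}$ to a dead non-target sink $q_2$, player~$1$ positively wins in $G$ via the $q_1$-branch, but your update forces $q_2$ into $o'$, from which it can never be discharged, so no reset ever fires in $H$. The witness set is exactly what fixes both: the universal constraint over $b$ makes the discharge robust against player~$2$, and the freedom to choose $u$ lets player~$1$ commit to only the favorable probabilistic branches, keeping unfavorable ones out of the obligation.

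A secondary consequence is that the part of the argument you correctly identify as delicate — extracting a uniform lower bound $p\geq r^{n}$ per phase for the almost-sure direction — cannot be carried out on your $H$: the paper's proof that each phase contributes probability at least $r^{n}$ relies on the owe set being precisely the set of states reachable along the \emph{witness-selected} branches, so that emptiness of $o$ within $n$ steps forces every such branch to hit $\target$. Without $u$ there is no such family of selected branches to assign probability to. Also, the rank construction you mention is used in the paper in the \emph{other} direction ($G\to H$, on the infinite-state MDP $G_\straa$) precisely to define the witness sets $u$ as rank-decreasing successors; the bound on phase length in the $H\to G$ direction comes more simply from memorylessness of the sure-winning B\"uchi strategy in the finite game $H$. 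Your overall architecture (belief plus obligation, B\"uchi for almost-sure, reachability for positive, geometric-decay argument across phases) matches the paper, but as stated the reduction is incorrect.
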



\begin{proof}
We present the construction and the proof in details for almost-sure reachability. 
The construction is the same for positive reachability, and the argument is described 
succinctly afterwards.

\paragraph{\bf Construction.} 
Given $G=\tuple{Q, q_0,\trans}$ over alphabets $A_1,A_2$ and
observation set $\Obs_1$ for player~$1$, with reachability objective $\Reach(\target)$,
we construct the following (deterministic)
game of perfect observation $H =\tuple{L, \l_0, \trans_H}$ over alphabets
$A'_1,A'_2$ with B\"uchi objective $\Buchi(\alpha)$ defined by $\alpha \subseteq L$ where:

\begin{itemize}
\item $L = \{(s,o) \mid o \subseteq s \subseteq Q \}$. Intuitively, 
$s$~is the belief of player~$1$ and~$o$~is a set of obligation states
that ``owe" a visit to $\target$ with positive probability;

\item $\l_0 = (\{q_0\},\{q_0\})$ if $q_0 \not\in \target$, and 
$\l_0 = (\emptyset,\emptyset)$ if $q_0 \in \target$;

\item $A'_1 = A_1 \times 2^Q$. In a pair $(a,u) \in A'_1$, we call $a$ the action,
and $u$ the witness set;

\item $A'_2 = \Obs_1$. In the game $H$, player~$2$ simulate player~$2$'s choice
in game $G$, as well as resolves the probabilistic choices. This amounts to choosing
a possible successor state, and revealing its observation;

\item $\alpha = \{(s,\emptyset) \in L\}$;

\item $\trans_H$ is defined as follows. First, the state $(\emptyset,\emptyset)$ 
is absorbing. Second, in every other state $(s,o) \in L$
the function $\trans_H$ ensures that $(i)$ player~$1$ chooses a pair $(a,u)$ such 
that $\Supp(\trans(q, a, b)) \cap u \neq \emptyset$ for all $q \in o$ and $b \in A_2$,
and $(ii)$ player~$2$ chooses an observation $\gamma \in \Obs_1$ such that 
$\Post_{a,-}(s) \cap \gamma \neq \emptyset$. If a player violates this, 
then a losing absorbing state is reached with probability~$1$. 
Assuming the above condition on $(a,u)$ and $\gamma$ is satisfied, 
define $\trans_H((s,o),(a,u),\gamma)$ as the Dirac distribution on the state 
$(s',o')$ such that:
\begin{itemize}
\item $s' = (\Post_{a,-}(s) \cap \gamma) \setminus \target$; 
\item $o' = s'$ if $o = \emptyset$; and $o' = (\Post_{a,-}(o) \cap \gamma \cap u) \setminus \target$ if $o \neq \emptyset$.
\end{itemize}
\end{itemize}

Note that for every reachable state $(s,o)$ in $H$, there exists a unique 
observation $\gamma \in \Obs_1$ such that $s \subseteq \gamma$ (which we denote
by $\obs_1(s)$). 

We show the following property of this construction. 
Player~$1$ has a pure observation-based
almost-sure winning strategy in~$G$ for the objective $\Reach(\target)$ 
if and only if player~$1$ has a sure winning strategy in~$H$ for 
the objective $\Buchi(\alpha)$.

\paragraph{Mapping of plays.} 
Given a play prefix $\rho_H = (s_0,o_0)(s_1,o_1) \dots (s_k,o_k)$ in $H$ with 
associated actions for player~$1$ of the form $(a_i, \cdot)$ in step $i$ ($0 \leq i < k$),
and a play prefix $\rho_G = q_0 q_1 \dots q_k$ in $G$ with associated 
actions $a'_i$ ($0 \leq i < k$) for player~$1$,  we say that $\rho_G$ is 
\emph{matching} $\rho_H$ if $a_i = a'_i$ for all $0 \leq i < k$, 
and $q_i \in \obs_1(s_i)$ for all $0 \leq i \leq k$.

By induction on the length of $\rho_H$, we show that $(i)$ for each $q_k \in s_k$
there exists a matching play $\rho_G$ (which visits no $\target$-state) 
such that $\Last(\rho_G) = q_k$, and $(ii)$ for all play prefixes $\rho_G$ 
matching $\rho_H$, if $\rho_G$ does not visit any $\target$-state, 
then $\Last(\rho_G) \in s_k$.

For $\abs{\rho_H} = 0$ (i.e., $\rho_H = (s_0,o_0)$ where $(s_0, o_0) = \l_0$) 
it is easy to see that $\rho_G = q_0$
is a matching play with $q_0 \not\in \target$ if and only if $s_0 = o_0 = \{q_0\}$. 
For the induction step, assume that we have constructed matching plays for 
all play prefixes of length~$k-1$, and let $\rho_H = (s_0,o_0)(s_1,o_1) \dots (s_k,o_k)$
be a play prefix of length~$k$ in $H$ with associated actions of the form $(a_i, \cdot)$ in step $i$ ($0 \leq i < k$).
To prove $(i)$, pick $q_k \in s_k$. By definition of $\trans_H$, we have $q_k \in \Post_{a_{k-1},-}(s_{k-1})$,
hence there exists $b \in A_2$ and $q_{k-1} \in s_{k-1}$ such that $q_k \in \Supp(\trans(q_{k-1}, a_{k-1}, b))$.
By induction hypothesis, there exists a play prefix $\rho_G$ in $G$ matching 
$(s_0,o_0) \dots (s_{k-1},o_{k-1})$ and with $\Last(\rho_G) = q_{k-1}$, which we can extend
to $\rho_G.q_{k}$ to obtain a play prefix matching $\rho_H$. To prove $(ii)$, it is easy
to see that every play prefix matching $\rho_H$ is an extension of play prefix 
matching $(s_0,o_0) \dots (s_{k-1},o_{k-1})$ with a non $\target$-state $q_k$ 
in $\gamma_k = \obs_1(s_k)$ and in $\Post_{a_{k-1},-}(s_{k-1})$, therefore 
$q_k \in (\Post_{a_{k-1},-}(s_{k-1}) \cap \gamma_k) \setminus \target = s_k$.

\paragraph{Mapping of strategies, from $G$ to $H$ (ranking argument).}
First, assume that player~$1$ has a pure observation-based almost-sure winning
strategy $\straa$ in~$G$ for the objective $\Reach(\target)$.
We construct an infinite-state MDP $G_{\straa} = \tuple{Q^+, \rho_0, \trans_{\straa}}$ where:

\begin{itemize}
\item $Q^+$ is the set of nonempty finite sequences of states;

\item $\rho_0 = q_0 \in Q$;

\item $\trans_{\straa}: Q^+ \times A_2 \to \dist(Q^+)$ is defined as follows: 
for each $\rho \in Q^+$ and $b \in A_2$, if $\Last(\rho) \not\in \target$ then
$\trans_{\straa}(\rho, b)$ assigns probability
$\trans(\Last(\rho), \straa(\rho), b)(q')$ to each $\rho' = \rho q' \in Q^+$,
and probability $0$ to all other $\rho' \in Q^+$; if $\Last(\rho) \in \target$,
then $\rho$ is an absorbing state;
\end{itemize}

We define a ranking of the reachable states of $G_{\straa}$. Assign rank $0$
to all $\rho \in Q^+$ such that $\Last(\rho) \in \target$. For $i=1,2,\dots$
assign rank $i$ to all non-ranked $\rho$ such that for all player~$2$ actions 
$b \in A_2$, there exists $\rho' \in \Supp(\trans_{\straa}(\rho,b))$ with a rank (and 
thus with a rank smaller than~$i$). We claim that all reachable states of $G_{\straa}$
get a rank. By contradiction, assume that a reachable state $\hat{\rho} = q_0 q_1 \dots q_k$ 
is not ranked (note that $q_i \not\in \target$ for each $0 \leq i \leq k$). 
Fix a strategy $\strab$ for player~$2$ as follows. Since $\hat{\rho}$ is
reachable in $G_{\straa}$, there exist actions $b_0, \dots, b_{k-1}$ such that $q_{i+1} \in \Supp(\trans_{\straa}(q_0 \dots q_i, b_i))$
for all $0 \leq i < k$. Then, define $\strab(q_0 \dots q_i) = b_i$. This ensures 
that $\Last(\hat{\rho})$ is reached with positive probability in $G$ under strategies $\straa$ and $\strab$.
From $\hat{\rho}$, the strategy $\strab$ continues playing as follows. If the current
state $\rho$ is not ranked (which is the case of $\hat{\rho}$), then choose an action~$b$ 
such that all states in $\Supp(\trans_{\straa}(\rho,b))$ are not ranked. The fact that $\rho$ is not ranked
ensures that such an action~$b$ exists. Now, under $\straa$ and $\strab$ all paths
from $\Last(\hat{\rho})$ in $G$ avoid $\target$-sates. Hence the set $\target$
is not reached almost-surely, in contradiction with the fact that $\straa$
is almost-sure winning. Hence all states in $G_{\straa}$ get a rank. We denote
by $\rank(\rho)$ the rank of a reachable state $\rho$ in $G_{\straa}$.

From the strategy $\straa$ and the ranking in $G_{\straa}$, we construct a strategy $\straa'$ in the game $H$
as follows. Given a play $\rho_H = (s_0,o_0)(s_1,o_1) \dots (s_k,o_k)$ in $H$ (with $s_k \neq \emptyset$),
define $\straa'(\rho_H) = (a, u)$ where $a = \straa(\rho_G)$ for a play prefix $\rho_G$ 
matching $\rho_H$ and $u = \{q \in \Supp(\trans(\Last(\rho_G), a, b)) \mid
b \in A_2, \rho_G \text{ is matching } \rho_H \text{ with } \Last(\rho_G) \in o_k
\text{ and } \rank(\rho_G.q) < \rank(\rho_G)\}$ is a witness set which selects
successor states of $o_k$ with decreased rank along each branch of the MDP $G_{\straa}$.

Note that all matching play prefixes $\rho_G$ have the same observation sequence. 
Therefore, the action $a = \straa(\rho_G)$ is unique and well-defined since
$\straa$ is an observation-based strategy. Note also that the pair $(a,u)$
is an allowed choice for player~$1$ by definition of the ranking, and that for each $q\in o_k$, all matching 
play prefixes $\rho_G$ with $\Last(\rho_G) = q$ have the same rank in $G_{\straa}$.
Therefore we abuse notation and write $\rank(q)$ for $\rank(\rho_G)$,
assuming that the set $o_k$ to which $q$ belongs is clear from the context.
Let $\maxrank(o_k) = \max_{q \in o_{k}} \rank(q)$. If $o_k \neq \emptyset$, then
$\maxrank(o_{k+1}) < \maxrank(o_k)$ since $o_{k+1} \subseteq u$ (by definition of $\trans_H$).

\paragraph{Correctness of the mapping.}
We show that $\straa'$ is sure winning for $\Buchi(\alpha)$ in~$H$.
Fix an arbitrary strategy $\strab'$ for player~$2$ in~$H$ and consider
an arbitrary play $\rho_H = (s_0,o_0)(s_1,o_1) \dots$ compatible 
with $\straa'$ and $\strab'$. By the properties of the witness set played
by $\straa'$, for each pair $(s_i,o_i)$ with $o_i \neq \emptyset$, an $\alpha$-pair
$(\cdot, \emptyset)$ is reached within at most $\maxrank(o_i)$ steps. 
And by the properties of the mapping of plays and strategies, if $o_i = \emptyset$
then $o_{i+1} = s_{i+1}$ contains only states from which $\straa$ is almost-sure
winning for $\Reach(\target)$ in~$G$ and therefore have a finite rank, showing
that $\maxrank(o_{i+1})$ is defined and finite. This shows that an $\alpha$-pair
is visited infinitely often in $\rho_H$ and $\straa'$ is sure winning for $\Buchi(\alpha)$.

\paragraph{Mapping of strategies, from $H$ to $G$.}
Given a strategy $\straa'$ in~$H$, we construct a pure observation-based 
strategy $\straa$ in~$G$.

We define $\straa(\rho_G)$ by induction on the length of $\rho_G$.
In fact, we need to define $\straa(\rho_G)$ only for play prefixes $\rho_G$
which are compatible with the choices of $\straa$ for play prefixes
of length smaller than $\abs{\rho_G}$ (the choice of $\straa$ for other 
play prefixes can be fixed arbitrarily). For all such $\rho_G$, our construction
is such that there exists a play prefix $\rho_H = \theta(\rho_G)$ compatible with 
$\straa'$ such that $\rho_G$ is matching $\rho_H$, and if $\straa(\rho_G) = a$
and $\straa'(\rho_H) = (a', \cdot)$, then $a=a'$ $(\star)$. 

We define $\straa$ and $\theta(\cdot)$ as follows.
For $\abs{\rho_G} = 0$ (i.e., $\rho_G = q_0$),    
let $\rho_H = \theta(\rho_G) = (s_0,o_0)$ where $s_0 = o_0 = \{q_0\}$ if $q_0 \not\in \target$, and 
$s_0 = o_0 = \emptyset$ if $q_0 \in \target$,
and let $\straa(\rho_G) = a$ if $\straa'(\rho_H) = (a, \cdot)$. Note that 
property $(\star)$ holds.
For the induction step, let $k\geq 1$ and assume that from every play prefix $\rho_G$
of length smaller than $k$, we have defined $\straa(\rho_G)$ and $\theta(\rho_G)$ satisfying $(\star)$.
Let $\rho_G = q_0 q_1 \dots q_k$ be a play prefix in~$G$ of length $k$. 
Let $\rho_H = \theta(q_0 q_1 \dots q_{k-1})$ and $\gamma_k = \obs_1(q_k)$,
and let $(s_k,o_k)$ be the (unique) successor state in the Dirac distribution
$\trans_H(\Last(\rho_H), \straa'(\rho_H), \gamma_k)$. Note that $q_k \in s_k$.
Define $\theta(\rho_G) = \rho_H.(s_k,o_k)$
and $\straa(\rho_G) = a$ if $\straa'(\rho_H.(s_k,o_k)) = (a, \cdot)$.
Therefore, the property $(\star)$ holds.

Note that the strategy $\straa$ is observation-based because 
if $\obs_1(\rho_G) = \obs_1(\rho'_G)$, then $\theta(\rho_G) = \theta(\rho'_G)$.

\paragraph{Correctness of the mapping.}
If player~$1$ has a sure winning strategy $\straa'$ in~$H$ for the objective $\Buchi(\alpha)$,
then we can assume that $\straa'$ is memoryless (since in perfect-observation deterministic games
with B\"uchi objectives memoryless strategies are sufficient for sure winning~\cite{EmersonJutla91,Thomas97}), 
and we show that the strategy $\straa$ 
defined above is almost-sure winning in~$G$ for the objective $\Reach(\target)$.

Since $\straa'$ is memoryless and sure winning for $\Buchi(\alpha)$, 
in every play compatible with $\straa'$ there are at most $n = \abs{L} \leq 3^{\abs{Q}}$ 
steps between two consecutive visits to an $\alpha$-state.

The properties of matching plays entail that if a play prefix $\rho_G$ compatible
with $\straa$ has no visit to $\target$-states, and $(s,o) = \Last(\theta(\rho_G))$, 
then $\Last(\rho_G) \in s$. Moreover if $s = o$, then under strategy $\straa$ 
for player~$1$ and arbitrary strategy $\strab$ for player~$2$, there is a way to 
fix the probabilistic choices such that all plays extension of $\rho_G$ visit
a $\target$-state. To see this, consider the probabilistic choices given at each 
step by the witness component $u$ of the action $(\cdot, u)$ played by $\straa'$. 
By the definition of the mapping of plays and of the transition function in~$H$, 
it can be shown that if $(s_i,o_i)(s_{i+1},o_{i+1}) \dots (s_k,o_k)$ is a play fragment
of $\theta(\rho_G)$ (hence compatible with $\straa'$) where $s_i = o_i$ and $o_j \neq \emptyset$ for all $i \leq j < k$, 
then the ``owe" set $o_k$ is the set of all states that can be reached in~$G$ from states $s_i$ 
along a path which is compatible with both the
action played by the strategy $\straa'$ (and $\straa$) and the 
probabilistic choices fixed by $\straa'$, and visits no $\target$-states.
Since the ``owe" set gets empty within at most $n$ steps regardless of the strategy 
of player~$2$, all paths compatible with the probabilistic choices must visit
an $\target$-state. This shows that under any player~$2$ strategy, within $n$
steps, a $\target$-state is visited with probability at least $r^n$ where $r>0$ 
is the smallest non-zero probability occurring in $G$. Therefore, the probability of not
having visited a $\target$-state after $z\cdot n$ steps is at most $(1-r^n)^z$
which vanishes for $z \to \infty$ since $r^n > 0$. Hence, against arbitrary strategy of player~$2$,
the strategy $\straa$ ensures the objective $\Reach(\target)$ with probability~$1$.

\paragraph{Argument for positive reachability.}
The proof for positive reachability follows the same line as for almost-sure
reachability, with the following differences. The construction of the game
of perfect information~$H$ is now interpreted as a reachability game with 
objective $\Reach(\alpha)$. The mapping of plays is the same as above.
In the mapping of strategies from $G$ to $H$, we use the same ranking 
construction, but we only claim that the initial state gets a rank. 
The argument is that if the initial state would get no rank, then player~$2$
would have a strategy to ensure that all paths avoid the target states, in contradiction
with the fact that player~$1$ has fixed a positive winning strategy.
The rest of the proof is analogous to the case of almost-sure reachability.
\qed
\end{proof}

It follows from the construction in the proof of Lemma~\ref{lemm:complexity-one-sided-player-one}
that pure strategies with exponential memory are sufficient for 
positive (as well as almost-sure) winning, and 
the exponential lower bound follows from the special case of non-stochastic games~\cite{BD08}.
Lemma~\ref{lemm:complexity-one-sided-player-one} also gives EXPTIME upper bound
for the problem since perfect-observation B\"uchi games can be solved in polynomial time~\cite{Thomas97}. 
The EXPTIME-hardness follows from the sure winning problem for non-stochastic games~\cite{Reif84}, 
where pure almost-sure (positive) winning strategies coincide with sure winning strategies.
We have the following theorem summarizing the results.

\begin{theorem}\label{theo:complexity-one-sided-player-one}
Given one-sided partial-observation stochastic games with player~1 partial
and player~2 perfect, the following assertions hold for
reachability objectives for player~1:
\begin{enumerate}
\item \emph{(Memory complexity).} Belief-based pure strategies are not sufficient
both for positive and almost-sure winning; exponential memory is necessary and
sufficient both for positive and almost-sure winning for pure strategies.

\item \emph{(Algorithm).} The problems of deciding the existence of a 
pure almost-sure and a pure positive winning strategy can be solved 
in time exponential in the state space of the game and polynomial in 
the size of the action sets.

\item \emph{(Complexity).} The problems of deciding the existence of a pure 
almost-sure and a pure positive winning strategy are EXPTIME-complete.
 
\end{enumerate}
\end{theorem}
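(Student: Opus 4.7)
The plan is to assemble the three claims directly from Lemma~\ref{lemm:complexity-one-sided-player-one} together with two external inputs: (i)~the EXPTIME-hardness and exponential-memory lower bound from non-stochastic partial-observation games of~\cite{Reif84,BD08}, and (ii)~the classical memoryless determinacy and polynomial-time solvability of perfect-observation B\"uchi/reachability games of~\cite{Thomas97,EmersonJutla91}. Since the reduction $G \rightsquigarrow H$ is already established, the proof is mostly a matter of ``reading off'' each statement from the reduction; there is no genuinely new technical obstacle.

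First, for the memory complexity claim in~(1), I would handle the two halves separately. For the \emph{non-sufficiency of belief-based strategies}, I would simply invoke Example~\ref{ex:one} together with \figurename~\ref{fig:GS09-wrong-positive} and~\ref{fig:GS09-wrong-positive-subset-construction}: there it is exhibited that from the same belief $\{q_1,q_2\}$ player~1 must play different actions (alternating $a$ and $b$) in order to win with positive probability (and hence almost-surely), so no strategy whose memory is a function of the current belief alone can succeed. For the \emph{exponential upper bound}, I would apply Lemma~\ref{lemm:complexity-one-sided-player-one} to obtain the perfect-information deterministic game $H$ of size exponential in $\abs{Q}$ and polynomial in $\abs{A_1},\abs{A_2}$; since memoryless strategies suffice for sure winning in perfect-observation B\"uchi and reachability games~\cite{EmersonJutla91,Thomas97}, a winning strategy in $H$ has size at most $\abs{L} \le 3^{\abs{Q}}$, and unrolling the mapping ``strategies from $H$ to $G$'' from the proof of the lemma produces a pure observation-based strategy in $G$ whose memory is the state space of $H$. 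The matching \emph{exponential lower bound} is inherited from the special case of non-stochastic one-sided partial-observation games~\cite{BD08}, which is itself a particular instance of our setting (take all transitions to be Dirac).

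Second, for the algorithmic claim in~(2), I would note that $H$ can be constructed explicitly in time $2^{O(\abs{Q})} \cdot \mathrm{poly}(\abs{A_1},\abs{A_2})$, and then solved by the standard polynomial-time fixpoint algorithms for perfect-information B\"uchi games (for almost-sure winning) and reachability games (for positive winning); the overall running time is therefore exponential in $\abs{Q}$ and polynomial in $\abs{A_1},\abs{A_2}$, as required.

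Third, for the complexity statement in~(3), the upper bound is immediate from~(2). For EXPTIME-hardness I would reduce from the sure-winning problem for two-player non-stochastic partial-observation reachability games (with player~1 having partial observation and player~2 having perfect observation), which is EXPTIME-hard by~\cite{Reif84}. The point is that in a \emph{non-stochastic} (deterministic) game, a pure strategy for player~1 is sure winning iff it is almost-sure winning iff it is positive winning for the reachability objective, because all compatible plays arise with probability~$1$ under any adversarial pure counter-strategy (and pure counter-strategies suffice for player~2, as recalled in the preliminaries). Thus the deciding-sure-winning instance reduces in logarithmic space to both the almost-sure and positive winning problems of our setting, completing the EXPTIME-completeness. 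The only subtlety worth double-checking is this last equivalence between sure, almost-sure and positive winning in the deterministic subcase, but it is routine since the induced Markov chain under fixed pure strategies of both players has a single deterministic trajectory.
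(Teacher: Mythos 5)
Your proposal is correct and follows essentially the same route as the paper: the upper bounds (memory and algorithmic) are read off from the reduction of Lemma~\ref{lemm:complexity-one-sided-player-one} combined with memoryless sure-winning and polynomial-time solvability of perfect-observation B\"uchi/reachability games, the non-sufficiency of belief-based strategies comes from Example~\ref{ex:one}, the memory lower bound from the non-stochastic case~\cite{BD08}, and EXPTIME-hardness from~\cite{Reif84} via the coincidence of sure, almost-sure and positive winning in deterministic games. Your explicit justification of that last coincidence (a single trajectory once both pure strategies are fixed) is exactly the detail the paper leaves implicit, and it is sound.
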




\paragraph{Symbolic algorithms.}
The exponential B\"uchi (or reachability) game constructed in the proof of 
Theorem~\ref{theo:complexity-one-sided-player-one} can be solved by computing
classical fixpoint formulas~\cite{EmersonJutla91}. However, it is not necessary to 
construct the exponential game structure explicitly. Instead, we can exploit
the structure induced by the pre-order $\preceq$ defined by 
$(s,o) \preceq (s',o')$ if $(i)$ $s\subseteq s'$, $(ii)$ $o\subseteq o'$,
and $(iii)$ $o=\emptyset$ iff $o' = \emptyset$. Intuitively, if a state $(s',o')$
is winning for player~$1$, then all states $(s,o) \preceq (s',o')$ are also winning
because they correspond to a better belief and a looser obligation. 
Hence all sets computed by the fixpoint algorithm are downward-closed
and thus they can be represented symbolically by the antichain of their maximal 
elements (see~\cite{CDHR07} for details related to antichain algorithms). 
This technique provides a symbolic algorithm without explicitly constructing the exponential game.

\section{One-sided Games: Player~$1$ Perfect and Player~$2$ Partial}\label{sec:perfect-partial}

Recall that we are interested in finding a pure winning strategy for
player~$1$. Therefore, when we construct counter-strategies for player~$2$, we always
assume that player~$1$ has already fixed a pure strategy. This is important
for the way the belief of player~$2$ is updated. Although player~$2$ does not
have perfect information about the actions played by player~$1$, the belief
of player~$2$ can be updated according to the precise actions of player~$1$ because 
the response and the counter-strategy of player~$2$ is designed after
player~$1$ has fixed a strategy.

\subsection{Lower bound on memory}

We present the following examples to illustrate two properties of the problem.

\begin{figure}[!tb]
\hrule
\begin{center}
\def\fsize{\normalsize}

\begin{picture}(90,60)(0,-1)

{\fsize


\node[Nmarks=i, Nmr=0](q0)(5,27){$q_0$}

\rpnode[Nmarks=n](r1)(25,37)(4,3.5){}
\rpnode[Nmarks=n](r2)(25,17)(4,3.5){}

\node[Nmarks=n](q1)(45,47){$q_1$}
\node[Nmarks=n](q2)(45,27){$q_2$}
\node[Nmarks=n](q3)(45,7){$q_3$}

\node[Nmarks=n, Nmr=0](q4)(65,37){$q_4$}
\node[Nmarks=n, Nmr=0](q5)(65,17){$q_5$}

\node[Nmarks=r, rdist=0.8](qX)(85,27){\smiley}

\drawedge[ELpos=49, ELside=l, ELdist=.5, curvedepth=0](q0,r1){$-,a$}
\drawedge[ELpos=50, ELside=r, ELdist=1, curvedepth=0](q0,r2){$-,b$}

\drawedge[ELpos=36, ELside=l, ELdist=1, curvedepth=0](r1,q1){\sfrac{1}{2}}
\drawedge[ELpos=36, ELside=r, ELdist=1, curvedepth=0](r1,q2){\sfrac{1}{2}}
\drawedge[ELpos=36, ELside=l, ELdist=1, curvedepth=0](r2,q2){\sfrac{1}{2}}
\drawedge[ELpos=36, ELside=r, ELdist=1, curvedepth=0](r2,q3){\sfrac{1}{2}}

\drawedge[ELpos=40, ELside=l, ELdist=-.5, curvedepth=0](q1,q4){$\begin{array}{cl}a\!&,- \\[-2pt] b\!&,-\end{array}$} 
\drawedge[ELpos=47, ELside=l, ELdist=0, curvedepth=0](q2,q4){$a,-$}
\drawedge[ELpos=48, ELside=r, ELdist=0.5, curvedepth=0](q2,q5){$b,-$}
\drawedge[ELpos=40, ELside=r, ELdist=-.5, curvedepth=0](q3,q5){$\begin{array}{cl}a\!&,- \\[-2pt] b\!&,-\end{array}$}

\drawedge[ELpos=45, ELside=l, ELdist=.5, curvedepth=0](q4,qX){$-,a$}
\drawedge[ELpos=45, ELside=r, ELdist=.5, curvedepth=0](q5,qX){$-,b$}


\node[Nframe=n, Nw=0,Nh=0](tmp1)(63,53){}
\node[Nframe=n, Nw=0,Nh=0](tmp2)(7,53){}
\drawline[AHnb=0, arcradius=2](65,41)(65,53)(60,53)
\drawedge[AHnb=0, ELpos=55, ELside=r, ELdist=1, curvedepth=0](tmp1,tmp2){$-,b$}
\drawline[AHnb=1, arcradius=2](10,53)(5,53)(5,31)

\node[Nframe=n, Nw=0,Nh=0](tmp1)(63,1){}
\node[Nframe=n, Nw=0,Nh=0](tmp2)(7,1){}
\drawline[AHnb=0, arcradius=2](65,13)(65,1)(60,1)
\drawedge[AHnb=0, ELpos=55, ELside=l, ELdist=1, curvedepth=0](tmp1,tmp2){$-,a$}
\drawline[AHnb=1, arcradius=2](10,1)(5,1)(5,23)


\drawloop[ELpos=50, ELside=l, ELdist=1, ELside=l,loopCW=y, loopdiam=7, loopangle=90](qX){}



}
\end{picture}
\end{center}
\hrule
\caption{{\bf Remembering the belief of player~$2$ is necessary.}
A one-sided reachability game where player~$1$ (round states) has perfect observation, 
player~$2$ (square states) is blind. Player~$1$ has a pure almost-sure winning strategy 
that depends on the belief of player~$2$ (in $q_2$), but no pure memoryless strategy is almost-sure winning.
\label{fig:belief-necessary}
}

\end{figure}
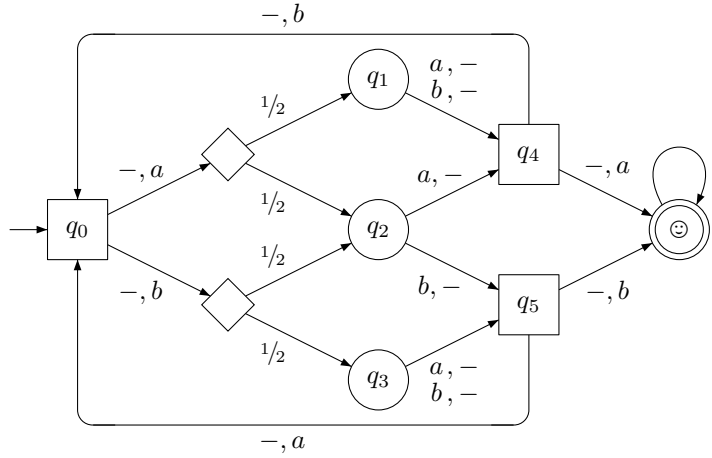

\begin{example}
{\bf Remembering the belief of player~$2$ is necessary.}
We present an example of a game where player~$1$ has perfect observation
but needs to remember the belief of player~$2$ to ensure positive or almost-sure reachability.
The game is shown in \figurename~\ref{fig:belief-necessary}. The target is $\target = \{q_{\smiley}\}$. Player~$2$ is blind.
If player~$2$ chooses $a$ in the initial state $q_0$, then his belief will be $\{q_1,q_2\}$,
and if he plays $b$, then his belief will be $\{q_2,q_3\}$. In $q_2$, the choice 
of player~$1$ depends on the belief of player~$2$. If the belief is $\{q_1,q_2\}$,
then playing $a$ in $q_2$ is not a good choice because the belief of player~$2$ would be
$\{q_4\}$ and player~$2$ could surely avoid $q_{\smiley}$ by further playing $b$. 
For symmetrical reasons, if the belief of player~$2$ is $\{q_2,q_3\}$ in $q_2$, then
playing $b$ is not a good choice for player~$1$. Therefore, there is no positively winning
\emph{memoryless} strategy for player~$1$. However, we show that there exists an almost-sure winning
\emph{belief-based} strategy for player~$1$ as follows: in $q_2$, play $b$ if the belief of player~$2$ is $\{q_1,q_2\}$,
and play $a$ if the belief of player~$2$ is $\{q_2,q_3\}$. Note that player~$1$
has perfect observation and thus can observe the actions of player~$2$.
This ensures the next belief of player~$2$ to be $\{q_3,q_4\}$ and therefore no matter the next action
of player~$2$, the state $q_{\smiley}$ is reached with probability $\frac{1}{2}$.
Repeating this strategy ensures to reach $q_{\smiley}$ with probability $1$.
\ee
\end{example}

\begin{figure}[!tb]
\hrule
\begin{center}
\def\fsize{\normalsize}

\begin{picture}(130,40)(0,0)

{\fsize


\rpnode[Nmarks=i, Nmr=0](qI)(10,20)(4,4){$q_I$}

\node[Nmarks=n](qup)(30,30){$L$}     
\node[Nmarks=n](qdown)(30,10){$R$}   

\node[Nmarks=n, dash={1.5}0, Nw=12, Nh=32, Nmr=3](box)(30,20){}

\node[Nmarks=n](qn)(50,20){$q_{n}$}
\node[Nmarks=n](qn-1)(70,20){$q_{n-1}$}
\node[Nframe=n](dummy)(90,20){$\ \ \dots$}
\node[Nmarks=n](q1)(100,20){$q_{1}$}
\node[Nmarks=r](q0)(120,20){$q_{0}$}  

\drawedge[ELpos=45, ELside=l, ELdist=.5](qI,qup){$\sfrac{1}{2}$}
\drawedge[ELpos=45, ELside=r, ELdist=.5](qI,qdown){$\sfrac{1}{2}$}

\drawedge[ELpos=45, ELside=l, ELdist=.5](qup,qn){$b,-$}
\drawedge[ELpos=45, ELside=r, ELdist=.5](qdown,qn){$b,-$}

\drawedge[ELpos=45, ELside=r, ELdist=.5, curvedepth=-10](qup,qI){$a,-$}
\drawedge[ELpos=45, ELside=l, ELdist=.5, curvedepth=10](qdown,qI){$a,-$}

\drawloop[ELpos=51, ELside=l, ELdist=0, ELside=l,loopCW=y, loopdiam=6, loopangle=90](qn){$\begin{array}{l}a,a\\b,b\end{array}$}
\drawedge[ELpos=50, ELside=l, ELdist=.5, curvedepth=0](qn,qn-1){$a,b$}
\drawedge[ELpos=50, ELside=r, ELdist=.5, curvedepth=0](qn,qn-1){$b,a$}
\drawbpedge[ELpos=36, ELside=l, ELdist=-.5, curvedepth=8](qn-1,240,10,qn,295,15){$\begin{array}{l}a,a\\b,b\end{array}$}

\drawedge[ELpos=50, ELside=l, ELdist=.5, curvedepth=0](qn-1,dummy){$a,b$}
\drawedge[ELpos=50, ELside=r, ELdist=.5, curvedepth=0](qn-1,dummy){$b,a$}

\drawedge[ELpos=50, ELside=l, ELdist=.5, curvedepth=0](q1,q0){$a,b$}
\drawedge[ELpos=50, ELside=r, ELdist=.5, curvedepth=0](q1,q0){$b,a$}
\drawbpedge[ELpos=30, ELside=l, ELdist=-.5, curvedepth=18](q1,225,20,qn,280,30){$\begin{array}{l}a,a\\b,b\end{array}$}

\node[Nframe=n](dummy)(80,11){$\dots$}

\drawloop[ELpos=51, ELside=l, ELdist=0, ELside=l,loopCW=y, loopdiam=6, loopangle=90](q0){}



}
\end{picture}
\end{center}
\hrule
\caption{A one-sided reachability game $L_n$ with reachability objective in which 
player~$1$ is has perfect observation and player~$2$ is blind. 
Player~$1$ needs exponential memory to win positive reachability.
\label{fig:gadget-mult-div}}

\end{figure}
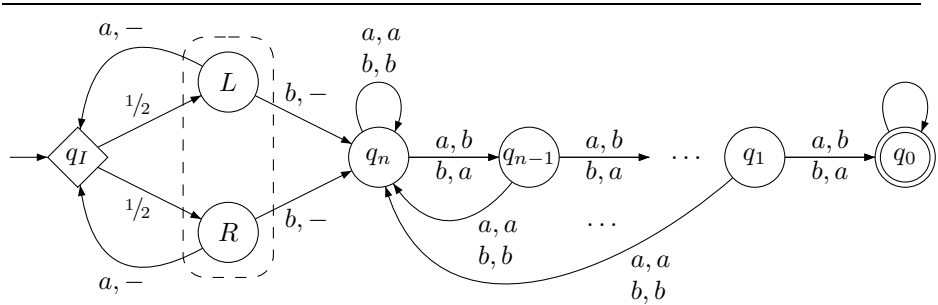

\begin{example}
{\bf Memory of non-elementary size may be necessary for positive and almost-sure reachability.}
We show that player~$1$ may need memory of non-elementary size to 
win positively (as well as almost-surely) 
in a reachability game. 
We present a family of one-sided games $G_n$ where player~$1$ has perfect observation,
and player~$2$ has partial observation both about the state of the game, and the 
actions played by player~$1$. We explain the example step by step. The key idea
of the example is that the winning strategy of player~$1$ in game $G_n$ 
will need to simulate a counter systems (with $n$ integer-valued counters) where the operations
on counters are \emph{increment} and \emph{division by $2$} (with round down),
and to reach strictly positive counter values.

\smallskip\noindent{\em Counters.}
First, we use a simple example to show that counters appear naturally in the
analysis of the game under pure strategies.

Consider the family of games $(L_n)_{n \in \nat}$ shown in \figurename~\ref{fig:gadget-mult-div},
where the reachability objective is $\Reach(\{q_0\})$.
In the first part, the states $L$ and $R$ are indistinguishable for player~$2$.
Consider the strategy of player~$1$ that plays $b$ in $L$ and $R$. 
Then, the state $q_n$ is reached by two play prefixes $\rho_{up} = q_I L q_n$ 
and $\rho_{dw} = q_I R q_n$ that player~$2$ cannot distinguish. 
Therefore, player $2$ has to play the same
action in both play prefixes, while perfectly-informed player~$1$ can play different actions.
In particular, if player~$1$ plays $a$ in $\rho_{up}$ and $b$ in $\rho_{dw}$, 
then no matter the action chosen by player~$2$ the state $q_{n-1}$ is reached with
positive probability. However, because only one play prefix reaches $q_{n-1}$, 
this strategy of player~$1$ cannot ensure to reach $q_{n-2}$ with positive probability.

Player~$1$ can ensure to reach $q_{n-2}$ (and $q_0$) with positive probability with the following
exponential-memory strategy. For the first $n-1$ visits to either $L$ or $R$,
play $b$, and on the $n$th visit, play $a$. This strategy produces $2^n$ different play prefixes from $q_I$
to $q_n$, each with probability $\frac{1}{2^n}$. Considering the mapping $L \mapsto a$,
$R \mapsto b$, each such play prefix $\rho$ is mapped to a sequence $w_\rho$ of length $n$ over $\{a,b\}$
(for example, the play prefix $q_I L q_I R q_I L q_n$ is mapped to $aba$).
The strategy of player~$1$ is to play the sequence $w_\rho$ in the next $n$ steps after $\rho$.
This strategy ensures that for all $0 \leq i \leq n$, there are $2^i$ play prefixes 
which reach $q_i$ with positive probability, all being indistinguishable for player~$2$.
The argument is an induction on $i$. The claim is true for $i=n$, and if it holds for $i=k$,
then no matter the action chosen by player~$2$ in $q_{k}$, the state $q_{k-1}$ is reached
with positive probability by half of the $2^k$ play prefixes, i.e. $2^{k-1}$ play prefixes. 
This establishes the claim. As a consequence,
one play prefix reaches $q_0$ with positive probability. This strategy requires exponential
memory, and an inductive argument shows that this memory is necessary because 
player~$1$ needs to have at least $2$ play prefixes that are indistinguishable 
for player~$2$ in state $q_1$, and at least $2^i$ play prefixes in $q_i$ for all $0 \leq i \leq n$.

\begin{figure}[!tb]
\hrule
\begin{center}
\def\fsize{\normalsize}

\begin{picture}(130,32)(0,0)

{\fsize


\node[Nmarks=i](q4)(10,10){$q_{4}$}
\node[Nmarks=n](q3)(31,10){$q_{3}$}
\node[Nmarks=n](q2)(55,10){$q_{2}$}
\node[Nmarks=n](q1)(82,10){$q_{1}$}
\node[Nmarks=r](q0)(118,10){$q_{0}$}

\node[Nmarks=i, ExtNL=y, NLangle=270, NLdist=1.5](l4)(10,10){$[0,0,0,2^{2059}]$}
\node[Nmarks=n, ExtNL=y, NLangle=270, NLdist=3](l3)(31,10){$[0,0,2^{11},2^{11}]$}
\node[Nmarks=n, ExtNL=y, NLangle=270, NLdist=1.5](l2)(55,10){$[0,2^3,2^3,2^3]$}
\node[Nmarks=n, ExtNL=y, NLangle=270, NLdist=3](l1)(82,10){$[2,2,2,2]$}
\node[Nmarks=n, ExtNL=y, NLangle=270, NLdist=3](l0)(118,10){$[1,1,1,1]$} 

\drawedge[ELpos=50, ELside=l, ELdist=.5](q4,q3){}
\drawedge[ELpos=50, ELside=l, ELdist=.5](q3,q2){}
\drawedge[ELpos=50, ELside=l, ELdist=.5](q2,q1){}
\drawedge[ELpos=50, ELside=r, ELdist=1](q1,q0){$(\div 2,\div 2,\div 2,\div 2)$}

\drawloop[ELpos=50, ELside=l, ELdist=1, ELside=l,loopCW=y, loopdiam=6, loopangle=90](q4)
{$\begin{array}{c}2^{11} \cdot 2^{2^{11}} \\(\cdot,\cdot,\cdot,+1)\end{array}$}
\drawloop[ELpos=50, ELside=l, ELdist=1, ELside=l,loopCW=y, loopdiam=6, loopangle=90](q3)
{$\begin{array}{c}2^{3} \cdot 2^{2^3} \\(\cdot,\cdot,+1,\div 2)\end{array}$}
\drawloop[ELpos=50, ELside=l, ELdist=1, ELside=l,loopCW=y, loopdiam=6, loopangle=90](q2)
{$\begin{array}{c}2 \cdot 2^{2} \\(\cdot,+1,\div 2,\div 2)\end{array}$}
\drawloop[ELpos=50, ELside=l, ELdist=1, ELside=l,loopCW=y, loopdiam=6, loopangle=90](q1)
{$\begin{array}{c}2 \\(+1,\div 2,\div 2,\div 2)\end{array}$}



}
\end{picture}
\end{center}
\hrule
\caption{A family $(C_n)_{n \in \nat}$ of counter systems with $n$ counters and
$n+1$ states where the shortest execution to
reach $(q_0, k_1, \dots, k_n)$ with positive counters (i.e., $k_i > 0$ for all $1\leq i \leq n$)
from $(q_n, 0, \dots,0)$ is of non-elementary length. The numbers above the self-loops
show the number of times each self-loop is taken along the shortest execution.
\label{fig:non-elementary-counters}}
\end{figure}
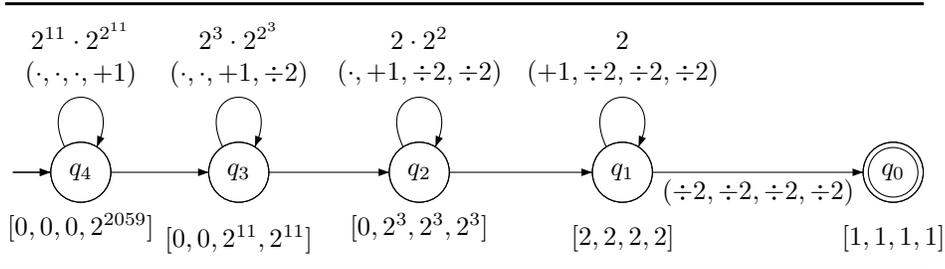

\smallskip\noindent{\em Non-elementary counters.}
Now, we present a family $C_n$ of counter systems where the shortest execution
is of non-elementary length (specifically, the shortest length is greater than 
a tower $2^{2^{\cdot^{\cdot^{2}}}}$ of exponentials of height $n$). 
The counter system $C_4$ (for $n=4$) is shown in \figurename~\ref{fig:non-elementary-counters}.
The operations on counters can be \emph{increment} ($+1$), \emph{division by $2$} ($\div 2$),
and \emph{idle} ($\cdot$).
In general, $C_n$ has $n$ counters $c_1, \dots, c_n$ and $n+1$ states $q_0, \dots, q_n$.
In state $q_i$ of $C_n$ ($0 \leq i \leq n$), the counter $c_i$ can 
be incremented and at the same time all the counters $c_j$ for $j>i$ are divided by $2$.
From $q_n$, to reach $q_0$ with strictly positive counters (i.e., all counters
have value at least $1$), we show that it is necessary to execute the self-loop
on state $q_n$ a non-elementary number of times. In \figurename~\ref{fig:non-elementary-counters},
the numbers above the self-loops show the number of times they need to be executed.
When leaving $q_1$, the counters need to have value at least $2$ in order to survive
the transition to $q_0$ which divides all counters by $2$. 
Since the first counter can be incremented only in state $q_1$, the self-loop
in $q_1$ has to be executed $2$ times. Hence, when leaving $q_2$, the other counters
need to have value at least $2 \cdot 2^2 = 2^3$ in order to survive the self-loops in $q_1$. 
Therefore, the self-loop in $q_2$ is executed $2^3$ times. And so on.
In general, if the self-loop on state $q_i$ is executed $k$ times (in order to get $c_i=k$), 
then the counters $c_{i+1}, \dots, c_n$ need to have value $k \cdot 2^k$ when entering $q_i$
(in order to guarantee a value at least $k$ of these counters). In $q_n$, 
the last counter $c_n$ needs to have value $f^n(1)$ where $f^n$ is the $n$th
iterate of the function $f: \nat \to \nat: x \mapsto x \cdot 2^x$. This value
is greater than a tower of exponentials of height $n$.

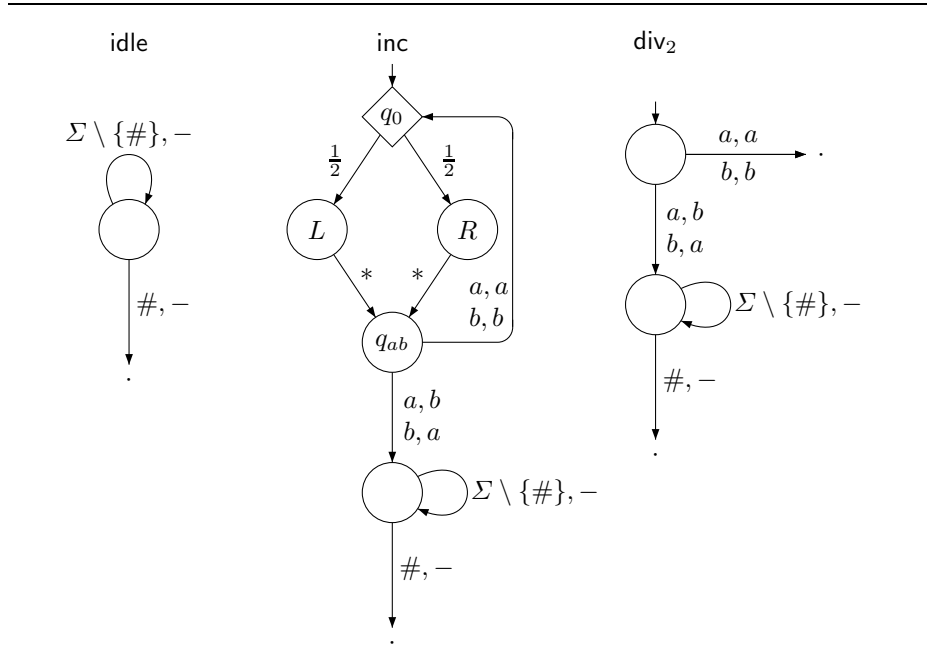
\begin{figure}[!tb]
\hrule
\begin{center}
\def\fsize{\normalsize}

\begin{picture}(120,85)(0,0)

{\fsize


\node[Nframe=n](idle)(15,80){\idle}

\node[Nmarks=n](q)(15,55){}
\node[Nframe=n, Nh=4, Nw=4](qbis)(15,35){$\cdot$}

\drawloop[ELpos=50, ELside=l, ELdist=1, ELside=l,loopCW=y, loopdiam=6, loopangle=90](q)
{$\Sigma\setminus \{\#\}, -$}

\drawedge[ELpos=50, ELside=l, ELdist=1, curvedepth=0](q,qbis){$\#,-$}

\node[Nframe=n](inc)(50,80){\inc}

\rpnode[Nmarks=i, iangle=90, ilength=3](r6)(50,70)(4,4){$q_0$}

\node[Nmarks=n](rl)(40,55){$L$}
\node[Nmarks=n](rr)(60,55){$R$}

\node[Nmarks=n](r5)(50,40){$q_{ab}$}
\node[Nmarks=n](r4)(50,20){}
\node[Nframe=n, Nh=4, Nw=4](r4bis)(50,0){$\cdot$}

\drawedge[ELpos=50, ELside=r, ELdist=1, curvedepth=0](r6,rl){$\frac{1}{2}$}
\drawedge[ELpos=50, ELside=l, ELdist=1, curvedepth=0](r6,rr){$\frac{1}{2}$}

\drawedge[ELpos=50, ELside=l, ELdist=.5, curvedepth=0](rl,r5){*}
\drawedge[ELpos=50, ELside=r, ELdist=.5, curvedepth=0](rr,r5){*}

\node[Nframe=n, Nw=0,Nh=0](tmp1)(66,42){}
\node[Nframe=n, Nw=0,Nh=0](tmp2)(66,68){}
\drawline[AHnb=0, arcradius=2](54,40)(66,40)(66,43)
\drawedge[AHnb=0, ELpos=12, ELside=l, ELdist=0, curvedepth=0](tmp1,tmp2){$\begin{array}{l}a,a\\b,b\end{array}$}
\drawline[AHnb=1, arcradius=2](66,67)(66,70)(54,70)

\drawedge[ELpos=50, ELside=l, ELdist=1, curvedepth=0](r5,r4){$\begin{array}{l}a,b\\b,a\end{array}$}

\drawloop[ELpos=50, ELside=l, ELdist=0.5, ELside=l,loopCW=y, loopdiam=6, loopangle=0](r4)
{$\Sigma\setminus \{\#\}, -$}

\drawedge[ELpos=50, ELside=l, ELdist=1, curvedepth=0](r4,r4bis){$\#,-$}

\node[Nframe=n](div)(85,80){${\sf div}_2$}

\node[Nmarks=i, iangle=90, ilength=3](s2)(85,65){}
\node[Nframe=n, Nh=4, Nw=4](s2bis)(107,65){$\cdot$}
\node[Nmarks=n](s1)(85,45){}
\node[Nframe=n, Nh=4, Nw=4](s1bis)(85,25){$\cdot$}

\drawedge[ELpos=50, ELside=l, ELdist=1, curvedepth=0](s2,s2bis){$a,a$}
\drawedge[ELpos=50, ELside=r, ELdist=1, curvedepth=0](s2,s2bis){$b,b$}

\drawedge[ELpos=50, ELside=l, ELdist=1, curvedepth=0](s2,s1){$\begin{array}{l}a,b\\b,a\end{array}$}

\drawloop[ELpos=50, ELside=l, ELdist=.5, ELside=l,loopCW=y, loopdiam=6, loopangle=0](s1)
{$\Sigma\setminus \{\#\}, -$}

\drawedge[ELpos=50, ELside=l, ELdist=1, curvedepth=0](s1,s1bis){$\#,-$}



}
\end{picture}
\end{center}
\hrule
\caption{Gadgets to simulate {\sf idle}, {\sf increment}, and {\sf division by 2}.
\label{fig:gadget-non-elementary}}
\end{figure}

\smallskip\noindent{\em Gadgets for increment and division.}
In \figurename~\ref{fig:gadget-non-elementary}, we show the gadgets that are used
to simulate operations on counters. The gadgets are game graphs where the 
player-$1$ actions $a,b$ are indistinguishable for player~$2$ (but player~$2$ can
observe and distinguish the action $\#$). The actions $a,b$ are used by player~$1$ to 
simulate the operations on the counters. The $\#$ is used to simulate 
the transitions from state $q_i$ to $q_{i-1}$ of the counter system of~\figurename~\ref{fig:non-elementary-counters}.
All states of the gadgets have the same observation for player~$2$.
Recall that player~$1$ has perfect observation.

The idle gadget is straightforward. The actions $a,b$ have no effect.
In the other gadgets, the value of the counters is represented by the number of
paths that are indistinguishable for player~$2$, and that end up in the entry
state of the gadget (for the value of the counter before the operation)
or in the exit state (for the value of the counter after the operation).

Consider the division gadget ${\sf div}_2$. If player $2$ plays an action that matches
the choice of player~$1$, then the game leaves the gadget and the transition
will go to the initial state of the game we construct (which is shown on \figurename~\ref{fig:non-elementary-game}).
Otherwise, the action of player~$2$ does not match the action of player~$1$
and the play reaches the exit state of the gadget. Let $k$ be the number of
indistinguishable\footnote{In the rest of this section, the word \emph{indistinguishable}
means \emph{indistinguishable for player~$2$}.} paths in the entry state of the gadget. By playing $a$
after $k_1$ such paths and $b$ after $k_2$ paths (where $k_1 + k_2 = k$),
player~$1$ ensures that $\min\{k_1,k_2\}$ indistinguishable paths reach the
exit state of the gadget (because in the worst case, player~$2$ can choose his action
to match the action of player~$1$ over $\max\{k_1,k_2\}$ paths). Hence, player~$1$
can ensure that $\lfloor \frac{k}{2} \rfloor$ indistinguishable paths get to
the exit state. In the game of \figurename~\ref{fig:non-elementary-game},
the entry and exit state of division gadgets are merged. The argument still holds.

Consider the increment gadget $\inc$ on \figurename~\ref{fig:gadget-non-elementary}. 
We use this gadget with the assumption that the entry state is not reached
by more than one indistinguishable path. This will be the case in the 
game of \figurename~\ref{fig:non-elementary-game}. Player~$1$ can achieve
$k$ indistinguishable paths in the exit state as follows. In state $q_{ab}$,
play action $a$ if the last visited state is $q_L$, and play action $b$ if 
the last visited state is $q_R$. No matter the choice of player~$1$, one path
will reach the exit state, and the other path will get to the entry state.
Repeating this scenario $k$ times gives $k$ paths in the exit state. 
We show that there is essentially no faster way to obtain $k$ paths in the exit state.
Indeed, if player~$1$ chooses the same action (say $a$) after the two paths ending up in $q_{ab}$, then 
against the action $b$ from player~$2$, two paths reach the exit state, and
no state get to the entry state. Then, player~$1$ can no longer increment
the  number of paths. Therefore, to get $k$ paths in the exit state, the fastest
way is to increment one by one up to $k-2$, and then get $2$ more paths as a last
step. Note that it is not of the interest of player~$2$ to match the action 
of player~$1$ if player~$1$ plays the same action, because this would double
the number of paths.

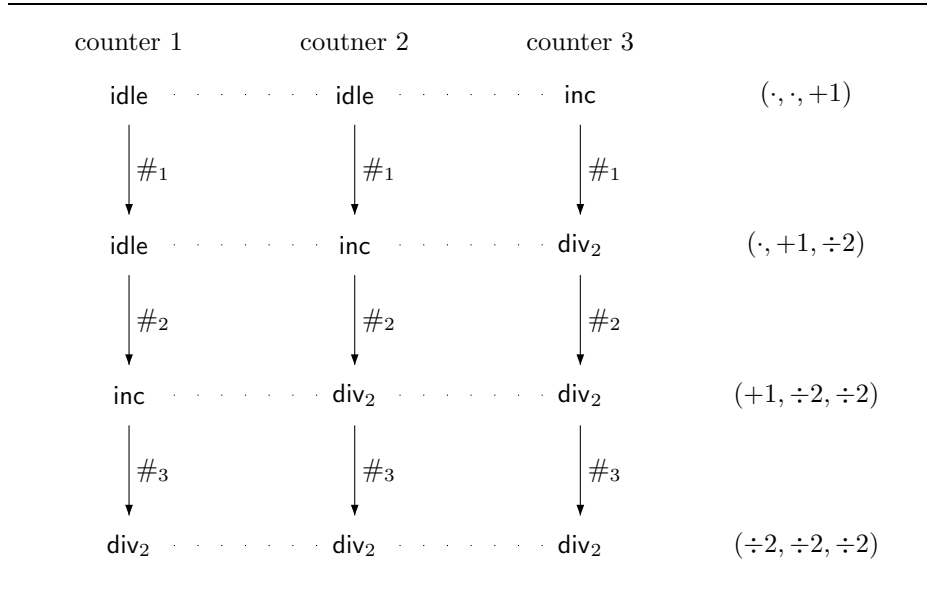
\begin{figure}[!tb]
\hrule
\begin{center}
\def\fsize{\normalsize}

\begin{picture}(120,77)(0,0)

{\fsize


\node[Nframe=n](c1)(15,72){counter 1}
\node[Nframe=n](c2)(45,72){coutner 2}
\node[Nframe=n](c3)(75,72){counter 3}

\node[Nframe=n](q4)(15,65){\idle}
\node[Nframe=n](q3)(15,45){\idle}
\node[Nframe=n](q2)(15,25){\inc}
\node[Nframe=n](q1)(15,5){${\sf div}_2$}

\drawedge[ELpos=50, ELside=l, ELdist=1, curvedepth=0](q4,q3){$\#_1$}
\drawedge[ELpos=50, ELside=l, ELdist=1, curvedepth=0](q3,q2){$\#_2$}
\drawedge[ELpos=50, ELside=l, ELdist=1, curvedepth=0](q2,q1){$\#_3$}

\node[Nframe=n](r4)(45,65){\idle}
\node[Nframe=n](r3)(45,45){\inc}
\node[Nframe=n](r2)(45,25){${\sf div}_2$}
\node[Nframe=n](r1)(45,5){${\sf div}_2$}

\drawedge[ELpos=50, ELside=l, ELdist=1, curvedepth=0](r4,r3){$\#_1$}
\drawedge[ELpos=50, ELside=l, ELdist=1, curvedepth=0](r3,r2){$\#_2$}
\drawedge[ELpos=50, ELside=l, ELdist=1, curvedepth=0](r2,r1){$\#_3$}

\node[Nframe=n](s4)(75,65){\inc}
\node[Nframe=n](s3)(75,45){${\sf div}_2$}
\node[Nframe=n](s2)(75,25){${\sf div}_2$}
\node[Nframe=n](s1)(75,5){${\sf div}_2$}

\drawedge[ELpos=50, ELside=l, ELdist=1, curvedepth=0](s4,s3){$\#_1$}
\drawedge[ELpos=50, ELside=l, ELdist=1, curvedepth=0](s3,s2){$\#_2$}
\drawedge[ELpos=50, ELside=l, ELdist=1, curvedepth=0](s2,s1){$\#_3$}

\node[Nframe=n](r4)(105,65){$(\cdot,\cdot,+1)$}
\node[Nframe=n](r3)(105,45){$(\cdot,+1,\div 2)$}
\node[Nframe=n](r2)(105,25){$(+1,\div 2,\div 2)$}
\node[Nframe=n](r1)(105,5){$(\div 2,\div 2,\div 2)$}

\gasset{dash={.15 3}0}

\drawline[AHnb=0, arcradius=2](20,65)(40,65)
\drawline[AHnb=0, arcradius=2](20,45)(40,45)
\drawline[AHnb=0, arcradius=2](20,25)(40,25)
\drawline[AHnb=0, arcradius=2](20,5)(40,5)

\drawline[AHnb=0, arcradius=2](50,65)(70,65)
\drawline[AHnb=0, arcradius=2](50,45)(70,45)
\drawline[AHnb=0, arcradius=2](50,25)(70,25)
\drawline[AHnb=0, arcradius=2](50,5)(70,5)



}
\end{picture}
\end{center}
\hrule
\caption{Abstract view of the game in \figurename~\ref{fig:non-elementary-game}
as a 3-counter system.
\label{fig:non-elementary-game-abstract}}
\end{figure}

\smallskip\noindent{\em Structure of the game.}
The game $G_n$ which requires memory of non-elementary size is sketched in 
\figurename~\ref{fig:non-elementary-game} for $n=3$. Its abstract structure is 
shown in \figurename~\ref{fig:non-elementary-game-abstract}, corresponding
to the structure of the counter system in \figurename~\ref{fig:non-elementary-counters}.
The alphabet of player~$1$ is $\{a,b,\#\}$. For the sake of clarity, some
transitions are not depicted in \figurename~\ref{fig:non-elementary-game}. 
It is assumed that for player~$1$, playing an action from a state where this
action has no transition depicted leads to the initial state of the game. 
For example, playing $\#$ in state $q_4$ goes to the initial state, and
from the target state $q_{\smiley}$, all transitions go to the initial state.

\figurename~\ref{fig:non-elementary-game} shows the initial state $q_I$
of the game from which a uniform probabilistic transition branches to the three
states $q_7,r_7,s_7$. The idea of this game is that player~$1$ needs
to ensure that the states $q_1,r_1,s_1$ are reached with positive probability,
so as to ensure that no matter the action ($a$, $b$, or $c$) chosen by player~$2$,
the state $q_{\smiley}$ is reached  with positive probability. From $q_1,r_1,s_1$,
the other actions of player~$2$ (i.e., $b$ and $c$ from $q_1$, $a$ and $c$ from $r_1$, etc.)
lead to the initial state. Player~$2$ can observe the initial state. All the other
states are indistinguishable.

Intuitively, each ``\emph{line}'' of states ($q$'s, $r$'s, and $s$'s) simulate
one counter. Synchronization of the operations on the three counters is ensured by the special 
(and visible to player~$2$) symbol $\#$. Intuitively, since $\#$ is visible to player~$2$,
player~$1$ must play $\#$ at the same ``time'' in the three lines of states
(i.e., after the same number of steps in each line). Otherwise, player~$2$ 
may eliminate one line of states from his belief.
For example, if player~$1$ plays $\#$ in the first step in lines $q$ and $r$,
but not in line $s$, then player~$2$ observing $\#$ can safely update his belief
to $\{q_{\cdot},r_{\cdot}\}$, and thus avoid to play $c$ when one of the states
$q_{1}$, $r_{1}$ is reached. In \figurename~\ref{fig:non-elementary-game}, 
the dotted lines and the subscripts on $\#$ emphasize the layered  
structure of the game, corresponding to the structure of \figurename~\ref{fig:non-elementary-game-abstract}.

From all the above, it follows that player~$1$ needs memory of size non-elementary 
in order to ensure indistinguishable paths ending up in each of the states $q_1,r_1,s_1$,
and win with positive probability. Since all other paths are going back to the
initial state, this strategy can be repeated over and over again to achieve
almost-sure reachability as well.
\ee
\end{example}

\begin{figure}[tbp]
\hrule
\begin{center}
\def\fsize{\normalsize}

\begin{picture}(120,177)(0,0)

{\fsize



\node[Nmarks=r](target)(55,2){$\smiley$}

\rpnode[Nmarks=i](init)(55,172)(4,4){$q_I$}

\node[Nmarks=n](q7)(15,152){$q_7$}

\node[Nmarks=n](q6)(15,132){$q_6$}
\rpnode[Nmarks=n](q5)(15,112)(4,4){$q_5$}

\node[Nmarks=n](ql)(5,97){$q_L$}
\node[Nmarks=n](qr)(25,97){$q_R$}
\node[Nmarks=n](q4)(15,82){$q_4$}
\node[Nmarks=n](q3)(15,62){$q_3$}
\node[Nmarks=n](q2)(15,42){$q_2$}
\node[Nframe=n, Nh=4, Nw=6](q2bis)(37,42){$q_I$}
\node[Nmarks=n](q1)(15,22){$q_1$}

\drawloop[ELpos=50, ELside=l, ELdist=0.5, ELside=l,loopCW=y, loopdiam=6, loopangle=90](q7)
{$\Sigma\setminus \{\#_1\}, -$}

\drawedge[ELpos=50, ELside=l, ELdist=1, curvedepth=0](q7,q6){$\#_1,-$}
\drawedge[ELpos=50, ELside=l, ELdist=1, curvedepth=0](q6,q5){$\#_2,-$}

\drawloop[ELpos=50, ELside=l, ELdist=0.5, ELside=l,loopCW=y, loopdiam=6, loopangle=0](q6)
{$\Sigma\setminus \{\#_2\}, -$}

\drawedge[ELpos=50, ELside=r, ELdist=1, curvedepth=0](q5,ql){$\frac{1}{2}$}
\drawedge[ELpos=50, ELside=l, ELdist=1, curvedepth=0](q5,qr){$\frac{1}{2}$}

\drawedge[ELpos=50, ELside=l, ELdist=.5, curvedepth=0](ql,q4){*}
\drawedge[ELpos=50, ELside=r, ELdist=.5, curvedepth=0](qr,q4){*}

\node[Nframe=n, Nw=0,Nh=0](tmp1)(31,84){}
\node[Nframe=n, Nw=0,Nh=0](tmp2)(31,110){}
\drawline[AHnb=0, arcradius=2](19,82)(31,82)(31,85)
\drawedge[AHnb=0, ELpos=12, ELside=l, ELdist=0, curvedepth=0](tmp1,tmp2){$\begin{array}{l}a,a\\b,b\end{array}$}
\drawline[AHnb=1, arcradius=2](31,109)(31,112)(19,112)

\drawloop[ELpos=50, ELside=l, ELdist=0.5, ELside=l,loopCW=y, loopdiam=6, loopangle=0](q3)
{$\Sigma\setminus \{\#_3\}, -$}

\drawedge[ELpos=50, ELside=l, ELdist=1, curvedepth=0](q4,q3){$\begin{array}{l}a,b\\b,a\end{array}$}
\drawedge[ELpos=50, ELside=l, ELdist=1, curvedepth=0](q3,q2){$\#_3,-$}
\drawedge[ELpos=50, ELside=l, ELdist=1, curvedepth=0](q2,q1){$\begin{array}{l}a,b\\b,a\end{array}$}
\drawedge[ELpos=50, ELside=l, ELdist=1, curvedepth=0](q2,q2bis){$a,a$}
\drawedge[ELpos=50, ELside=r, ELdist=1, curvedepth=0](q2,q2bis){$b,b$}

\drawedge[ELpos=50, ELside=r, ELdist=1, curvedepth=0](q1,target){$-,a$}


\node[Nmarks=n](r7)(55,152){$r_7$}
\rpnode[Nmarks=n](r6)(55,132)(4,4){$r_6$}

\node[Nmarks=n](rl)(45,117){$r_L$}
\node[Nmarks=n](rr)(65,117){$r_R$}
\node[Nmarks=n](r5)(55,102){$r_5$}
\node[Nmarks=n](r4)(55,82){$r_4$}
\node[Nmarks=n](r3)(55,62){$r_3$}
\node[Nframe=n, Nh=4, Nw=6](r3bis)(77,62){$q_I$}
\node[Nmarks=n](r2)(55,42){$r_2$}
\node[Nframe=n, Nh=4, Nw=6](r2bis)(77,42){$q_I$}
\node[Nmarks=n](r1)(55,22){$r_1$}

\drawloop[ELpos=50, ELside=l, ELdist=0.5, ELside=l,loopCW=y, loopdiam=6, loopangle=0](r7)
{$\Sigma\setminus \{\#_1\}, -$}

\drawedge[ELpos=50, ELside=l, ELdist=1, curvedepth=0](r7,r6){$\#_1,-$}

\drawloop[ELpos=50, ELside=l, ELdist=0.5, ELside=l,loopCW=y, loopdiam=6, loopangle=0](r4)
{$\Sigma\setminus \{\#_2\}, -$}

\drawedge[ELpos=50, ELside=r, ELdist=1, curvedepth=0](r6,rl){$\frac{1}{2}$}
\drawedge[ELpos=50, ELside=l, ELdist=1, curvedepth=0](r6,rr){$\frac{1}{2}$}

\drawedge[ELpos=50, ELside=l, ELdist=.5, curvedepth=0](rl,r5){*}
\drawedge[ELpos=50, ELside=r, ELdist=.5, curvedepth=0](rr,r5){*}

\node[Nframe=n, Nw=0,Nh=0](tmp1)(71,104){}
\node[Nframe=n, Nw=0,Nh=0](tmp2)(71,130){}
\drawline[AHnb=0, arcradius=2](59,102)(71,102)(71,105)
\drawedge[AHnb=0, ELpos=12, ELside=l, ELdist=0, curvedepth=0](tmp1,tmp2){$\begin{array}{l}a,a\\b,b\end{array}$}
\drawline[AHnb=1, arcradius=2](71,129)(71,132)(59,132)

\drawloop[ELpos=50, ELside=l, ELdist=.0, loopCW=y, loopdiam=6, loopangle=140](r3){$\begin{array}{l}a,b\\b,a\end{array}$}

\drawedge[ELpos=50, ELside=l, ELdist=1, curvedepth=0](r5,r4){$\begin{array}{l}a,b\\b,a\end{array}$}
\drawedge[ELpos=50, ELside=l, ELdist=1, curvedepth=0](r4,r3){$\#_2,-$}
\drawedge[ELpos=50, ELside=l, ELdist=1, curvedepth=0](r3,r2){$\#_3,-$}
\drawedge[ELpos=50, ELside=l, ELdist=1, curvedepth=0](r2,r1){$\begin{array}{l}a,b\\b,a\end{array}$}

\drawedge[ELpos=50, ELside=l, ELdist=1, curvedepth=0](r3,r3bis){$a,a$}
\drawedge[ELpos=50, ELside=r, ELdist=1, curvedepth=0](r3,r3bis){$b,b$}
\drawedge[ELpos=50, ELside=l, ELdist=1, curvedepth=0](r2,r2bis){$a,a$}
\drawedge[ELpos=50, ELside=r, ELdist=1, curvedepth=0](r2,r2bis){$b,b$}

\drawedge[ELpos=50, ELside=r, ELdist=1, curvedepth=0](r1,target){$-,b$}

\rpnode[Nmarks=i, ilength=3, iangle=90](s7)(95,152)(4,4){$s_7$}

\node[Nmarks=n](sl)(85,137){$s_L$}
\node[Nmarks=n](sr)(105,137){$s_R$}
\node[Nmarks=n](s6)(95,122){$s_6$}
\node[Nmarks=n](s5)(95,102){$s_5$}
\node[Nmarks=n](s4)(95,82){$s_4$}
\node[Nframe=n, Nh=4, Nw=6](s4bis)(117,82){$q_I$}
\node[Nmarks=n](s3)(95,62){$s_3$}
\node[Nframe=n, Nh=4, Nw=6](s3bis)(117,62){$q_I$}
\node[Nmarks=n](s2)(95,42){$s_2$}
\node[Nframe=n, Nh=4, Nw=6](s2bis)(117,42){$q_I$}
\node[Nmarks=n](s1)(95,22){$s_1$}

\drawedge[ELpos=50, ELside=r, ELdist=1, curvedepth=0](s7,sl){$\frac{1}{2}$}
\drawedge[ELpos=50, ELside=l, ELdist=1, curvedepth=0](s7,sr){$\frac{1}{2}$}

\drawedge[ELpos=50, ELside=l, ELdist=.5, curvedepth=0](sl,s6){*}
\drawedge[ELpos=50, ELside=r, ELdist=.5, curvedepth=0](sr,s6){*}

\node[Nframe=n, Nw=0,Nh=0](tmp1)(111,124){}
\node[Nframe=n, Nw=0,Nh=0](tmp2)(111,150){}
\drawline[AHnb=0, arcradius=2](99,122)(111,122)(111,125)
\drawedge[AHnb=0, ELpos=12, ELside=l, ELdist=0, curvedepth=0](tmp1,tmp2){$\begin{array}{l}a,a\\b,b\end{array}$}
\drawline[AHnb=1, arcradius=2](111,149)(111,152)(99,152)

\drawloop[ELpos=50, ELside=l, ELdist=0, loopCW=y, loopdiam=6, loopangle=140](s4){$\begin{array}{l}a,b\\b,a\end{array}$}
\drawloop[ELpos=50, ELside=l, ELdist=0, loopCW=y, loopdiam=6, loopangle=140](s3){$\begin{array}{l}a,b\\b,a\end{array}$}

\drawedge[ELpos=50, ELside=l, ELdist=1, curvedepth=0](s6,s5){$\begin{array}{l}a,b\\b,a\end{array}$}
\drawedge[ELpos=50, ELside=l, ELdist=1, curvedepth=0](s5,s4){$\#_1,-$}
\drawedge[ELpos=50, ELside=l, ELdist=1, curvedepth=0](s4,s3){$\#_2,-$}
\drawedge[ELpos=50, ELside=l, ELdist=1, curvedepth=0](s3,s2){$\#_3,-$}
\drawedge[ELpos=50, ELside=l, ELdist=1, curvedepth=0](s2,s1){$\begin{array}{l}a,b\\b,a\end{array}$}

\drawedge[ELpos=50, ELside=l, ELdist=1, curvedepth=0](s4,s4bis){$a,a$}
\drawedge[ELpos=50, ELside=r, ELdist=1, curvedepth=0](s4,s4bis){$b,b$}
\drawedge[ELpos=50, ELside=l, ELdist=1, curvedepth=0](s3,s3bis){$a,a$}
\drawedge[ELpos=50, ELside=r, ELdist=1, curvedepth=0](s3,s3bis){$b,b$}
\drawedge[ELpos=50, ELside=l, ELdist=1, curvedepth=0](s2,s2bis){$a,a$}
\drawedge[ELpos=50, ELside=r, ELdist=1, curvedepth=0](s2,s2bis){$b,b$}

\drawedge[ELpos=50, ELside=l, ELdist=1, curvedepth=0](s1,target){$-,c$}
\drawedge[ELpos=50, ELside=r, ELdist=1, curvedepth=0](init,q7){$\frac{1}{3}$}
\drawedge[ELpos=50, ELside=l, ELdist=1, curvedepth=0](init,r7){$\frac{1}{3}$}
\drawedge[ELpos=50, ELside=l, ELdist=1, curvedepth=0](init,s7){$\frac{1}{3}$}

\gasset{dash={.15 4}0}

\drawline[AHnb=0, arcradius=2](27,142)(54,142)
\drawline[AHnb=0, arcradius=2](65,142)(75,142)(75,112)(94,93)
\drawline[AHnb=0, arcradius=2](105,92)(117,92)

\drawline[AHnb=0, arcradius=2](27,122)(35,122)(35,92)(54,73)
\drawline[AHnb=0, arcradius=2](65,72)(94,72)
\drawline[AHnb=0, arcradius=2](105,72)(117,72)

\drawline[AHnb=0, arcradius=2](25,52)(54,52)
\drawline[AHnb=0, arcradius=2](65,52)(94,52)
\drawline[AHnb=0, arcradius=2](105,52)(117,52)



}
\end{picture}
\end{center}
\hrule
\caption{{\bf Memory of non-elementary size may be necessary for positive and almost-sure reachability.}
A family of one-sided reachability games in which 
player~$1$ is has perfect observation. Player~$1$ needs memory of non-elementary size
to win positive reachability (as well as almost-sure reachability).
\label{fig:non-elementary-game}}
\end{figure}
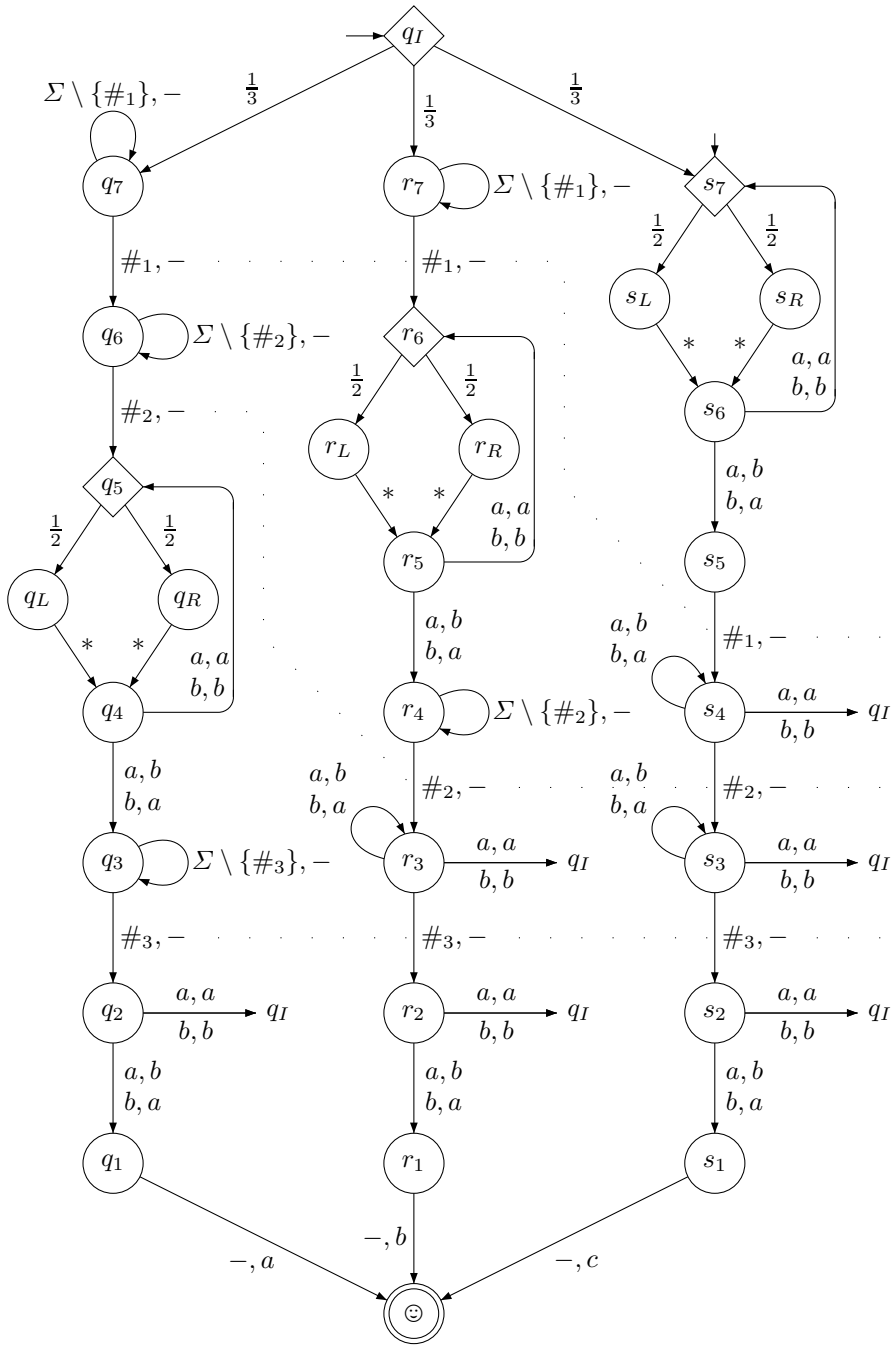

\begin{theorem}\label{theo:non-elementary}
In one-sided partial-observation stochastic games with player~1 perfect
and player~2 partial, both pure almost-sure and pure positive winning strategies for 
reachability objectives for player~1 require memory of 
non-elementary size in general.
\end{theorem}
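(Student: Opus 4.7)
The plan is to formalize the family of one-sided games $(G_n)_{n \in \nat}$ sketched in the excerpt, whose sizes are polynomial in $n$, and to prove that every pure positive (hence every pure almost-sure) winning strategy for player~$1$ in $G_n$ requires memory of size at least a tower of exponentials of height~$n$. The argument proceeds in three stages: first isolate the abstract $n$-counter system $C_n$ whose shortest ``all-positive'' run has non-elementary length; second, show that the game $G_n$ simulates $C_n$ under the invariant ``counter value $=$ number of player-$2$-indistinguishable play prefixes reaching a designated state''; third, deduce the memory lower bound from the length of any such run.

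For the first stage, I would show by reverse induction on $i = 1, \dots, n$ that in the counter system $C_n$ of \figurename~\ref{fig:non-elementary-counters} any execution from $(q_n, 0, \ldots, 0)$ that reaches $(q_0, k_1, \ldots, k_n)$ with $k_j \geq 1$ for all $j$ must enter state $q_i$ with counters $c_{i+1}, \dots, c_n$ all at value at least $g_i$, where $g_1 = 2$ and $g_{i+1} = g_i \cdot 2^{g_i}$. The inductive step uses the fact that the self-loop at $q_i$ is the only way to increment $c_i$, and each iteration halves every later counter, so raising $c_i$ from~$0$ to the required value $g_{i-1}$ costs $g_{i-1}$ halvings of each $c_j$ with $j > i$. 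Iterating the map $f(x) = x \cdot 2^x$ starting at~$1$ yields $g_n$, a tower of exponentials of height $n$.

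For the second stage, I would establish that the gadgets of \figurename~\ref{fig:gadget-non-elementary} faithfully implement the three counter operations. The \idle{} gadget preserves the number of indistinguishable paths trivially. The ${\sf div}_2$ gadget forces at most $\lfloor k/2 \rfloor$ paths to survive: player~$1$ must split its $k$ indistinguishable paths between actions~$a$ and~$b$ and player~$2$, by matching the majority, sends that majority back to $q_I$. The \inc{} gadget, exploiting the probabilistic split into indistinguishable states $L$ and $R$ together with the matching box $q_{ab}$, lets player~$1$ increase the number of exit paths by exactly one per traversal, and no more, since choosing the same action at $q_{ab}$ on the two incoming paths squanders the increment. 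Crucially, the synchronization transitions $\#_i$ are \emph{visible} to player~$2$, so the $n$ counter lines of \figurename~\ref{fig:non-elementary-game} must perform their $\#$-steps in lockstep; any mismatch refines player~$2$'s belief to a proper subset of lines, after which player~$2$ can pick the dedicated witness action of an absent line and surely avoid~$\target$.

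The main obstacle, and the heart of the proof, is the converse direction: proving that player~$1$ cannot be cleverer than the counter abstraction. I would show that any pure positive winning strategy $\straa$ for player~$1$ induces, across all counter lines simultaneously, a valid run of $C_n$ reaching a configuration with all counters strictly positive; otherwise some line $\ell$ would contribute zero indistinguishable paths to its goal state $\ell_1$, and the player-$2$ strategy that always plays the witness action of line $\ell$ would prevent any play from reaching~$\target$. Combined with the first stage this forces $\straa$ to generate a play whose prefix up to the first potential visit of $\target$ has length at least $g_n$, and since each step along this prefix corresponds to a distinct observation history for player~$1$ that a strategy automaton must distinguish (otherwise it would conflate the ``increment'' and ``halve'' decisions), the memory of $\straa$ is at least $g_n$, hence non-elementary in $n$. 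Since every non-goal path returns to $q_I$, the same strategy can be replayed indefinitely to achieve almost-sure winning, so $G_n$ admits an almost-sure winning strategy as well; but any almost-sure winning strategy is in particular positive winning, so the same non-elementary memory lower bound applies to the almost-sure case.
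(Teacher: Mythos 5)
Your proposal follows essentially the same route as the paper: the non-elementary counter system $C_n$ built from increment and halving, the \inc{} and ${\sf div}_2$ gadgets realizing counter values as numbers of player-$2$-indistinguishable play prefixes, synchronization of the counter lines via the $\#$ actions visible to player~$2$, and the resulting non-elementary lower bound transferred from positive to almost-sure winning. You in fact make more explicit than the paper the converse direction (that no pure strategy can do better than the counter abstraction) and the final step from indistinguishable-prefix counts to memory size, both of which the paper treats informally in the example preceding the theorem.
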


\subsection{Upper bound for positive reachability with almost-sure safety}\label{sec:positive+sure}


We present the 
solution of one-sided games with a conjunction of positive reachability 
and almost-sure safety objectives, 
in which player~$1$ has perfect observation and player~$2$ has partial observation. 
This will be useful in Section~\ref{sec:as-reach} to solve almost-sure reachability,
and using a trivial safety objective (safety for the whole state space) 
it also gives the solution for positive reachability.

Let $G=\tuple{Q, q_0,\trans_G}$ be a game
over alphabets $A_1,A_2$ and observation set $\Obs_2$ for player~$2$, 
with reachability objective $\Reach(\target)$ (where $\target \subseteq Q$)
and safety objective $\Safe(Q_G)$ (where $Q_G \subseteq Q$ represents a set of good states)
for player~1. 
We assume that the states in $\target$ are absorbing and that $\target \subseteq Q_G$. 
This assumption is satisfied by the games we consider in Section~\ref{sec:as-reach},
as well as by the case of a trivial safety objective ($Q_G = Q$).
The goal of player~1 is to ensure positive probability to reach $\target$ and 
almost-sure safety for the set $Q_G$.

Before presenting the algorithm for solving these games in pure strategies,
we consider the case of randomized strategies. After, we use the results
of randomized strategies to solve the case of pure strategies. 

\paragraph{\bf Step 1 - Winning with randomized strategies.} 
First, we show that with randomized strategies, memoryless strategies are
sufficient. It suffices to play uniformly at random the set of safe actions.
In a state $q$, an action $a \in A_1$ is \emph{safe} if $\Post_G(q,a,b) \subseteq \Win_{\mathit{safe}}$
for all $b \in A_2$, where $\Win_{\mathit{safe}}$ is the set of states that are 
sure winning\footnote{Note that for safety objectives, the notion of sure winning 
and almost-sure winning coincide, and pure strategies are sufficient.} 
for player~$1$ in $G$ for the safety objective $\Safe(Q_G)$. 
This strategy ensures that the set $Q \setminus Q_G$ of bad states is never reached,
and from the positive winning region of player~$1$ for $\Reach(\target)$
it ensures that the set $\target$ is reached 
with positive probability. Therefore, computing the set $Z$ of states 
that are winning for player~$1$ in randomized strategies can be done
by fixing the uniformly randomized safe strategy for player~$1$, and 
checking that player~$2$ does not almost-surely win the safety objective 
$\Safe(Q \setminus \target)$, which requires the analysis of a POMDP for 
almost-sure safety and can be done in exponential time using a simple subset 
construction~\cite[Theorem~2]{CDH10a}.

Note that $\target \subseteq Z$ and that from all states in $Z$, 
player~$1$ can ensure that $\target$ is reached with positive probability 
within at most $2^{\abs{Q}}$ steps,
while from any state $q\not\in Z$, player~$1$ cannot win positively with
a randomized strategy, and therefore also not with a pure strategy.

\paragraph{\bf Step 2 - Pure strategies to simulate randomized strategies.} 
Second, we show that pure strategies can in some cases simulate the
behavior of randomized strategies. As we have seen in the gadget $\inc$
of \figurename~\ref{fig:gadget-non-elementary}, if there are two play
prefixes ending up in the same state and that are indistinguishable for player~$2$ 
(e.g., $q_0 L q_{ab}$ and $q_0 R q_{ab}$ in the example), then player~$1$ can simulate 
a random choice of action over support $\{a,b\}$ by playing $a$ after
$q_0 L q_{ab}$, and playing $b$ after $q_0 R q_{ab}$. No matter the 
choice of player~$2$, one of the plays will reach $q_0$ and the other will
reach the exit state of the gadget. Intuitively, this corresponds to a
uniform probabilistic choice of the actions $a$ and $b$: 
the state $q_0$ and the exit state are reached with probability $\frac{1}{2}$.

In general, if there are $\abs{A_1}$ indistinguishable play prefixes ending up
in the same state $q$, then player~$1$ can simulate a random choice of actions
over $A_1$ from $q$. However, the number of indistinguishable play prefixes 
in a successor state $q'$ may have decreased by a factor $\abs{A_1}$ (there may be
just one play reaching $q$'). Hence, in order to simulate a randomized strategy
during $k$ steps, player~$1$ needs to have $\abs{A_1}^k$ indistinguishable play prefixes.
Since $2^{\abs{Q}}$ steps are sufficient for a randomized strategy
to achieve the reachability objective, an upper bound on the number of play prefixes that 
are needed to simulate a randomized strategy using a pure strategy is $\Num = \abs{A_1}^{2^{\abs{Q}}}$.
More precisely, if the belief of player~$2$ is $B \subseteq Z$ and 
in each state $q \in B$ there are at least $\Num$ indistinguishable play prefixes, then
player~$1$ wins with a pure strategy that essentially simulates a winning
randomized strategy (which exists since $q \in Z$) for $2^n$ steps.

\paragraph{\bf Step 3 - Counting abstraction for pure strategies.} 
We present a construction of a game of perfect observation $H$ such that
player~$1$ wins in $H$ if and only if player~$1$ wins in $G$. The objective
in $H$ is a conjunction of positive reachability and almost-sure safety objectives,
for which pure memoryless winning strategies exist: for every state we restrict the
set of actions to safe actions, and then we solve positive reachability on 
a perfect-observation game. The result follows since for perfect-observation games pure memoryless
positive winning strategies exist for reachability objectives~\cite{Con92}.

\paragraph{State space.} 
The idea of this construction is to keep track of the belief set $B \subseteq Q$ of player~$2$,
and for each state $q \in B$, of the number of indistinguishable play prefixes that end up in $q$.
For $k \in \nat$, we denote by $[k]$ the set $\{0,1,\dots,k\}$.
A state of $H$ is a \emph{counting function} $f:Q \to [\K_*] \cup \{\omega\}$
where $\K_* \in \nat$ is of order $\abs{A_1}^{\abs{A_1}^{\cdot^{\cdot^{\abs{A_1}^{2^{O(n)}}}}}}$
where the number of nested exponentials is in $O(n)$ (where $n = \abs{Q}$).

As we have seen in the example of \figurename~\ref{fig:non-elementary-game}, 
it may be necessary to keep track of a non-elementary number of play prefixes.
We show that the bound $\K_*$ is sufficient, and that we can substitute 
larger numbers by the special symbol $\omega$ to obtain a \emph{finite} counting abstraction.
The belief associated with a counting function $f$ is the set $\Supp(f) = \{q \in Q \mid f(q) \neq 0\}$,
and the states $q$ such that $f(q) = \omega$ are called \emph{$\omega$-states}.

\paragraph{Action alphabet.} In $H$, an action of player~$1$ is a function 
$\hat{a}: Q \times [\K_*] \to~A_1$
that assigns to each copy of a state in the current belief (of player~$2$), 
the action played by player~$1$ after the corresponding play prefix in $G$.
We denote by $\Supp(\hat{a}(q,\cdot)) = \{ \hat{a}(q,i) \mid i \in [\K_*] \}$
the set of actions played by $\hat{a}$ in $q \in Q$.

The action set of player~$2$ in the game $H$ is the same as in $G$. 

\paragraph{Transitions.} 
Let ${\bf 1}(a,A)$ be $1$ if $a \in A$, and $0$ if $a \not\in A$. We denote this function by ${\bf 1}(a \in A)$.
Given $f$ and $\hat{a}$ as above, given an action $b \in A_2$ and an observation $\gamma \in \Obs_2$,
let $f' = \Succ(f,\hat{a},b, \gamma)$ be the function such that $f'(q') = 0$ for all $q' \not\in \gamma$,
and such that for all $q' \in \gamma$:
$$ 
\begin{array}{l}
f'(q') =  \left\{ 
	\begin{array}{ll}
		\omega & \text{ if } \exists a \in \Supp(\hat{a}(q,\cdot)) \cdot \exists q \in Q: f(q) = \omega \land q' \in \Post_G(q,a,b) \\
		x & \text{ otherwise } \\ 
	\end{array} 
\right.  \\
\vspace{-5pt}\\
\text{where } x = \sum_{q \in \Supp(f)} \sum_{i=0}^{f(q)-1} {\bf 1}(q' \in \Post_G(q,\hat{a}(q,i),b)).
\end{array} 
$$
Note that if the current state $q$ is an $\omega$-state, 
then only the support $\Supp(\hat{a}(q,\cdot))$ of the function $\hat{a}$ matters. 

Now $f' = \Succ(f,\hat{a},b, \gamma)$ may not be a counting function
because it may assign values greater than $\K_*$ to some states. 
We show that beyond certain bounds, it is not necessary to remember the 
exact value of the counters and we can replace such large values by $\omega$.
Intuitively, the $\omega$ value can be interpreted as ``very large and definitely positive value".
This abstraction needs to be done carefully in order to obtain the desired 
upper bound (namely, $\K_*$). When a counter $f(q)$ has value $\omega$, 
the successors of $q$ have value $\omega$ according to $\Succ(\cdot)$, which
is faithful if the exact value of the counter $f(q)$ is large enough. 
In fact, large enough means that the counter has value at least $\abs{A_1}$ 
as this allows player~$1$ to play each action at least once. Hence 
the abstraction remains faithful during $\K$ steps if the counters with 
value greater than $\abs{A_1}^{\K}$ are set to $\omega$. We know that if 
all counters have value greater than $\K_1 = \abs{A_1}^{2^{n}}$, then player~$1$
wins by simulating a randomized strategy. Therefore, when all counters but one
have already value $\omega$, we set the last counter to $\omega$ if it has value greater
than $\K_1$. Since this can take at most $\K_1$ steps, the other counters with
value $\omega$ need to have value at least $\K_2 = \K_1 \cdot \abs{A_1}^{\K_1}$. 

Therefore, when all counters but two
have already value $\omega$, whenever a counter gets value greater than $\K_2$ 
we set it to $\omega$. This can take at most $(\K_2)^2$ steps and the other counters with
value $\omega$ need to have value at least $\K_3 = \K_2 \cdot \abs{A_1}^{(\K_2)^2}$.
In general, when all counters but $k$ have value $\omega$, we set a counter to $\omega$
if it has value at least $\K_{k+1} = \K_{k} \cdot \abs{A_1}^{(\K_{k})^{k}}$.
It can be shown by induction that $\K_{k}$ is of order $\abs{A_1}^{\abs{A_1}^{\cdot^{\cdot^{\abs{A_1}^{2^{O(n)}}}}}}$
where the tower of exponential is of height $k$, and thus we do not need
to store counter values greater than $\K_*$. We define the abstraction mapping
$f' = \Abs(f)$ for $f:Q \to \nat$ as follows:

\begin{verse}
\noindent Let $k = \abs{\{q \mid f(q) = \omega\}}$ be the number of counters with
value $\omega$ in $f$. If there is a state $\hat{q}$ with finite value $f(\hat{q})$ greater 
than $\K_{n-k}$, then $f'(\hat{q}) = \omega$ and $f'$ agrees with $f$ on all states 
except $\hat{q}$ (i.e., $f'(q) = f(q)$ for all $q \neq \hat{q}$). Otherwise, $f' = f$.
\end{verse}

Actually, we define $\Abs(f)$ as the $n$th iterate of the above procedure.
Given $f$, $\hat{a}$, and $b$, let $\trans_H(f,\hat{a},b)$ be the uniform
distribution over the set of counting functions $f'$ such that
there exists an observation $\gamma \in \Obs_2$ such that $f' = \Abs(\Succ(f,\hat{a},b, \gamma))$
and $\Supp(f') \neq \emptyset$.

Note that the operators $\Succ(\cdot)$ and $\Abs(\cdot)$ are \emph{monotone},
that is $f \leq f'$ implies $\Abs(f) \leq \Abs(f')$ as well as
$\Succ(f,\hat{a},b, \gamma) \leq \Succ(f',\hat{a},b, \gamma)$ for all $\hat{a},b, \gamma$
(where $\leq$ is the componentwise order).

\paragraph{Objective.} 
Given $\target \subseteq Q$ and $Q_G \subseteq Q$ defining the reachability 
and safety objectives in~$G$, the objective in the game $H$ is a conjunction of positive reachability 
and almost-sure safety objectives, defined by $\Reach(\target_H)$ where\footnote{Recall that $Z$ is the set 
of states that are winning in~$G$ for player~$1$ in randomized strategies.} 
$\target_H = \{f \mid \Supp(f) \subseteq Z \land \forall q \in \Supp(f): f(q) = \omega\}
\cup \{f \mid \Supp(f) \cap \target \neq \emptyset\}$
and by $\Safe(\Good_H)$ where $\Good_H = \{f \mid \Supp(f) \subseteq Q_G\}$.

\paragraph{\bf Step 4 - Correctness argument.} 
First, assume that there exists a pure winning strategy $\straa$ for 
player~$1$ in $G$, and we show how to construct a winning strategy $\straa^H$ in $H$.
As we play the game in $G$ using $\straa$, we keep track of the exact number of indistinguishable
play prefixes ending up in each state. 
This allows to define the action $\hat{a}$ to play in $H$ by collecting the actions played by $\straa$
in all the indistinguishable play prefixes. Note that by monotonicity, the counting
abstractions in the corresponding play prefix of $H$ are at least as big (assuming $\omega > k$ for all $k\in \nat$), 
and thus the action $\hat{a}$ is well-defined. Since $\straa$ is winning,
$\target$ is reached with positive probability in $G$, and the set $Q \setminus Q_G$ is never hit,
and therefore a counting function $f \in \target_H$ (such that $\Supp(f) \cap \target \neq \emptyset$)
is reached with positive probability in $H$, and all plays remain safe in the set $\Good_H$.

Second, assume that there exists a winning strategy $\straa^H$ for 
player~$1$ in $H$, and we show how to construct a pure winning strategy $\straa$ in $G$.
We can assume that $\straa^H$ is pure memoryless.
Fix an arbitrary strategy $\strab$ for player~$2$ and consider the unfolding tree of the game $H$ 
when $\straa^H$ and $\strab$ are fixed (we get a tree and not just a path because the game is stochastic).
In this tree, there is a shortest path to reach $\target_H$ and this path has no loop 
since strategy $\straa^H$ is memoryless.
we show that the length of this path can be bounded, and that the bounds used in the
counting abstraction with $\omega$'s are faithful, showing that the strategy $\straa^H$ 
can be simulated in $G$ (in particular, we need to show that there are sufficiently many 
indistinguishable play prefixes in $G$ to simulate the action `functions' $\hat{a}$ played by $\straa^H$). 
More precisely, the bounds $\K_1, \K_2, \dots$ have been chosen in such a way
that counters with value $\omega$ keep a positive value until all counters
get value $\omega$. For example, when all counters but $k$ have value $\omega$,
it takes at most $(K_k)^k$ steps to get one more counter with value $\omega$
by the argument given in Step 3. Therefore, along the shortest path to $\target_H$,
either we reach a counting function $f$ with $f(q) = \omega$ for all $q \in \Supp(f)$,
or a counting function $f$ with $\Supp(f) \cap \target \neq \emptyset$. In the first case,
we can simulate $\straa^H$ in $G$ to this point, and then win by simulating a
winning randomized strategy, and in the second case the reachability objective  
$\Reach(\target)$ is achieved in $G$ with positive probability. Since the strategy $\straa^H$
ensures that the support of the counting functions never hit the set $Q \setminus Q_G$,
player~$1$ wins in $G$ for the positive reachability and almost-sure safety objectives.

\begin{theorem}\label{theo:complexity-one-sided-player-two}
In one-sided partial-observation stochastic games with player~1 perfect
and player~2 partial,
non-elementary size memory is sufficient for pure strategies to ensure positive 
probability reachability along with almost-sure safety for player~1; 
and hence for pure positive winning strategies for reachability objectives 
for player~1 non-elementary memory bound is optimal.
\end{theorem}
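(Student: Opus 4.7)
The plan is to prove the upper bound via the four-step construction already outlined, and obtain the lower bound directly from the non-elementary example given in Theorem~\ref{theo:non-elementary} (with a trivial safety set $Q_G = Q$, the example shows that even for plain positive reachability memory of non-elementary size is necessary).

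For the upper bound, I would first carry out Step~1 by fixing, in the original game $G$, the uniformly randomized safe strategy and analyzing the resulting POMDP for player~$2$ to compute the randomized-winning region $Z$; this is the only place a POMDP has to be solved and it gives the crucial fact that from $Z$, player~$1$ reaches $\target$ with positive probability within $2^{\abs{Q}}$ steps while staying inside $Q_G$. Next, following Step~2, I would formalize the observation that $\abs{A_1}^{k}$ indistinguishable play prefixes ending in a common state let player~$1$ simulate a randomized strategy for $k$ steps by dispatching a different pure action to each prefix, so $\Num = \abs{A_1}^{2^{\abs{Q}}}$ prefixes suffice to mimic the uniform safe strategy from any state of $Z$.

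Then I would build the perfect-observation game $H$ of Step~3 on counting functions $f: Q \to [\K_*] \cup \{\omega\}$, with the transition $\trans_H(f,\hat{a},b)$ branching uniformly over the possible player-$2$ observations, and the compound objective $\Reach(\target_H) \wedge \Safe(\Good_H)$. The heart of the argument is choosing the thresholds $\K_1, \K_2, \dots, \K_n$ via the recurrence $\K_{k+1} = \K_{k} \cdot \abs{A_1}^{(\K_{k})^{k}}$, so that once all but $k$ counters are $\omega$, any counter exceeding $\K_{n-k}$ can be safely promoted to $\omega$ without losing faithfulness of the abstraction; an inductive calculation shows $\K_k$ is a tower of exponentials of height $k$ and hence $\abs{H}$ is non-elementary. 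Since $H$ is a perfect-observation stochastic game with a conjunction of positive reachability and almost-sure safety, I would restrict the actions at each state to the safe ones and then apply the classical result of~\cite{Con92} to obtain a \emph{pure memoryless} winning strategy in $H$, which when pulled back to $G$ gives a pure strategy whose memory is at most $\abs{H}$, i.e.\ of non-elementary size.

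The correctness argument (Step~4) splits into two mappings. From $G$ to $H$: given a pure winning $\straa$ in $G$, monotonicity of $\Succ(\cdot)$ and $\Abs(\cdot)$ ensures that counting the exact number of indistinguishable prefixes in $G$ yields counting functions in $H$ that dominate those produced by the collected action $\hat{a}$, so the winning conditions transfer. From $H$ to $G$: fixing a pure memoryless winning $\straa^H$ and an adversary $\strab$, the shortest loop-free path in the unfolding to $\target_H$ is bounded by the length guaranteed by the $\K_k$'s, along which either we see $\Supp(f) \cap \target \neq \emptyset$ (positive reachability already achieved in $G$) or we reach a counting function where every state of $\Supp(f) \subseteq Z$ has value $\omega$, at which point player~$1$ has enough indistinguishable prefixes to switch to the simulation of the randomized winning strategy of Step~2 and win $\Reach(\target)$ with positive probability, all the while staying in $Q_G$.

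The main obstacle, and where I would spend most of the effort, is verifying that the threshold recurrence $\K_{k+1} = \K_k \cdot \abs{A_1}^{(\K_k)^{k}}$ is indeed sound for the $\omega$-abstraction: one must show that when $k$ counters are $\omega$ and the remaining ones exceed $\K_{n-k}$, no further refinement is needed during the $(\K_{n-k})^{k}$ steps it takes for another counter to reach $\omega$, since replacing a true value exceeding $\K_{n-k}$ by $\omega$ never makes the successor counting function wrongly non-zero nor wrongly non-$\omega$. Once this invariant is nailed down by induction on the number of $\omega$-counters, the reduction and the pull-back of a pure memoryless $H$-strategy compose to yield the non-elementary upper bound; combined with the lower bound inherited from Theorem~\ref{theo:non-elementary}, this establishes the claimed optimal bound.
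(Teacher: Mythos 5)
Your proposal follows the paper's own proof essentially verbatim: the same four-step structure (computing the randomized winning region $Z$ via a POMDP analysis, simulating randomized play with $\Num = \abs{A_1}^{2^{\abs{Q}}}$ indistinguishable prefixes, the counting-function game $H$ with the $\omega$-abstraction governed by the recurrence $\K_{k+1} = \K_k \cdot \abs{A_1}^{(\K_k)^k}$, and the two-directional correctness argument using monotonicity and the shortest-path bound), together with the lower bound inherited from Theorem~\ref{theo:non-elementary}. The approach and all key ingredients coincide with the paper's argument.
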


\subsection{Upper bound for almost-sure reachability}\label{sec:as-reach}
In this section we present the algorithm to solve the almost-sure 
reachability problem. 
We start with an example to illustrate that in general  strategies 
for almost-sure winning may be more complicated than positive winning
for reachability objectives.

\begin{figure}[!tb]
\hrule
\begin{center}
\def\fsize{\normalsize}

\begin{picture}(85,57)(0,-1)

{\fsize

\node[Nmarks=i, iangle=135](q1)(20,42){$q_1$}\nodelabel[NLangle=120, ExtNL=y, NLdist=3](q1){\sfrac{1}{2}} 
\node[Nmarks=i, iangle=225](q2)(20,10){$q_2$}\nodelabel[NLangle=240, ExtNL=y, NLdist=3](q2){\sfrac{1}{2}} 
\rpnode[Nmarks=n](r0)(10,26)(4,3.5){}

\node[Nmarks=n, Nmr=0](q3)(45,42){$q_3$}
\node[Nmarks=n, Nmr=0](q4)(45,10){$q_4$}

\rpnode[Nmarks=n](r1)(65,49)(4,3.5){}
\rpnode[Nmarks=n](r2)(65,35)(4,3.5){}
\rpnode[Nmarks=n](r3)(65,17)(4,3.5){}
\rpnode[Nmarks=n](r4)(65,3)(4,3.5){}

\node[Nmarks=r, rdist=0.8](qX)(80,26){\smiley}

\drawedge[ELpos=50, ELside=l, ELdist=1, curvedepth=0](q1,r0){$b,-$}
\drawedge[ELpos=50, ELside=r, ELdist=1, curvedepth=0](q2,r0){$b,-$}


\drawbpedge[ELpos=30, ELside=r, ELdist=.5](r0,180,14,q1,180,18){\sfrac{1}{2}}
\drawbpedge[ELpos=30, ELside=l, ELdist=.5](r0,180,14,q2,180,18){\sfrac{1}{2}}

\drawedge[ELpos=50, ELside=l, ELdist=.5, curvedepth=0](q1,q3){$a,-$}
\drawedge[ELpos=50, ELside=l, ELdist=.5, curvedepth=0](q2,q4){$a,-$}

\drawedge[ELpos=45, ELside=l, ELdist=.5, curvedepth=0](q3,r1){$-,a$}
\drawedge[ELpos=45, ELside=r, ELdist=1, curvedepth=0](q3,r2){$-,b$}
\drawedge[ELpos=45, ELside=l, ELdist=.5, curvedepth=0](q4,r3){$-,a$}
\drawedge[ELpos=45, ELside=r, ELdist=1, curvedepth=0](q4,r4){$-,b$}

\drawbpedge[ELpos=10, ELside=r, ELdist=1, curvedepth=-8](r1,150,18,q1,45,12){\numfrac{1}}
\drawbpedge[ELpos=10, ELside=l, ELdist=1, curvedepth=8](r4,210,18,q2,315,12){\numfrac{1}}
\drawbpedge[ELpos=30, ELside=r, ELdist=.5, curvedepth=8](r2,260,12,q1,332,30){\sfrac{1}{2}}
\drawbpedge[ELpos=40, ELside=l, ELdist=.5, curvedepth=0](r2,280,9,qX,165,8){\sfrac{1}{2}}
\drawbpedge[ELpos=30, ELside=l, ELdist=.5, curvedepth=-8](r3,100,12,q2,28,35){\sfrac{1}{2}}
\drawbpedge[ELpos=40, ELside=r, ELdist=.5, curvedepth=0](r3,80,9,qX,195,8){\sfrac{1}{2}}

\drawloop[ELpos=50, ELside=l, ELdist=1, ELside=l,loopCW=y, loopdiam=7, loopangle=90](qX){}



}
\end{picture}
 
\end{center}
\hrule
\caption{{\bf Almost-sure winning strategy may require more memory than positive winning strategies.}
A one-sided reachability game where player~$1$ (round states) has perfect 
observation, player~$2$ (square states) is blind. Player~$1$ has a pure almost-sure 
winning strategy, but no pure belief-based memoryless strategy is almost-sure winning. 
However, player~$1$ has a pure belief-based memoryless strategy that is positive winning.
\label{fig:belief-not-sufficient}
}

\end{figure}
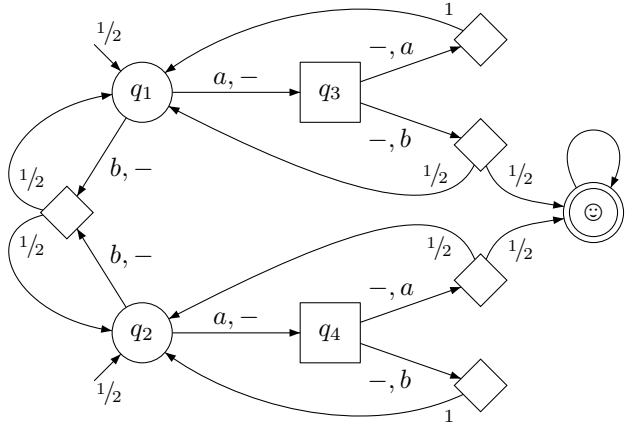

\begin{example}
{\bf Almost-sure winning strategy may require more memory than positive winning strategies.}
The example of \figurename~\ref{fig:belief-not-sufficient} illustrates
a key insight in the algorithmic solution of almost-sure reachability
games where player~$1$
has perfect observation and player~$2$ has partial observation (he is blind in this case).
For player~$1$, playing $a$ in $q_1$ and in $q_2$ is a positive winning
strategy to reach $q_{\smiley}$. This is because from $\{q_1,q_2\}$,
the belief of player~$2$ becomes $\{q_3,q_4\}$ and no matter the action 
chosen by player~$2$, the state $q_{\smiley}$ is reached with positive 
probability from either $q_3$ or $q_4$. 

However, always playing $a$ when the belief of player~$2$ is $\{q_1,q_2\}$
is not almost-sure winning because if player~$2$ chooses always the same 
action (say $a$) in $\{q_3,q_4\}$, then with probability $\frac{1}{2}$ the state 
$q_{\smiley}$ is not reached. Intuitively, this happens because player~$2$
can guess that the initial state is, say $q_1$, and be right with positive 
probability (here $\frac{1}{2}$).
To be almost-surely winning, player~$1$
needs to alternate actions $a$ and $b$ when the belief is $\{q_1,q_2\}$.
The action $b$ corresponds to the \emph{restart phase} of the strategy,
i.e. even assuming that player~$2$'s belief would be, say $\{q_1\}$,
the action $b$ ensures that $q_{\smiley}$ is reached with positive 
probability by make the belief to be $\{q_1,q_2\}$.
\ee
\end{example}

\smallskip\noindent{\em Notation.} We will consider $\target$ as the
set of target states and without loss of generality assume that all
target states are absorbing. 
In this section the belief of player~2 represents the set of states that 
can be with positive probability.
Given strategies $\straa$ and $\strab$ for player~1 and player~2, respectively,
a state $q$ and a set $K \subseteq Q$ we denote by $\Prb_{q,K}^{\straa,\strab}(\cdot)$
the probability measure over sets of paths when the players play the strategies,
the initial state is $q$ and the initial belief for player~2 is $K$.

In rest of this section we omit the subscript $G$ (such as we write $\Strab^O$ 
instead of $\Strab^O_G$) as the game is clear from the context.

\smallskip\noindent{\em Bad states.}
Let $\ov{\target}=Q \setminus \target$. 
Let 
\[
Q_{B} =
\set{ q\in Q \mid \forall \straa \in \Straa^P \cdot \exists \strab \in \Strab^O: \  
\Prb_{q,\{q\}}^{\straa,\strab}(\Safe(\ov{\target})) >0
}
\]
be the set of states $q$ such that given the initial belief of player~2 is 
the singleton $\{q\}$, for all pure strategies for player~1 there is a 
counter observation-based strategy for player~2 to ensure that 
$\Safe(\ov{\target})$ is satisfied with positive probability.
We will consider $Q_B$ as the set of \emph{bad} states.

\smallskip\noindent{\em Property of an almost-sure winning strategy.}
Consider a pure almost-sure winning strategy for player~1 that ensures against 
all observation-based strategies of player~2 that $\target$ is reached with 
probability~1. 
Then we claim that the belief of player~2 must never intersect with $Q_B$: 
otherwise if the belief intersects with $Q_B$, let $q$ be the 
state in $Q_B$ that is reached with positive probability.
Then player~2 simply assumes that the current state is $q$, updates the 
belief to $\{q\}$, and the guess is correct with positive probability. 
Given the belief is $\{q\}$, since $q \in Q_B$, it follows that
against all player~1 pure strategies  there is an observation-based strategy 
for player~2 to ensure with positive probability that $\target$ is not reached.
This contradicts that the strategy for player~1 is almost-sure winning.

\smallskip\noindent{\em Transformation.} We transform the game by changing all
states in $Q_B$ as absorbing. 
Let $Q_G=Q \setminus Q_B$.
By definition we have 
\[
Q_G =
\set{ q\in Q \mid \exists \straa \in \Straa^P \cdot \forall \strab \in \Strab^O: 
\Prb_{q,\{q\}}^{\straa,\strab}(\Reach(\target)) =1 
}.
\]
By the argument above that for a pure almost-sure winning
strategy the belief must never intersect with $Q_B$ we have 
\[
\begin{array}{rcl}
Q_G  & = &
\{\ q\in Q \mid \exists \straa \in \Straa^P\cdot \forall \strab \in \Strab^O: 
\Prb_{q,\{q\}}^{\straa,\strab}(\Reach(\target)) =1 \\
& & \qquad
\text{ and } 
\Prb_{q,\{q\}}^{\straa,\strab}(\Safe(Q\setminus Q_B)) =1 
\ \}.
\end{array}
\]
Let 
\[
\begin{array}{rcl}
Q_G^p & = &
\{\ q\in Q \mid \exists \straa \in \Straa^P\cdot \forall \strab \in \Strab^O: 
\Prb_{q,\{q\}}^{\straa,\strab}(\Reach(\target)) >0 \\
 & & \qquad \text{ and }
\Prb_{q,\{q\}}^{\straa,\strab}(\Safe(Q\setminus Q_B)) =1 
\ \}.
\end{array}
\]
We now show that $Q_G^p=Q_G$. The inclusion $Q_G \subseteq Q_G^p$ is 
trivial, and we now show the other inclusion $Q_G^p \subseteq Q_G$. 
Observe that in $Q_G^p$ we have  the property of positive reachability and 
almost-sure safety and we will use strategies for positive reachability and 
almost-sure safety to construct an almost-sure winning strategy.
We consider $Q_B$ as the set of unsafe states (i.e., $Q_G$ is the safe set), 
and $\target$ as the target and invoke the results of the Section~\ref{sec:positive+sure}: 
for all $q \in Q_G^p$ there is a pure finite-memory strategy  $\straa_q$ of 
memory at most $B$ (where $B$ is non-elementary) to ensure that from $q$, 
within $N=2^{O(B)}$ steps, $\target$ is reached with probability at least some 
positive constant $\eta_q>0$, even when the initial belief for player~2 is 
$\{q\}$. 
Let $\eta=\min_{q \in Q_G^p} \eta_q$.
A pure finite-memory almost-sure winning strategy is described below. 
The strategy plays in two-phases: (1) the \emph{Restart} phase; and 
(1) the \emph{Play} phase.
We define them as follows:
\begin{enumerate}
\item \emph{Restart phase.} Let the current state be $q$, 
assume that the belief for player~2 is $\{q\}$ 
and goto the Play phase with strategy $\straa_q$ that ensures that 
$Q_G$ is never left and $\target$ is reached within $N$ steps with probability 
at least $\eta>0$.
\item \emph{Play phase.} Let $\straa$ be the strategy defined in the 
Restart phase, then play $\straa$ for $N$ steps and go back to the 
Restart phase.
\end{enumerate}
The strategy is almost-sure winning as for all states in $Q_G^p$ and 
for all histories, in every $N$ steps the probability to reach $\target$ 
is at least $\eta>0$, and  $Q_G$ (and hence $Q_G^p$) is never left.
Thus probability to reach $\target$ in $N \cdot \ell$ steps, 
for $\ell \in \Nats$, is at least $1- (1-\eta)^\ell$ and this is~1 as 
$\ell \to \infty$.
Thus the desired result follows and we obtain the almost-sure winning strategy.

\smallskip\noindent{\bf Memory bound and algorithm.} 
The memory upper bound for the almost-sure winning strategy constructed is as follows:
$\abs{Q} \cdot B + \log N$, we require $\abs{Q}$ strategies of Section~\ref{sec:positive+sure} of memory
size $B$ and a counter to count up to $N=2^{O(B)}$ steps.
We now present an algorithm for almost-sure reachability that works in time 
$2^{\abs{Q}} \times O($\PosReachSafe$)$, where \PosReachSafe\ denote the complexity 
to solve the positive reachability along with almost-sure safety problem.
The algorithm enumerates all subset $Q' \subseteq Q$ and 
then verify that forall $q \in Q'$ player~1 can ensure to reach $\target$ with 
positive probability staying safe in $Q'$ with probability~1.
In other words the algorithm enumerates all subsets $Q' \subseteq Q$ to 
obtain the set $Q_G$.
The enumeration is exponential and the verification requires solving the 
positive reachability with almost-sure safety problem.


\begin{theorem}
In one-sided partial-observation stochastic games with player~1 perfect
and player~2 partial,
non-elementary size memory is sufficient for pure strategies to ensure almost-sure 
reachability for player~1; and hence for pure almost-sure winning strategies for 
reachability objectives for player~1 non-elementary memory bound is optimal.
\end{theorem}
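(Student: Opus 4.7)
The plan is to establish the upper bound, as the non-elementary lower bound follows from Theorem~\ref{theo:non-elementary} (which treats even positive reachability). First, I would isolate the set
\[
Q_B = \{\, q \in Q \mid \forall \straa \in \Straa^P \cdot \exists \strab \in \Strab^O : \Prb_{q,\{q\}}^{\straa,\strab}(\Safe(\ov{\target})) > 0 \,\}
\]
of ``bad'' states, i.e.\ states from which player~2, knowing the current state exactly, can with positive probability prevent reaching $\target$ against any pure strategy. The crucial observation is that during any pure almost-sure winning play, the belief of player~2 must never intersect $Q_B$: otherwise player~2 can simply \emph{guess} that the real state is the $Q_B$-element, update his belief to a singleton, and thereafter apply his positive-probability safe strategy, all of this happening on a positive-probability branch of the play.

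Given this, the winning region $Q_G$ coincides with
\[
Q_G^{p} = \{\, q \in Q \mid \exists \straa \in \Straa^P \cdot \forall \strab \in \Strab^O : \Prb_{q,\{q\}}^{\straa,\strab}(\Reach(\target)) > 0 \ \wedge\  \Prb_{q,\{q\}}^{\straa,\strab}(\Safe(Q \setminus Q_B)) = 1 \,\}.
\]
The inclusion $Q_G \subseteq Q_G^{p}$ is immediate; the reverse requires a \emph{recharging} construction. By Theorem~\ref{theo:complexity-one-sided-player-two}, for each $q \in Q_G^{p}$ there is a pure finite-memory strategy $\straa_q$ of non-elementary size $B$ that, starting from belief $\{q\}$, keeps the play in $Q \setminus Q_B$ with probability~$1$ and reaches $\target$ with probability at least some $\eta_q > 0$ within $N = 2^{O(B)}$ steps. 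Let $\eta = \min_q \eta_q > 0$. I would assemble an almost-sure winning strategy in two alternating phases: a \emph{Restart} phase that, observing the current actual state~$q$, selects the positive-reachability-cum-safety strategy $\straa_q$ (this is legitimate because player~1 has perfect observation), followed by a \emph{Play} phase that executes $\straa_q$ for exactly $N$ steps before returning to Restart.

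The correctness is a standard Borel--Cantelli type argument: during each block of $N$ steps the conditional probability of reaching $\target$ is at least $\eta$, so the probability of never reaching $\target$ is bounded by $(1-\eta)^{\ell} \to 0$ as $\ell \to \infty$; meanwhile the almost-sure safety clauses chain together to keep the play inside $Q_G^{p}$ throughout. The memory needed is at most $\abs{Q} \cdot B + \log N$: one needs to store, for each possible Restart state, the transducer of the corresponding $\straa_q$, together with a modulo-$N$ step counter. Since $B$ is already non-elementary and $\log N$ is only linear in $B$, the overall bound remains non-elementary, matching the lower bound of Theorem~\ref{theo:non-elementary}.

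The main obstacle is the equality $Q_G = Q_G^{p}$. The subtle point is that the Restart phase resets player~2's \emph{effective} belief to $\{q\}$ only in the sense relevant to the subgame, not in the game-theoretic sense, since player~2 remembers the past. One must argue that the positive-reachability guarantees of $\straa_q$ hold \emph{uniformly} over all observation-based extensions of player~2's strategy, which is where the perfect observation of player~1 and the absorbing nature of $\target$-states are used. Once this is in place, the algorithm follows by enumerating candidate sets $Q' \subseteq Q$ for $Q_G$ (an exponential outer loop) and verifying each via the positive-reachability-with-almost-sure-safety procedure of Section~\ref{sec:positive+sure}.
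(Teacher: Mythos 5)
Your proposal follows essentially the same route as the paper's own proof: the same set $Q_B$ of bad states, the same guessing argument showing the belief of player~2 must avoid $Q_B$, the identity $Q_G = Q_G^p$, the Restart/Play recharging scheme built from the positive-reachability-with-almost-sure-safety strategies of Section~\ref{sec:positive+sure}, the $(1-\eta)^{\ell}$ convergence argument, the memory bound $\abs{Q}\cdot B + \log N$, and the subset-enumeration algorithm. The subtlety you flag about the Restart phase only ``resetting'' player~2's belief in an effective sense is handled in the paper exactly as you suggest, by requiring the component strategies $\straa_q$ to win against the worst-case singleton belief $\{q\}$ uniformly over all observation-based counter-strategies.
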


\begin{corollary} 
In one-sided partial-observation stochastic games with player~1 perfect
and player~2 partial,
the problem of deciding the existence of pure almost-sure and positive winning
strategies for reachability objectives for player~1 can be solved in 
non-elementary time complexity.  
\end{corollary}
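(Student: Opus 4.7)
The plan is to read off the running times from the algorithms already developed in Sections~\ref{sec:positive+sure} and~\ref{sec:as-reach} and verify that both fall within non-elementary time, rather than introducing any new constructions.

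I will first handle positive reachability by instantiating the construction of Section~\ref{sec:positive+sure} with the trivial safety set $Q_G = Q$. This reduces the problem to solving the perfect-observation game $H$ whose states are counting functions $f : Q \to [\K_*] \cup \{\omega\}$, so $\abs{H} \leq (\K_* + 2)^{\abs{Q}}$; since $\K_*$ is a tower of exponentials of height $O(\abs{Q})$ in $\abs{A_1}$, this bound is non-elementary in $\abs{Q}$ but finite. The resulting game carries a conjunction of positive-reachability and almost-sure-safety objectives on a perfect-information deterministic game, and such games admit pure memoryless winning strategies that can be computed in time polynomial in $\abs{H}$~\cite{Con92,EmersonJutla91}. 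This already gives a non-elementary upper bound for the pure positive winning decision problem.

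For almost-sure reachability I will follow the enumeration outlined at the end of Section~\ref{sec:as-reach}. The algorithm enumerates every subset $Q' \subseteq Q$ as a candidate for $Q_G$, and for each candidate invokes the algorithm of Section~\ref{sec:positive+sure} with target $\target$ and safety set $Q'$ to check that every $q \in Q'$ admits a pure strategy that reaches $\target$ with positive probability while staying almost-surely in $Q'$. The equality $Q_G = Q_G^p$ established in Section~\ref{sec:as-reach} ensures that the correct winning set is identified this way. Once $Q_G$ is known, the two-phase \emph{Restart}/\emph{Play} construction of Section~\ref{sec:as-reach} exhibits an explicit almost-sure winning strategy on $Q_G$. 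The total running time is $2^{\abs{Q}} \cdot O(\PosReachSafe)$, where $\PosReachSafe$ is the non-elementary complexity of the subroutine established above; multiplying by an exponential factor preserves non-elementary complexity.

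The step I expect to require the most care is not algorithmic but rather the complexity accounting: one must make sure that the parameters $\K_1, \K_2, \ldots, \K_*$ used in the counting abstraction are honestly bounded by a tower of exponentials of height $O(\abs{Q})$, so that the size of $H$ (and hence the cost of each subproblem) genuinely fits into the non-elementary regime. This bound has already been worked out in Section~\ref{sec:positive+sure}, and the outer enumeration over $2^{\abs{Q}}$ subsets is dwarfed by it, so no further combinatorial argument is needed and the corollary follows.
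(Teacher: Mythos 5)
Your proposal is correct and follows essentially the same route as the paper: positive winning is decided by solving the non-elementary-size counting-abstraction game of Section~\ref{sec:positive+sure} with the trivial safety set $Q_G = Q$, and almost-sure winning by the $2^{\abs{Q}} \cdot O(\PosReachSafe)$ subset-enumeration algorithm of Section~\ref{sec:as-reach}, with the observation that an exponential outer loop does not leave the non-elementary regime. The only minor slip is calling $H$ deterministic (its transitions are uniform distributions over counting functions), but this does not affect the complexity accounting since perfect-observation stochastic games with these objectives are still solvable in time polynomial in $\abs{H}$.
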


\begin{comment}

\begin{enumerate}

\item EXPTIME upper bound: enumerate all subset $Q' \subseteq Q$ and 
verify that forall $q \in Q'$ player~1 can ensure to reach $T$ with 
positive probability staying safe in $Q \setminus Q'$ with probability~1.
The enumeration is exponential and the verification is exponential 
by previous section.

\item EXPTIME lower bound: lower bound POMDP with almost-safety.

\end{enumerate}

\subsection{Sure safety}
This is PTIME-complete. We can treat player~2 as perfect-information. 
Because if there is a path with positive probability to bad state, 
then player~2 can just play the correct actions even being blind. 
So this is perfect-information safety and we have PTIME-completeness.

\subsection{Positive Safety}

\smallskip\noindent{\bf Main idea.}
The belief based construction: given belief $B$, player~1 chooses action $a$,
then player~2 chooses action~$b$, and we come to $(B,a,b)$. 
If goes out of safe set $F$, then it is a bad state, and belief updated to 
$\Post_{a,b}(B) \cap F$ and observation. Goal to visit bad states finitely 
often.
If bad states visited finitely often, then after some point on good set never 
left, that is no probability to go out of $F$. Hence positive safety.
If bad visited infinitely often, then infinitely often positive probability to
go out, and hence out with probability~1. 
We need to solve exponential size turn-based coB\"uchi game.

\mynote{Open: To be done}

\mynote{Open: in cases above with EXPTIME bound: memory lower bounds.}

\end{comment}

\section{Finite-memory Strategies for Two-sided Games}
In this section we show the existence of finite-memory pure strategies 
for positive and almost-sure winning in two-sided games.

\subsection{Positive reachability with almost-sure safety}
Let $\target$ be the set of target states for reachability 
(such that all the target states are absorbing) and $Q_G$ be the set of good states
for safety with $\target \subseteq Q_G$.
Our goal is to show that for pure strategies to ensure positive probability 
reachability to $\target$ and almost-sure safety for $Q_G$, 
finite-memory strategies suffice.
Note that with $Q_G$ as the whole state space we obtain the result for positive
reachability as a special case.

\begin{lemma}\label{lemm_pos_reach_safe_gen}
For all games $G$, for all $q \in Q$, if there exists a pure 
strategy $\straa \in \Straa^O\cap \Straa^P$ such that for all 
strategies $\strab \in \Strab^O$ of player~2  we have 
\[
\Prb_{q}^{\straa,\strab}(\Reach(\target))>0 \quad \text{ and } \quad
\Prb_{q}^{\straa,\strab}(\Safe(Q_G))=1; 
\] 
then there exists a finite-memory pure 
strategy $\straa^f \in \Straa^O\cap \Straa^P$ such that for all 
strategies $\strab \in \Strab^O$ of player~2 we have 
\[
\Prb_{q}^{\straa^f,\strab}(\Reach(\target))>0 \quad \text{ and } \quad
\Prb_{q}^{\straa^f,\strab}(\Safe(Q_G))=1. 
\] 
\end{lemma}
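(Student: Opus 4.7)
The plan is a compactness (K\"onig-style) argument followed by a phased finite-memory construction. First I would establish a uniform quantitative bound: there exist $N \in \nat$ and $\epsilon > 0$ such that for every pure observation-based $\strab \in \Strab^O$, under $\straa$ against $\strab$ the set $\target$ is reached within the first $N$ steps with probability at least $\epsilon$. Restricting to pure $\strab$ is harmless since, once $\straa$ is fixed, the game reduces to a POMDP for player~$2$ in which pure strategies suffice. If no such pair $(N,\epsilon)$ existed, then for each $n$ one could choose $\strab_n \in \Strab^O$ for which the probability of reaching $\target$ within $n$ steps is less than $1/n$; since $A_2$ and the set of length-$k$ observation prefixes are finite, a diagonal extraction picks actions agreeing with infinitely many $\strab_n$ at each observation prefix, yielding a limit $\strab^* \in \Strab^O$ with $\Prb_q^{\straa,\strab^*}(\Reach(\target)) = 0$, contradicting the hypothesis.

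Next, I would construct $\straa^f$ as a finite transducer that plays in \emph{phases} of a uniform length~$N^\star$. Let $W$ be the set of states from which some pure observation-based strategy ensures both positive reachability of $\target$ and almost-sure safety for $Q_G$; then $W$ is finite and $q \in W$ by hypothesis. For each $q' \in W$ fix such a witness $\straa_{q'}$, apply the first step to it to obtain constants $(N_{q'}, \epsilon_{q'})$, and set $N^\star = \max_{q' \in W} N_{q'}$ and $\epsilon^\star = \min_{q' \in W} \epsilon_{q'}$. The memory of $\straa^f$ stores a \emph{seed} $q' \in W$ together with an observation trace of length at most $N^\star$. Within a phase, $\straa^f$ plays the $N^\star$-prefix of $\straa_{q'}$, which is a finite object; when a phase ends (either $\target$ is hit, in which case we are done since target states are absorbing, or $N^\star$ steps elapse), $\straa^f$ starts a fresh phase whose seed is some state of $W$ compatible with player~$1$'s current observation.

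The principal obstacle is the \emph{closure} step: at the end of an unsuccessful phase, player~$1$ must still find a seed in $W$ compatible with her current observation. The naive choice of witness $\straa_{q'}$ may fail because the positive reachability from $q'$ could be contributed by branches not passing through the reached state. Resolving this requires picking witnesses whose every reachable successor again admits a winning continuation with the same property; concretely, one strengthens $W$ to the greatest set $W^\prime$ of states admitting a pure observation-based winning strategy all of whose reachable states against arbitrary $\strab \in \Strab^O$ still lie in $W^\prime$, and shows $q \in W^\prime$ by observing that the original $\straa$ restricted to $W^\prime$-reachable histories is itself such a witness (using almost-sure safety of $\straa$ for $Q_G$ together with its positive reach property to close the fixed point). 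Once closure is in place, every phase reaches $\target$ with probability at least $\epsilon^\star > 0$ against any $\strab$, so over countably many phases the overall reach probability is $1$, while almost-sure safety is inherited phase-by-phase from the $\straa_{q'}$'s.
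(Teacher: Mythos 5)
Your first step (the compactness/K\"onig extraction of a horizon $N$ such that $\straa$ reaches $\target$ within $N$ steps with positive probability against every $\strab\in\Strab^O$) is exactly the paper's argument, including the reduction to pure counter-strategies via the POMDP observation. The problem is in your second step, and it is not merely a matter of overcomplication: the phased ``recharging'' construction is aimed at a conclusion that does not follow from the hypothesis. You end by claiming that ``over countably many phases the overall reach probability is $1$,'' i.e.\ almost-sure reachability. But the hypothesis only gives positive reachability together with almost-sure safety, and that does not imply almost-sure reachability (take a state $q$ with a probabilistic branch to $\target$ and a branch to a safe absorbing non-target state: the hypothesis holds, almost-sure reachability fails). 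Consequently the closure step on which your scheme rests must fail, and indeed it does: your set $W'$ --- states admitting a witness all of whose reachable successors again lie in $W'$ --- need not contain $q$, precisely because $\straa$ may reach with positive probability a state from which \emph{no} strategy achieves positive reachability of $\target$. So ``the original $\straa$ restricted to $W'$-reachable histories is itself such a witness'' is false in general, and after an unsuccessful phase there may be no admissible seed. (A second, independent problem: in a two-sided game the seed is chosen from player~$1$'s observation, but the true state may be a different element of her belief, and the safety guarantee of $\straa_{q'}$ is only from $q'$.)

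The fix is to drop the recharging entirely, which is what the paper does. Once you have the horizon $N$, positive reachability is already secured by the first $N$ steps of $\straa$ (the event $\Reach^{\leq N}(\target)$ depends only on that finite prefix of the strategy, which is a finite object). After step $N$ you do not need to keep trying to reach $\target$; you only need to preserve $\Safe(Q_G)$ with probability~$1$, and for safety, almost-sure and sure winning coincide and finite-memory (belief-based) strategies suffice. So $\straa^f$ is: play $\straa$ for $N$ steps, then switch to a finite-memory almost-sure safety strategy for $Q_G$. Safety holds on $[0,N]$ because $\straa$ is almost-surely safe, and on $[N,\infty)$ by the switched-to strategy; positive reachability holds from the first phase. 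Your uniform $\epsilon^\star$ is true but unnecessary for this.
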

 
We prove the result with the following two claims. 
We fix a (possibly infinite memory) strategy $\straa \in \Straa^O \cap \Straa^P$
such that for all 
strategies $\strab \in \Strab^O$ of player~2  we have 
\[
\Prb_{q}^{\straa,\strab}(\Reach(\target))>0 \quad \text{ and } \quad
\Prb_{q}^{\straa,\strab}(\Safe(Q_G))=1. 
\] 

\smallskip\noindent{\bf Claim~1.} If there exists $N \in \Nats$ such that 
for all 
strategies $\strab \in \Strab^O$ of player~2  we have 
\[
\Prb_{q}^{\straa,\strab}(\Reach^{\leq N}(\target))>0 \quad \text{ and } \quad
\Prb_{q}^{\straa,\strab}(\Safe(Q_G))=1 
\] 
where $\Reach^{\leq N}$ denotes reachability within first $N$-steps;
then there exists a finite-memory pure 
strategy $\straa^f \in \Straa^O\cap \Straa^P$ such that for all 
strategies $\strab \in \Strab^O$ of player~2 we have 
\[
\Prb_{q}^{\straa^f,\strab}(\Reach(\target))>0 \quad \text{ and } \quad
\Prb_{q}^{\straa^f,\strab}(\Safe(Q_G))=1. 
\] 
\begin{proof}
The finite-memory strategy $\straa^f$ is as follows: play like the 
strategy $\straa$ for the first $N$-steps, and then switch to a strategy
to ensure $\Safe(Q_G)$ with probability~1.
The strategy ensure positive probability reachability to $\target$ as for 
the first $N$-steps it plays like $\straa$ and $\straa$ already ensures
positive reachability within $N$-steps.
Moreover, since $\straa$ ensures $\Safe(Q_G)$ with probability~1, it must also 
ensure $\Safe(Q_G)$ for the first $N$-steps, and since $\straa^f$ after the
first $N$-steps only plays a strategy for almost-sure safety, it follows that 
$\straa^f$ guarantees $\Safe(Q_G)$ with probability~1.
The strategy $\straa^f$ is a finite-memory strategy since it needs to play 
like $\straa$ for the first $N$-steps (which requires finite-memory) and
then it switches to an almost-sure safety strategy for which exponential size
memory is sufficient (for safety objective almost-sure winning coincides
with sure winning and then belief-based strategies are sufficient; 
see~\cite{CD10b} for details).
\qed
\end{proof}

\smallskip\noindent{\bf Claim~2.} There exists $N \in \Nats$ such that 
for all 
strategies $\strab \in \Strab^O$ of player~2  we have 
\[
\Prb_{q}^{\straa,\strab}(\Reach^{\leq N}(\target))>0 \quad \text{ and } \quad
\Prb_{q}^{\straa,\strab}(\Safe(Q_G))=1 
\] 
where $\Reach^{\leq N}$ denotes reachability within first $N$-steps.

\begin{proof}
The proof is by contradiction. Towards contradiction, assume that 
for all $n \in \nat$, there exists a strategy $\strab_n \in \Strab^O$
such that either $\Prb_{q}^{\straa,\strab_n}(\Reach^{\leq n}(\target)) = 0$ or 
$\Prb_{q}^{\straa,\strab_n}(\Safe(Q_G)) < 1$.

If for some $n \geq 0$ we have $\Prb_{q}^{\straa,\strab_n}(\Safe(Q_G)) < 1$,
then we get a contradiction with the fact that $\Prb_{q}^{\straa,\strab}(\Safe(Q_G))=1$
for all $\strab \in \Strab^O$. Hence $\Prb_{q}^{\straa,\strab_n}(\Safe(Q_G)) = 1$
for all $n \in \nat$, and therefore $\Prb_{q}^{\straa,\strab_n}(\Reach^{\leq n}(\target)) = 0$
for all $n \in \nat$. Equivalently, all play prefixes 
of length at most $n$ and compatible with $\straa$ and $\strab_n$ avoid to hit $\target$, and thus 
$\Prb_{q}^{\straa,\strab_n}(\Safe^{\leq n}(Q \setminus \target)) = 1$ for all $n \in \nat$.
Note that we can assume that each strategy $\strab_n$ is pure because once 
the strategy $\straa$ of player~$1$ is fixed we get a POMDP for player~$2$, and 
for POMDPs pure strategies are as powerful as randomized 
strategies~\cite{CDGH10} (in~\cite{CDGH10} the result was shown for finite 
POMDPs with finite action set, but the proof is based on induction on the 
action set and also works for countably infinite POMDPs). 

Using a simple extension of K\"onig's Lemma~\cite{Konig36}, we construct a strategy $\strab' \in \Strab^O$
such that $\Prb_{q}^{\straa,\strab'}(\Safe(Q \setminus \target)) = 1$.
The construction is as follows. In the initial state $q$, there is an action $b_0 \in A_2$
which is played by infinitely many strategies $\strab_n$. We define $\strab'(q) = b_0$
and let $P_0$ be the set $\{\pi_n \mid \pi_n(q) = b_0\}$. Note that $P_0$ is an infinite set.
We complete the construction as follows.
Having defined $\strab'(\rho)$ for all play prefixes $\rho$ of length at most $k$, and given the
infinite set $P_k$, we define $\strab'(\rho')$ for all play prefixes $\rho'$ of length $k+1$
and the infinite set $P_{k+1}$ as follows. Consider the tuple $b_{\strab_n} \in A_2^m$ of actions
played by the strategy $\strab_n \in P_{k}$ after the $m$ prefixes $\rho'$ of length $k+1$.
Clearly, there exists an infinite subset $P_{k+1}$ of $P_{k}$ in which all strategies play
the same tuple $b_{k+1}$. We define $\strab(\rho')$ using the tuple $b_{k+1}$. 
This construction ensures that no play prefix of length $k+1$ compatible with $\straa$ and $\strab'$
hit the set $\target$, since $\strab'$ agrees with some strategy $\pi_n$ for arbitrarily large $n$.
Repeating this inductive argument yields a strategy $\strab'$ such that 
$\Prb_{q}^{\straa,\strab'}(\Safe(Q \setminus \target)) = 1$, in contradiction with the fact 
that $\Prb_{q}^{\straa,\strab}(\Reach(\target))>0$
for all $\strab \in \Strab^O$. Hence, the desired result follows.
\qed
\end{proof}

The above two claims establish Lemma~\ref{lemm_pos_reach_safe_gen} and gives
the following result.

\begin{theorem}
In two-sided partial-observation stochastic games finite memory is sufficient 
for pure strategies to ensure positive probability reachability along with almost-sure safety
for player~1; 
and hence for pure positive winning strategies for reachability objectives 
finite memory is sufficient and non-elementary memory is required in general for player~1.
\end{theorem}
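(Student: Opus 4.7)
The plan is to observe that the theorem is essentially a corollary of Lemma~\ref{lemm_pos_reach_safe_gen}, which is already stated for arbitrary two-sided games $G$ (nothing in its statement or in the two claims that establish it restricted player~$2$ to have perfect observation). So the first step is to simply invoke Lemma~\ref{lemm_pos_reach_safe_gen} directly in the two-sided setting to obtain finite-memory sufficiency for the conjunction of positive reachability and almost-sure safety. For the statement about positive reachability alone, I would instantiate $Q_G = Q$, which makes the almost-sure safety constraint vacuous, and then Lemma~\ref{lemm_pos_reach_safe_gen} yields finite memory for pure positive reachability as a special case.

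Next I would spell out why the proof of Lemma~\ref{lemm_pos_reach_safe_gen} did not exploit one-sidedness. The heart of Claim~2 is a K\"onig-style compactness argument: fixing the (possibly infinite-memory) pure winning strategy $\straa$ of player~$1$ turns $G$ into a player-$2$ POMDP, in which pure strategies are as powerful as randomized ones~\cite{CDGH10}. If no uniform horizon $N$ existed, one could extract for each $n$ a pure observation-based response $\strab_n$ avoiding $\target$ for at least $n$ steps, and by iteratively selecting an infinite sub-family agreeing on the action played after every prefix one would build a limit strategy $\strab' \in \Strab^O$ witnessing $\Prb_q^{\straa,\strab'}(\Safe(Q \setminus \target)) = 1$, a contradiction. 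This construction is agnostic to the observation structure of player~$1$, hence applies verbatim in the two-sided case.

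Then Claim~$1$ gives the finite-memory witness: play $\straa$ for the first $N$ steps and then switch to a strategy that ensures $\Safe(Q_G)$ almost-surely. The first phase uses a bounded-horizon prefix of $\straa$, which can be encoded with finite memory (at most a tree of depth $N$ over observations), and the second phase only needs a sure-winning strategy for the safety objective $\Safe(Q_G)$. Here I would cite the known fact that for safety almost-sure winning coincides with sure winning and belief-based (hence finite-memory) pure strategies suffice~\cite{CD10b}, which is valid for two-sided games as well. Combining the two phases yields a finite-memory pure strategy that meets both requirements.

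Finally, for the non-elementary lower bound on the memory of pure positive winning strategies, I would appeal to Theorem~\ref{theo:non-elementary}: since every one-sided game with player~$1$ perfect and player~$2$ partial is a two-sided game (with trivial observation partition for player~$1$), the non-elementary family $G_n$ exhibited there is already a family of two-sided games, and the lower bound transfers immediately. The main obstacle is not really algorithmic but conceptual: making sure that each step of the proof of Lemma~\ref{lemm_pos_reach_safe_gen}, in particular the K\"onig-style limit construction and the appeal to determinacy of POMDPs for pure versus randomized strategies, is genuinely parameter-free with respect to player~$1$'s observation partition; once this is verified the theorem follows without any new combinatorial work.
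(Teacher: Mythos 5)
Your proposal follows essentially the same route as the paper: the theorem is obtained from Lemma~\ref{lemm_pos_reach_safe_gen} (proved via the bounded-horizon compactness argument of Claim~2 and the truncate-then-switch-to-safety construction of Claim~1), the pure positive-reachability case is the instantiation $Q_G = Q$, and the non-elementary lower bound is inherited from the one-sided case of Theorem~\ref{theo:non-elementary}. The argument is correct and matches the paper's proof.
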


\subsection{Almost-sure reachability}
We now show that for pure 
strategies for almost-sure reachability, finite-memory strategies suffice. 
The proof is a straight forward extension of the results of Section~\ref{sec:as-reach},
and for finite-memory strategies for positive reachability with almost-sure 
safety we use the result of the previous subsection.

\smallskip\noindent{\em Notation.} We will consider $\target$ as the
set of target states and without loss of generality assume that all
target states are absorbing. 
In this section the belief of player~2 represents the set of states that 
can be with positive probability.
Given strategies $\straa$ and $\strab$ for player~1 and player~2, respectively,
a state $q$ and a set $K \subseteq Q$ we denote by $\Prb_{q,K}^{\straa,\strab}(\cdot)$
the probability distribution when the players play the strategies,
the initial state is $q$ and the initial belief for player~2 is $K$.

In rest of this section we omit subscript $G$ (such as we write $\Strab^O$ 
instead of $\Strab^O_G$) as the game is clear from the context.

\smallskip\noindent{\em Bad beliefs.}
Let $\ov{\target}=Q \setminus \target$. 
Let 
\[
Q_{B} =
\set{ \calb\in 2^Q \mid \forall \straa \in \Straa^O\cap\Straa^P \cdot 
\exists \strab \in \Strab^O \cdot \exists q \in \calb: 
\Prb_{q,\{q\}}^{\straa,\strab}(\Safe(\ov{\target})) >0
}
\]
be the set of beliefs $\calb$ such that for all pure strategies for player~1 
there is a counter strategy for player~2 with a state $q \in \calb$ 
to ensure that given the initial belief of player~2 is the singleton $\{q\}$, 
$\Safe(\ov{\target})$ is satisfied with positive probability.
We will consider $Q_B$ as the set of \emph{bad} beliefs.

\smallskip\noindent{\em Property of an almost-sure winning strategy.}
Consider a pure almost-sure winning strategy for player~1 
that ensures against all strategies of player~2 that 
$\target$ is reached with probability~1. 
Then we claim that the belief of player~2 must never intersect with $Q_B$: 
otherwise if the belief intersects with $Q_B$, let $\calb$ be the 
belief in $Q_B$ that is reached with positive probability.
Then there exists $q \in \calb$ such that player~2 can simply assume that the 
current state is $q$, update the belief to $\{q\}$, and the guess is correct 
with positive probability, and then player~2 can ensure  that against all 
player~1 pure strategies there is a strategy for player~2 to ensure with 
positive probability that $\target$ is not reached.
This contradicts that the strategy for player~1 is almost-sure winning.
Let $Q_G=2^Q \setminus Q_B$.
By definition we have 
\[
Q_G =
\set{ \calb \in 2^Q \mid \exists \straa \in \Straa^O\cap \Straa^P \cdot \forall \strab \in \Strab^O\cdot 
\forall q \in \calb:   
\Prb_{q,\{q\}}^{\straa,\strab}(\Reach(\target)) =1 
}.
\]
By the argument above that for a pure almost-sure winning
strategy the belief must never intersect with $Q_B$ we have 
\[
\begin{array}{rcl}
Q_G & = & 
\{\ \calb \in 2^Q \mid \exists \straa \in \Straa^O\cap \Straa^P \cdot \forall \strab \in \Strab^O\cdot 
\forall q \in \calb:  
\Prb_{q,\{q\}}^{\straa,\strab}(\Reach(\target)) =1 \\
 & & \quad \text{ and }
\Prb_{q,\{q\}}^{\straa,\strab}(\Safe(2^Q\setminus Q_B)) =1 
\ \}.
\end{array}
\]
Let 
\[
\begin{array}{rcl}
Q_G^p & = &
\set{ \calb \in 2^Q \mid \exists \straa \in \Straa^O\cap \Straa^P \cdot \forall \strab \in \Strab^O\cdot 
\forall q \in \calb: 
\Prb_{q,\{q\}}^{\straa,\strab}(\Reach(\target)) >0 \\
& & \quad \text{ and }
\Prb_{q,\{q\}}^{\straa,\strab}(\Safe(2^Q\setminus Q_B)) =1 
}.
\end{array}
\]
We now show that $Q_G^p=Q_G$. The inclusion $Q_G \subseteq Q_G^p$ is 
trivial, and we now show the other inclusion $Q_G^p \subseteq Q_G$.
Observe that in $Q_G^p$ we have  the property of positive reachability and 
almost-sure safety and we will use strategies for positive reachability and 
almost-sure safety to construct a witness finite-memory 
almost-sure winning strategy.
Note that here we have safety for a set of beliefs (instead of set of states, 
and it is straight forward to verify that the argument of the previous 
subsection holds when the safe set is a set of beliefs).
We consider $Q_B$ as the set of unsafe beliefs  (i.e., $Q_G$ is the safe set), 
and $\target$ as the target and invoke the results of the previous subsection: 
for all $\calb \in Q_G^p$ there is a pure finite-memory strategy  $\straa_\calb$ 
of to ensure that from all states $q\in\calb$, within $N$ steps (for some
finite $N\in \Nats$), $\target$ is reached with probability at least some 
positive constant $\eta_\calb>0$, even when the initial belief for player~2 is 
$\{q\}$. 
Let $\eta=\min_{\calb \in Q_G^p} \eta_\calb$.
A pure finite-memory almost-sure winning strategy is described below. 
The strategy plays in two-phases: (1) the \emph{Restart} phase; and 
(1) the \emph{Play} phase.
We define them as follows:
\begin{enumerate}
\item \emph{Restart phase.} Let the current belief be $\calb$, 
the belief for player~2 is any perfect belief $\{q\}$, for $q \in \calb$; 
and goto the Play phase with strategy $\straa_\calb$ that ensures that 
$Q_G$ is never left and $\target$ is reached within $N$ steps with probability 
at least $\eta>0$.
\item \emph{Play phase.} Let $\straa$ be the strategy defined in the 
Restart phase, then play $\straa$ for $N$ steps and go back to the 
Restart phase.
\end{enumerate}
The strategy is almost-sure winning as for all states in $Q_G^p$ and 
for all histories, in every $N$ steps the probability to reach $\target$ 
is at least $\eta>0$, and  $Q_G$ (and hence $Q_G^p$) is never left.
Thus probability to reach $\target$ in $N \cdot \ell$ steps, 
for $\ell \in \Nats$, is at least $1- (1-\eta)^\ell$ and this is~1 as 
$\ell \to \infty$.
Thus the desired result follows and we obtain the required finite-memory 
almost-sure winning strategy.

\smallskip\noindent{\bf Memory bound and algorithm.} 
The memory upper bound for the almost-sure winning strategy constructed is as follows:
$\abs{2^Q} \cdot B + \log N$, we require $\abs{2^Q}$ strategies of the previous subsection 
of memory size $B$ and a counter to count up to $N$ steps; 
where $B$ is the memory required for strategies to ensure 
positive reachability with almost-sure safety objectives.

\begin{theorem}
In two-sided partial-observation stochastic games finite memory is sufficient 
(and non-elementary memory is required in general) for pure strategies for 
almost-sure winning for reachability objectives for player~1.
\end{theorem}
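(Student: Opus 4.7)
The plan is to lift the one-sided argument of Section~\ref{sec:as-reach} to the two-sided setting, using the finite-memory result for positive reachability with almost-sure safety established in the previous subsection. The lower bound is immediate: any two-sided game is a strict generalization of a one-sided game with player~1 perfect and player~2 partial, so Theorem~\ref{theo:non-elementary} already forces non-elementary memory. Hence only the finite-memory upper bound needs a proof.

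First I would define the set of \emph{bad beliefs}
\[
Q_B = \set{\calb \in 2^Q \mid \forall \straa \in \Straa^O \cap \Straa^P \cdot \exists \strab \in \Strab^O \cdot \exists q \in \calb:\ \Prb^{\straa,\strab}_{q,\{q\}}(\Safe(\ov{\target}))>0},
\]
and let $Q_G = 2^Q \setminus Q_B$. As in the one-sided case, the key observation is that any pure almost-sure winning strategy for player~1 must prevent the belief of player~2 from ever hitting $Q_B$ with positive probability: otherwise player~2, upon guessing correctly a witness state $q \in \calb \cap Q_B$ (which happens with positive probability), can refine its belief to $\{q\}$ and apply the counter-strategy given by the definition of $Q_B$ to escape $\target$ with positive probability, contradicting almost-sure winning.

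Next I would prove the analogue of the equality $Q_G = Q_G^p$ where
\[
Q_G^p = \{\calb \in 2^Q \mid \exists \straa \cdot \forall \strab \cdot \forall q \in \calb:\ \Prb^{\straa,\strab}_{q,\{q\}}(\Reach(\target))>0 \text{ and } \Prb^{\straa,\strab}_{q,\{q\}}(\Safe(2^Q\setminus Q_B))=1\}.
\]
The inclusion $Q_G\subseteq Q_G^p$ is immediate; the reverse inclusion is the heart of the argument and is obtained by the \emph{recharging} construction. For each $\calb \in Q_G^p$, by the finite-memory theorem of the previous subsection (applied with $Q_G$ as the safe set of beliefs and $\target$ as target) there is a pure finite-memory strategy $\straa_\calb$ of bounded memory $B$ that, from every $q \in \calb$ with initial player-2 belief $\{q\}$, guarantees $\Safe(Q_G)$ almost-surely and $\Reach^{\le N}(\target)$ with probability at least some $\eta_\calb>0$, for some uniform bound $N$. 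Setting $\eta=\min_\calb \eta_\calb>0$, the overall strategy alternates a \emph{Restart} phase (reading the current belief $\calb$ and loading strategy $\straa_\calb$) with a \emph{Play} phase (executing $\straa_\calb$ for $N$ steps). This strategy is pure, uses finite memory of size at most $\abs{2^Q}\cdot B + \log N$, never leaves $Q_G$, and within every $N$-step window hits $\target$ with probability at least $\eta$; hence $\target$ is reached with probability $1-(1-\eta)^\ell \to 1$ as $\ell \to \infty$.

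The main obstacle is checking that the machinery of the previous subsection, developed for safety on a set of \emph{states}, still applies when the safety objective is phrased on a set of \emph{beliefs} of player~2, and that the belief of player~2 can indeed be tracked as a finite memory component of player~1's strategy despite player~1 having only partial observation. For the first point I would note, as observed in the excerpt, that the K\"onig-style extraction argument of Lemma~\ref{lemm_pos_reach_safe_gen} goes through verbatim when the safe set is a set of beliefs, since the subset construction is finite and the argument only relies on compactness of the strategy tree. For the second point, player~1 updates his internal copy of player~2's belief solely from what player~1 plays and from his own observations of player~2's signalled moves and states, which is enough to guarantee that the belief recorded in memory over-approximates (and in fact exactly equals, modulo player~1's uncertainty) the actual belief of player~2; this suffices to ensure that the Restart/Play scheduler picks a valid strategy $\straa_\calb$ at each restart point.
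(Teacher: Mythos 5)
Your proposal follows essentially the same route as the paper's own proof: the same definition of bad beliefs $Q_B$, the same argument that an almost-sure winning strategy must keep player~2's belief out of $Q_B$, the same equality $Q_G = Q_G^p$ established via the Restart/Play recharging scheme built on the finite-memory positive-reachability-with-almost-sure-safety result of the preceding subsection, and the same derivation of the lower bound from the one-sided case. The two obstacles you flag (safety over beliefs rather than states, and the tracking of the belief as a memory component) are handled in the paper with the same brief justifications you give, so your write-up is a faithful reconstruction of the intended argument.
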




\section{Equivalence of Randomized Action-invisible Strategies and Pure Strategies}
In this section, we show that for two-sided partial-observation games, 
the problem of almost-sure winning with randomized action-invisible strategies  
is inter-reducible with the problem of almost-sure winning with pure strategies.
The reductions are polynomial in the number of states in the game (the reduction
from randomized to pure strategies is exponential in the number of actions).

It follows from the reduction of pure to randomized action-invisible strategies 
that the memory lower bounds for pure strategies transfer to randomized
strategies, and in particular belief-based memoryless strategies
are not sufficient, showing that a remark (without proof) of~\cite[p.4]{CDHR07} 
and the result and construction of~\cite[Theorem~1]{GS09} are wrong.


\subsection{Reduction of randomized action-invisible strategies to pure strategies}\label{sec:randomized-2-pure}
We give a reduction for almost-sure winning for 
randomized action-invisible strategies to pure strategies.
Given a stochastic game $G$ we will construct another stochastic game $H$ 
such that there is a randomized action-invisible almost-sure winning strategy in $G$ iff 
there is a pure almost-sure winning strategy in $H$.
We first show in Lemma~\ref{lemm:red1} the correctness of the reduction for 
finite-memory randomized action-invisible strategies, and then show in 
Lemma~\ref{lemm:red2} that finite memory is sufficient in 
two-sided partial-observation games for randomized action-invisible strategies.


\smallskip\noindent{\bf Construction.} Given a stochastic game 
$G=\tuple{Q,q_0,\trans}$ over action sets $A_1$ and $A_2$, and observations 
$\Obs_1$ and $\Obs_2$ (along with the corresponding observation mappings 
$\obs_1$ and $\obs_2$), we construct a game 
$H=\tuple{Q, q_0, \trans_H}$ over 
action sets $2^{A_1}\setminus \{\emptyset \}$ and $A_2$ and observations $\Obs_1$ 
and $\Obs_2$.
The transition function $\trans_H$ is defined as follows: 
\begin{itemize}
\item for all $q \in Q$ and $A \in 2^{A_1}\setminus \{\emptyset \}$ and $b \in A_2$
we have $\trans_H(q,A,b)(q')=\frac{1}{|A|} \cdot \sum_{a \in A} \trans(q,a,b)(q')$,
i.e., in a state in $Q$ player~1 selects a non-empty subset $A\subseteq A_1$ of actions
and the transition function $\trans_H$ simulates the transition function
$\trans$ along with the uniform distribution over the set $A$ of actions.
\end{itemize}
The observation mappings $\obs_i^H$ in $H$, for $i \in \set{1,2}$ are as 
follows: $\obs_i^H(q)=\obs_i(q)$, where $\obs_i$ is the observation 
mapping in $G$.

\begin{lemma}\label{lemm:red1}
The following assertions hold for reachability objectives:
\begin{enumerate}
\item If there is a pure 
almost-sure winning strategy in 
$H$, then there is a randomized 
action-invisible almost-sure winning strategy in $G$.

\item If there is a finite-memory randomized 
action-invisible almost-sure winning strategy in $G$, then there is a pure 
almost-sure winning strategy in $H$.

\end{enumerate}
\end{lemma}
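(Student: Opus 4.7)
The approach is to set up the two directions of the translation via natural inverse operations on strategies: uniformly randomize over the chosen subset in one direction, take the support of the randomized distribution in the other. Both games $G$ and $H$ share the player-$2$ action set $A_2$ and the observation mapping $\obs_2$, so any observation-based $\strab \in \Strab^O$ is simultaneously a valid strategy in both games, which is what will let us transfer almost-sure winning between them.

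For direction~(1), given a pure almost-sure winning strategy $\straa^H$ for player~$1$ in $H$, we define $\straa$ in $G$ by letting $\straa(w)$ be the uniform distribution over $\straa^H(w) \subseteq A_1$ for every observation history $w$. The strategy $\straa$ is observation-based and action-invisible because $\straa^H$ is observation-based and the randomization is external. For every $\strab \in \Strab^O$, we claim that the probability of any finite play prefix $q_0 q_1 \dots q_n$ is identical under $(\straa,\strab)$ in $G$ and under $(\straa^H,\strab)$ in $H$. This is immediate by induction on $n$ from the definition $\trans_H(q,A,b)(q') = \frac{1}{|A|}\sum_{a \in A}\trans(q,a,b)(q')$, which is exactly the one-step transition probability obtained in $G$ when player~$1$ plays the uniform distribution on $A = \straa^H(w)$. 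Equality of the induced measures on plays then yields $\Prb_q^{\straa,\strab}(\Reach(\target)) = \Prb_q^{\straa^H,\strab}(\Reach(\target)) = 1$.

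For direction~(2), given a finite-memory randomized action-invisible almost-sure winning $\straa$ in $G$, we define the pure $\straa^H$ in $H$ by $\straa^H(w) = \Supp(\straa(w))$, which is a non-empty subset of $A_1$. Then $\straa^H$ inherits the finite memory structure of $\straa$ and is observation-based and pure. The induced probability measures in $G$ and $H$ may now differ, but for every $\strab \in \Strab^O$ they agree on the set of finite play prefixes with positive probability: the step $q_i \to q_{i+1}$ has positive probability under $(\straa^H,\strab)$ in $H$ iff some $a \in \Supp(\straa(w_i))$ satisfies $\trans(q_i,a,b_i)(q_{i+1}) > 0$, iff the same step has positive probability under $(\straa,\strab)$ in $G$.

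To conclude almost-sure reachability in $H$, fix $\straa^H$ and consider the resulting structure as a POMDP for player~$2$: by the remark in the preliminaries, player~$2$'s counter-strategies may be restricted to pure ones, and because $\straa^H$ has finite memory this POMDP has a finite state space. In such a finite POMDP, the existence of a pure observation-based player~$2$ strategy achieving $\Prb(\Safe(\ov{\target})) > 0$ depends only on the graph of positive-probability transitions (via the standard subset/belief analysis); by the support coincidence established above this graph is the same in $G^{\straa}$ and $H^{\straa^H}$, so positive safety holds for player~$2$ in one iff it holds in the other. Since $\straa$ is almost-sure winning in $G$, no such strategy exists, and hence $\straa^H$ is almost-sure winning in $H$. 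The main technical obstacle is this support-based transfer, which needs the finite-memory hypothesis to reduce to a finite POMDP; this is precisely why the subsequent Lemma~\ref{lemm:red2} will be needed to show that the finite-memory restriction on randomized strategies is without loss of generality.
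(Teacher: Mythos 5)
Your proposal is correct and follows essentially the same route as the paper: direction (1) by uniformizing over the chosen subset with exact equality of the induced measures, and direction (2) by taking supports and invoking the fact (from the cited POMDP results) that qualitative winning in the finite POMDP obtained after fixing the finite-memory strategy depends only on the supports of the transitions. The only cosmetic difference is that the paper applies this support-invariance inside $G$ (replacing $\straa_G$ by its uniform-support version $\straa_G^u$, which then simulates $H$ with identical probabilities), whereas you apply it directly to compare the two support-identical product POMDPs; both hinge on the same key fact and on the same use of finite memory.
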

\begin{proof}
We present both parts of the proof below.

\begin{enumerate}
\item Let $\straa_H$ be a pure almost-sure winning strategy in $H$.
We construct a randomized action-invisible almost-sure winning strategy $\straa_G$ in $G$.
The strategy $\straa_G$ is as constructed as follows.
Let $\rho_G=q_0 q_1 \ldots q_k$ be a play prefix in $G$, and we consider the 
same play prefix $\rho_H=q_0 q_1 \ldots q_k$ in $H$, and let $A_k=\straa_H(\rho_H)$. 
The strategy $\straa_G(\rho_G)$ plays all actions in $A_k$ uniformly at 
random.
Since $\straa_H$ is an almost-sure winning strategy it follows 
$\straa_G$ is also almost-sure winning.
Also observe that if $\straa_H$ is observation-based, then so is $\straa_G$.

\item Let $\straa_G$ be a finite-memory randomized action-invisible almost-sure winning 
strategy in $G$. 
If the strategy $\straa_G$ is fixed in $G$ we obtain a finite POMDP, and 
by the results of~\cite{CDH10a} it follows that in an POMDP the precise transition
probabilities do not affect almost-sure winning.
Hence if $\straa_G$ is almost-sure winning, then the uniform version 
$\straa_G^u$ of the strategy $\straa_G$ that always plays the same support of
the probability distribution as $\straa_G$ but plays all actions in the 
support uniformly at random is also almost-sure winning.
Given $\straa_G^u$ we construct a pure almost-sure winning strategy 
$\straa_H$ in $H$.
Given a play prefix $\rho_H=q_0 q_1 \ldots q_k$ in $H$,
consider the same play prefix  $\rho_G=q_0 q_1 \ldots q_k$ in $G$.
Let $A_k=\Supp(\straa_G^u(\rho_G))$, then $\straa_H(\rho_H)$ plays the action 
$A_k \in (2^{A_1}\setminus \{\emptyset \})$.
Since $\straa_G^u$ is almost-sure winning it follows that $\straa_H$ is 
almost-sure winning.
Observe that if $\straa_G$ is observation-based, then so is $\straa_G^u$, and 
then so is $\straa_H$.

\end{enumerate}

\noindent The desired result follows.
\qed
\end{proof}

\begin{lemma}\label{lemm:red2}
For reachability objectives,
if there exists a randomized 
action-invisible almost-sure winning strategy in $G$, then there exists also a 
finite-memory randomized 
action-invisible almost-sure winning strategy in $G$.
\end{lemma}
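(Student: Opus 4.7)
The plan is to bootstrap the finite-memory result for pure strategies (Section~5) to the randomized action-invisible setting, routing through the game $H$ constructed for Lemma~\ref{lemm:red1}. Concretely, from an arbitrary randomized action-invisible almost-sure winning $\straa_G$ in $G$, I produce a pure almost-sure winning strategy in $H$, invoke the Section~5 finite-memory theorem inside $H$, and transport the resulting finite-memory pure strategy back to $G$.

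First, I would replace the (possibly infinite-memory) strategy $\straa_G$ by its \emph{uniform version} $\straa_G^u$, which at every play prefix $\rho$ plays the uniform distribution on $\Supp(\straa_G(\rho))$. Fix any observation-based $\strab \in \Strab^O$: the strategies $\straa_G$ and $\straa_G^u$ induce (possibly infinite-state) Markov chains on $\Pref(G)$ with the same graph of positive-probability transitions. Almost-sure reachability in a Markov chain depends only on this graph, so $\Prb_{q_0}^{\straa_G,\strab}(\Reach(\target))=1$ iff $\Prb_{q_0}^{\straa_G^u,\strab}(\Reach(\target))=1$. This extends the POMDP argument of~\cite{CDH10a} already used in Lemma~\ref{lemm:red1}(2) to strategies of arbitrary memory, and shows that $\straa_G^u$ is almost-sure winning.

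Second, I would translate $\straa_G^u$ into a pure observation-based strategy $\straa_H$ in $H$ by setting $\straa_H(\rho) = \Supp(\straa_G^u(\rho)) \in 2^{A_1}\setminus\{\emptyset\}$ for every play prefix $\rho$. By the definition of $\trans_H$, the distribution over successors in $H$ when player~$1$ plays $\straa_H(\rho)$ is exactly the distribution over successors in $G$ when player~$1$ plays $\straa_G^u(\rho)$; since the observations of $H$ coincide with those of $G$, the mapping $\rho \mapsto \rho$ identifies compatible plays of $(\straa_G^u, \strab)$ in $G$ with compatible plays of $(\straa_H, \strab)$ in $H$ and preserves their probabilities. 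Hence $\straa_H$ is pure, observation-based, and almost-sure winning in $H$.

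Third, $H$ is itself a two-sided partial-observation stochastic game with a reachability objective, and $\straa_H$ is a pure observation-based almost-sure winning strategy in $H$. The preceding theorem of Section~5 (finite memory suffices for pure almost-sure reachability in two-sided games) applies to $H$ and yields a \emph{finite-memory} pure observation-based almost-sure winning strategy $\straa_H^f$ in $H$. Finally, Lemma~\ref{lemm:red1}(1) applied to $\straa_H^f$ gives a randomized action-invisible observation-based almost-sure winning strategy $\straa_G^f$ in $G$, obtained by playing at each prefix $\rho$ the uniform distribution over $\straa_H^f(\rho)$; it inherits the transducer of $\straa_H^f$ and is therefore finite-memory, which is the required conclusion.

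The only delicate point is the first step: we need the uniform-support reduction to be valid without assuming finite memory of $\straa_G$. The per-strategy Markov chain argument handles this uniformly, because once $\strab$ is fixed, almost-sure reachability depends purely on supports of transitions in an (a priori infinite) Markov chain; no finiteness of the memory of $\straa_G$ is required. Every subsequent step preserves observation-basedness and finite memory in the expected way, so the proof goes through.
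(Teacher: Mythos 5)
The architecture of your reduction (pass to the uniform-support strategy $\straa_G^u$, read it as a pure strategy in the powerset game $H$, invoke the finite-memory theorem for pure strategies in $H$, and pull back via Lemma~\ref{lemm:red1}(1)) is attractive, and steps two through four are sound. The gap is exactly at the point you yourself flag as ``the only delicate point'': the principle that almost-sure reachability in a Markov chain depends only on the graph of positive-probability transitions is \emph{false} for infinite Markov chains, and the chain induced on $\Pref(G)$ by an infinite-memory strategy is infinite. A biased random walk (gambler's ruin) gives two chains with identical support graphs, one reaching the target almost surely and the other not, even when all transition probabilities are bounded below by a constant; unfolded into a tree it has exactly the shape of a prefix chain. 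The result of~\cite{CDH10a} that precise probabilities do not affect almost-sure winning is a theorem about \emph{finite} POMDPs, which is precisely why the paper states Lemma~\ref{lemm:red1}(2) only for finite-memory strategies. So your first step is unjustified as written, and without it you never obtain the pure almost-sure winning strategy in $H$ that the remainder of your argument consumes.

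The paper avoids ever reweighting an infinite-memory almost-sure reachability strategy. It instead decomposes almost-sure reachability into properties that \emph{are} determined by supports alone: it defines the set $\calw$ of winning belief sets for player~1, argues that an almost-sure winning strategy must keep the belief inside $\calw$ (a safety property) while achieving positive reachability from every belief in $\calw$ (existence of a single compatible finite path to $\target$), applies the K\"onig's-lemma truncation argument of Lemma~\ref{lemm_pos_reach_safe_gen} at the level of supports-of-actions played uniformly to obtain finite-memory strategies for this conjunction, and then reassembles almost-sure reachability by the restart/play scheme, using that each $N$-step phase reaches $\target$ with probability at least a fixed $\eta>0$. Some such decomposition into support-determined subgoals is needed; the per-strategy Markov chain argument cannot replace it.
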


\begin{proof}
Let $\calw=\set{\calb \mid \calb \in 2^Q \text{ is the belief of player~1 such that } 
\exists \straa \in \Straa^O\cdot \forall \strab \in \Strab^O\cdot \forall q \in \calb:  
\Prb_{q}^{\straa,\strab}(\Reach(\target))=1}$
denote the set of belief sets $\calb$ for player~1 such that player~1 has a 
(possibly infinite-memory) randomized action-invisible almost-sure winning strategy from 
all starting states in $\calb$.
It follows that the almost-sure winning strategy must ensure that the set 
$\calw$ is never left: this is because from the complement set of $\calw$ 
against all randomized action-invisible for player~1 there is a counter 
strategy for player~2 to ensure that with positive probability  
the target is not reached.
Moreover for all $\calb\in \calw$ the almost-sure winning strategy also 
ensures that $\target$ is reached with positive probability.
Hence we have again the problem of positive reachability with almost-sure 
safety.
We simply repeat the proof for the pure strategy case, treating sets of actions 
(that is the support of the randomized strategy) as actions 
(for pure strategy) and played uniformly at random (as in the reduction from 
$G$ to $H$), and thus obtain a witness finite-memory strategy $\straa_G$ to 
ensure positive reachability and almost-sure safety. 
Repeating the strategy $\straa_G$ with play phase and repeat phase (as in the 
case of pure strategies) we obtain the desired finite-memory almost-sure 
winning strategy.
\qed
\end{proof}

The following theorem follows from the previous two lemmas.
\begin{theorem}
Given a two-sided (resp. one-sided) partial-observation stochastic game 
$G$ with a reachability objective we can construct in time polynomial 
in the size of the game and exponential in the size of the action sets a
two-sided (resp. one-sided) partial-observation stochastic game $H$
such that there exists a randomized action-invisible almost-sure winning strategy
in $G$ iff there exists a pure almost-sure winning strategy in $H$.  
\end{theorem}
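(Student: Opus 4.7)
The theorem is essentially a corollary combining Lemma~\ref{lemm:red1} and Lemma~\ref{lemm:red2}, so the plan is primarily to assemble these pieces and verify that the construction satisfies the claimed size and observation-structure bounds. The game $H$ to use is exactly the one defined before Lemma~\ref{lemm:red1}: the state space is $Q$ (unchanged), player~1's action alphabet is $2^{A_1}\setminus\{\emptyset\}$, player~2's action alphabet is $A_2$, and the transitions replace an action $A\subseteq A_1$ by the uniform mixture over $A$ in $G$, i.e., $\trans_H(q,A,b)(q') = \frac{1}{|A|}\sum_{a\in A}\trans(q,a,b)(q')$. The observation mappings in $H$ are inherited from $G$ via $\obs_i^H = \obs_i$.

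First I would verify the complexity bound: the state space of $H$ equals $Q$, the transition function is defined on $|Q|\cdot(2^{|A_1|}-1)\cdot |A_2|$ tuples, and each entry is computable in polynomial time from $\trans$, so $H$ has size polynomial in $|G|$ and exponential in $|A_1|$ as claimed. I would also note that the construction preserves the one-sided structure: since $\obs_1^H = \obs_1$ and $\obs_2^H = \obs_2$, if either player has perfect observation in $G$ then the same player has perfect observation in $H$, so the ``(resp.\ one-sided)'' clause follows from the same reduction.

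Next I would handle the equivalence. The direction ``pure almost-sure winning in $H$ $\Rightarrow$ randomized action-invisible almost-sure winning in $G$'' is precisely Lemma~\ref{lemm:red1}(1). For the converse, suppose player~1 has a randomized action-invisible almost-sure winning strategy in $G$; by Lemma~\ref{lemm:red2} player~1 then has a \emph{finite-memory} such strategy in $G$, and applying Lemma~\ref{lemm:red1}(2) to this finite-memory strategy yields a pure almost-sure winning strategy in $H$. Composing these implications gives the biconditional.

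The main (and only non-routine) obstacle has already been absorbed by the two supporting lemmas: Lemma~\ref{lemm:red2} was needed to reduce from an arbitrary (possibly infinite-memory) randomized action-invisible winning strategy to a finite-memory one, because Lemma~\ref{lemm:red1}(2) cannot handle infinite-memory strategies directly (its proof invokes the POMDP result of~\cite{CDH10a} that support-equivalent strategies are almost-sure equivalent, which is phrased for finite-memory). Thus the proof of the theorem itself reduces to stating the construction, checking the size and observation bookkeeping, and citing the two lemmas.
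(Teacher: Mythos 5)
Your proposal is correct and matches the paper's own argument exactly: the paper states the theorem as an immediate consequence of Lemma~\ref{lemm:red1} and Lemma~\ref{lemm:red2}, using the same construction of $H$ over action alphabet $2^{A_1}\setminus\{\emptyset\}$, with the same composition (Lemma~\ref{lemm:red2} to obtain finite memory, then Lemma~\ref{lemm:red1}(2)) for the forward direction. Your additional bookkeeping on the size bound and preservation of the one-sided structure is a faithful elaboration of what the paper leaves implicit.
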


For positive winning, randomized memoryless strategies are sufficient (both for
action-visible and action-invisible) and the problem is PTIME-complete 
for one-sided and EXPTIME-complete for two-sided~\cite{BGG09}. 
The above theorem along with Theorem~\ref{theo:complexity-one-sided-player-one} 
gives us the following corollary for almost-sure winning for randomized action-invisible
strategies.

\begin{corollary}
Given one-sided partial-observation stochastic games with player~1 partial
and player~2 perfect, the following assertions hold for
reachability objectives for player~1:
\begin{enumerate}
\item \emph{(Memory complexity).} Exponential memory is sufficient 
for randomized action-invisible strategies for almost-sure winning.

\item \emph{(Algorithm).} The existence of a randomized action-invisible 
almost-sure winning strategy can be decided in time exponential in the state 
space of the game and exponential in the size of the action sets.

\item \emph{(Complexity).} The problem of deciding the existence of a 
randomized action-invisible almost-sure winning strategy is EXPTIME-complete.
\end{enumerate}
\end{corollary}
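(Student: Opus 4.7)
The plan is to obtain the corollary as an essentially immediate consequence of the reduction from randomized action-invisible strategies to pure strategies (the theorem just stated, underpinned by Lemmas~\ref{lemm:red1} and~\ref{lemm:red2}) together with the corresponding bounds for pure strategies in one-sided games with player~1 partial and player~2 perfect given by Theorem~\ref{theo:complexity-one-sided-player-one}. A preliminary observation I would record first is that the reduction of Section~\ref{sec:randomized-2-pure} preserves the one-sided structure: the game $H$ has the same state space $Q$ as $G$ and the same observation maps $\obs_1^H=\obs_1$ and $\obs_2^H=\obs_2$, so if player~$2$ has perfect observation in $G$ then so does player~$2$ in $H$. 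Thus Theorem~\ref{theo:complexity-one-sided-player-one} is directly applicable to $H$.

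For part~(1), I would take a randomized action-invisible almost-sure winning strategy in $G$, invoke Lemma~\ref{lemm:red2} to assume it is finite-memory, and then apply Lemma~\ref{lemm:red1}(2) to obtain a pure almost-sure winning strategy $\straa_H$ in $H$. By Theorem~\ref{theo:complexity-one-sided-player-one}(1) $\straa_H$ can be chosen with memory exponential in $|Q|$. Feeding $\straa_H$ through Lemma~\ref{lemm:red1}(1) gives back a randomized action-invisible strategy in $G$ whose transducer has the same set of memory states (the construction only replaces each pure choice of $A\in 2^{A_1}\setminus\{\emptyset\}$ by the uniform distribution on $A$), hence again exponential in $|Q|$.

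For part~(2), the construction of $H$ is polynomial in $|Q|$ and exponential in $|A_1|$. Theorem~\ref{theo:complexity-one-sided-player-one}(2) then decides pure almost-sure winning in $H$ in time $2^{O(|Q|)}\cdot\mathrm{poly}(|A_1^H|,|A_2^H|)$, where $|A_1^H|=2^{|A_1|}-1$ and $|A_2^H|=|A_2|$, yielding a total running time that is exponential in the state space and exponential in the action sets of $G$. Combined with Lemmas~\ref{lemm:red1} and~\ref{lemm:red2}, this produces a decision procedure for the randomized action-invisible problem within the claimed bound.

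For part~(3), the EXPTIME upper bound follows from~(2). For the matching lower bound I would invoke the converse polynomial-time reduction from pure almost-sure winning to randomized action-invisible almost-sure winning (the ``second reduction'' announced in the introduction), which preserves the class of one-sided games with player~1 partial and player~2 perfect. Since pure almost-sure winning is already EXPTIME-hard in this setting by Theorem~\ref{theo:complexity-one-sided-player-one}(3), the randomized action-invisible problem inherits EXPTIME-hardness. The only step requiring care is the memory argument in~(1): one must check that the translation of Lemma~\ref{lemm:red1}(1) does not inflate the transducer (it does not, since the update function is unchanged and the next-move function is obtained by replacing a pure action of $H$ by the uniform distribution on the corresponding subset of $A_1$), so the exponential memory bound transfers verbatim.
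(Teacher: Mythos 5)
Your proposal follows essentially the same route as the paper: the paper derives this corollary in one line by combining the randomized-to-pure reduction of Section~\ref{sec:randomized-2-pure} (Lemmas~\ref{lemm:red1} and~\ref{lemm:red2}) with Theorem~\ref{theo:complexity-one-sided-player-one}, exactly as you do for parts (1) and (2), and your accounting of the memory and time bounds (state space of $H$ unchanged, player-1 action set blown up to $2^{A_1}\setminus\{\emptyset\}$) is correct. The only place you go beyond the paper is the EXPTIME-hardness in part (3), which the paper leaves implicit; there your argument has a small inaccuracy. The reduction of Section~\ref{sec:pure-2-randomized} as actually constructed sets $\obs_2^H((q,a))=\obs_2(q)$, so the intermediate states $(q,a)$ and $(q,a')$ are observationally merged for player~$2$, and hence the reduction does \emph{not} literally preserve the class ``player~$1$ partial and player~$2$ perfect.'' This is repairable (one can give each $(q,a)$ its own player-$2$ observation, which only reveals player~$1$'s last action --- information player~$2$ already has in $G$ when he is perfect, so both directions of Lemma~\ref{lemm:red-rand-pure} survive), and alternatively the hardness follows directly from the known EXPTIME-hardness of the action-visible almost-sure problem together with the polynomial reduction from actions visible to actions invisible mentioned in the introduction; but as written the claim that the converse reduction ``preserves the class'' needs this extra justification.
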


\subsection{Reduction of pure strategies to randomized action-invisible strategies}\label{sec:pure-2-randomized}
We present a reduction for almost-sure winning for pure strategies
to randomized action-invisible strategies.
Given a stochastic game $G$ we construct another stochastic game $H$ 
such that there exists a pure almost-sure winning strategy in $G$ iff 
there exists a randomized almost-sure winning strategy in $H$.

The idea of the reduction is to force player~$1$ to play a pure strategy
in $H$. The game $H$ simulates $G$ and requires player~$1$ to repeat
each actions played (i.e. to play each action two times). Then, if player~$1$ 
uses randomization, he has to repeat the actions chosen randomly in the previous step.
Since the actions are invisible, this can be achieved only if the support 
of the randomized actions is a singleton, i.e., the strategy is pure.
Note that the reduction works for randomized strategies with actions invisible,
and not when the actions are visible.

\smallskip\noindent{\bf Construction.} Given a stochastic game 
$G=\tuple{Q,q_0,\trans_G}$ over action sets $A_1$ and $A_2$, and observations 
$\Obs_1$ and $\Obs_2$ (along with the corresponding observation mappings 
$\obs_1$ and $\obs_2$), we construct a game 
$H=\tuple{Q \cup (Q \times A_1) \cup \{\sink\}, q_0, \trans_H}$ over the same
action sets $A_1$ and $A_2$ and observations $\Obs_1$ and $\Obs_2$.
The transition function $\trans_H$ is defined as follows: 
\begin{itemize}
\item for all $q \in Q$ and $a \in A_1$ and $b \in A_2$
we have $\trans_H(q,a,b)((q,a))=1$, i.e., in a state $q$ for action $a$ of player~$1$,
irrespective of the choice of player~2, the game stores player~$1$'s action with probability~1;
\item for all $(q,a) \in Q \times A_1$, for all $b \in A_2$ we have 
$\trans_H((q,a),a,b) = \trans_G(q,a,b)$, i.e. if player~$1$ repeats the action
played in the previous step, then the probabilistic transition function is the same as in $G$;
and for all $a' \in A_1 \setminus \{a\}$, we have $\trans_H((q,a),a,b)(\sink) = 1$, i.e. 
if player~$1$ does not repeat the same action, then the sink state is reached. 
\item for all $a \in A_1$ and $b \in A_2$, we have $\trans_H(\sink,a,b)(\sink) = 1$.
\end{itemize}
The observation mappings $\obs_i^H$ in $H$ ($i \in \{1,2\}$) are as follows:
$\obs_i^H(q) = \obs_i^H((q,a)) = \obs_i(q)$, where $\obs_i$ is the observation 
mapping in $G$. Note that $H$ is of size polynomial in the size of $G$.

\begin{lemma}\label{lemm:red-rand-pure}
Let $\target \subseteq Q$ be a set of target states. 
There exists a pure almost-sure winning strategy in~$G$ for $\Reach(\target)$
if and only if there exists a 
randomized action-invisible almost-sure winning strategy in $H$ for objective $\Reach(\target)$. 
\end{lemma}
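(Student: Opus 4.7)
\medskip\noindent\textbf{Proof plan.} The forward direction is the easy one. Given a pure observation-based almost-sure winning strategy $\straa_G$ in~$G$ for $\Reach(\target)$, I would define a pure (hence a special randomized) observation-based strategy $\straa_H$ in~$H$ that simulates $\straa_G$ by playing each action twice. Formally, on a prefix of~$H$ whose length is even (ending in some $q \in Q$), play $\straa_G$ on the corresponding prefix obtained by collapsing each consecutive pair of identical steps; on a prefix of~$H$ of odd length (ending in some $(q,a) \in Q \times A_1$), play the action $a$ that was just played. Since the observation of $q$ and of $(q,a)$ coincide in~$H$ and actions are invisible, the $H$-observation sequence is essentially the $G$-observation sequence with each observation repeated, so $\straa_H$ is observation-based; moreover the sink state is never reached, and the induced distribution over plays, after collapsing consecutive identical actions, matches the one induced by $\straa_G$ in~$G$. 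Hence $\straa_H$ is almost-sure winning in~$H$.

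\medskip\noindent For the converse direction, suppose $\straa_H$ is a randomized action-invisible almost-sure winning strategy in~$H$ for $\Reach(\target)$. The crucial observation is that $\straa_H$ must in fact be pure on every play prefix ending in $Q$. Indeed, fix such a prefix $\rho$ that is compatible with $\straa_H$ and reaches $q \in Q$ with positive probability, and let $S = \Supp(\straa_H(\rho))$. With positive probability the next state is $(q,a)$ for some $a \in S$. Because actions are invisible and $\obs_1^H((q,a)) = \obs_1(q)$ does not depend on $a$, the strategy $\straa_H$ prescribes the \emph{same} distribution $\mu$ after each successor $(q,a)$ with $a \in S$. To avoid reaching $\sink$ with positive probability from $(q,a)$, we must have $a \in \Supp(\mu)$ and in fact $\mu$ must be the Dirac distribution on $a$ (otherwise $(q,a)$ proceeds to $\sink$ with positive probability). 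This forces $\abs{S} = 1$, that is $\straa_H(\rho)$ is a Dirac distribution.

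\medskip\noindent Once we know $\straa_H$ is essentially pure at every even step, we extract a pure observation-based strategy $\straa_G$ in~$G$ by taking $\straa_G(\rho_G) = a$ where $a$ is the unique action in the support of $\straa_H$ at the corresponding ``expanded'' prefix in~$H$ (obtained by inserting the forced repeat steps). The strategy is observation-based because $\straa_H$ is, and the $H$-observation sequence of the expanded prefix is determined by $\obs_1(\rho_G)$. The distribution over plays induced in~$G$ by $\straa_G$ against an arbitrary observation-based strategy $\strab_G$ of player~$2$ is obtained from the distribution induced in~$H$ by $\straa_H$ against the natural ``doubled'' strategy $\strab_H$ of player~$2$ by collapsing pairs of identical player-$1$ actions. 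Since $\straa_H$ ensures $\Reach(\target)$ almost surely (and in particular avoids $\sink$ almost surely), so does $\straa_G$.

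\medskip\noindent The main subtlety — and really the only content of the proof — is the argument that action-invisibility together with the ``must repeat'' gadget collapses any randomized strategy to a pure one at the decision states: the gadget would fail if actions were visible, because then player~$1$ could observe the randomly sampled action and repeat it. The rest is a routine bijection between plays of~$G$ and plays of~$H$ that do not hit $\sink$.
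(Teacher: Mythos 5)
Your proposal is correct and follows essentially the same route as the paper's proof: the easy direction doubles each action of the pure $G$-strategy, and the converse direction uses action-invisibility plus the identical observations of $(q,a)$ and $(q,a')$ to force the same continuation distribution after all successors $(q,a)$, whence the support at each decision state must be a singleton to avoid $\sink$. The extraction of the pure $G$-strategy by collapsing the repeated steps matches the paper as well.
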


\begin{proof}
We present both directions of the proof below.

\begin{enumerate}
\item Let $\straa_H$ be a randomized action-invisible almost-sure winning strategy in $H$.
We show that we can assume wlog that $\straa_H$ is actually a pure strategy. 
To see this, assume that under strategy $\straa_H$ there is a prefix 
$\rho_H = q_0 (q_0,a_0) q_1 (q_1,a_1) \ldots q_k$ in $H$ compatible with $\straa_H$
from which $\straa_H$ plays a randomized action with support $A \subseteq A_1$ and $\abs{A} > 1$.
Then, with positive probability the states $(q_k, a_k)$ and $(q_k, a'_k)$
are reached where $a_k,a'_k \in A$ and $a_k \neq a'_k$. No matter the action(s)
played by $\straa_H$ in the next step, the state $\sink$ is reached with positive
probability in the next step, either from $(q_k, a_k)$ or from $(q_k, a'_k)$. 
This contradicts that $\straa_H$ is almost-sure winning.
Therefore, we can assume that $\straa_H$ is a pure strategy that repeats each action
two times. We construct a pure almost-sure winning strategy in $G$ by removing these repetitions.

\item Let $\straa_G$ be a pure almost-sure winning strategy in $G$. 
Consider the strategy $\straa_H$ in $H$ that always repeats two times
the actions played by $\straa_G$. The strategy $\straa_H$ is observation-based
and almost-sure winning since $H$ simulates $G$ when actions are repeated twice.
\end{enumerate}

\noindent The desired result follows.
\qed
\end{proof}

\begin{figure}[!tb]
\hrule
\begin{center}
\def\fsize{\normalsize}

\begin{picture}(97,61)(0,0)

{\fsize


\node[Nmarks=i, Nmr=0](q0)(10,28.5){$q_0$}

\node[Nmarks=n](q1)(30,40){$q_1$}
\node[Nmarks=n](q2)(30,17){$q_2$}

\node[Nmarks=n](q1b)(50,55){$q_1,b$}
\node[Nmarks=n](q1a)(50,40){$q_1,a$}
\node[Nmarks=n](q2b)(50,17){$q_2,b$}
\node[Nmarks=n](q2a)(50,2){$q_2,a$}

\node[Nmarks=n](sink1)(70,55){\frownie}
\node[Nmarks=n](sink2)(70,2){\frownie}

\node[Nmarks=n](q2)(30,17){$q_2$}

\rpnode[Nmarks=n](r1)(70,40)(4,3.5){}
\rpnode[Nmarks=n](r2)(70,17)(4,3.5){}

\node[Nmarks=r](qX)(90,28.5){\smiley}

\drawedge[ELpos=45, ELside=l, ELdist=.5](q0,q1){$-,a$}
\drawedge[ELpos=45, ELside=r, ELdist=.5](q0,q2){$-,b$}


\drawedge[ELpos=50, ELside=l, ELdist=.6, curvedepth=0](q1,q1a){$a,-$}
\drawedge[ELpos=50, ELside=l, ELdist=.6, curvedepth=0](q2,q2b){$b,-$}
\drawedge[ELpos=50, ELside=l, ELdist=0, curvedepth=0](q1,q1b){$b,-$}
\drawedge[ELpos=50, ELside=r, ELdist=0, curvedepth=0](q2,q2a){$a,-$}

\drawedge[ELpos=50, ELside=l, ELdist=.6, curvedepth=0](q1a,r1){$a,-$}
\drawedge[ELpos=50, ELside=l, ELdist=.6, curvedepth=0](q2b,r2){$b,-$}
\drawedge[ELpos=50, ELside=l, ELdist=0, curvedepth=0](q1a,sink1){$b,-$}
\drawedge[ELpos=50, ELside=l, ELdist=0, curvedepth=0](q2b,sink2){$a,-$}

\drawedge[ELpos=50, ELside=r, ELdist=1, curvedepth=-8](q1b,q1){$b,-$}
\drawedge[ELpos=50, ELside=l, ELdist=1, curvedepth=8](q2a,q2){$a,-$}
\drawedge[ELpos=50, ELside=l, ELdist=1, curvedepth=0](q1b,sink1){$a,-$}
\drawedge[ELpos=50, ELside=l, ELdist=1, curvedepth=0](q2a,sink2){$b,-$}

\drawedge[ELpos=18, ELside=l, ELdist=.5, curvedepth=3, sxo=2, syo=1.5, exo=-2](r1,qX){\sfrac{1}{2}}
\drawedge[ELpos=18, ELside=r, ELdist=.5, curvedepth=-3, sxo=2, syo=-1.5, exo=-2](r2,qX){\sfrac{1}{2}}

\drawbpedge[ELpos=31, ELside=r, ELdist=1, eyo=-1](r1,330,25,q1,320,18){\sfrac{1}{2}}
\drawbpedge[ELpos=31, ELside=l, ELdist=1, eyo=1](r2,30,25,q2,40,18){\sfrac{1}{2}}

\drawloop[ELpos=50, ELside=l, ELdist=1, ELside=l,loopCW=y, loopdiam=5, loopangle=90](qX){}
\drawloop[ELpos=50, ELside=l, ELdist=1, ELside=l,loopCW=y, loopdiam=5, loopangle=0](sink1){}
\drawloop[ELpos=50, ELside=l, ELdist=1, ELside=l,loopCW=y, loopdiam=5, loopangle=0](sink2){}



}
\end{picture}
 
\end{center}
\hrule
\caption{{\bf Belief-based strategies are not sufficient.}
The game graph obtained by the reduction of pure to
randomized strategies on the game of \figurename~\ref{fig:GS09-wrong-positive} (for almost-sure reachability objective).
Player~$1$ is blind and player~$2$ has perfect observation. 
There exists an almost-sure winning randomized strategy (with invisible actions),
but there is no \emph{belief-based memoryless} almost-sure winning randomized strategy. 
\label{fig:GS09-wrong-positive-invisible-actions}}

\end{figure}
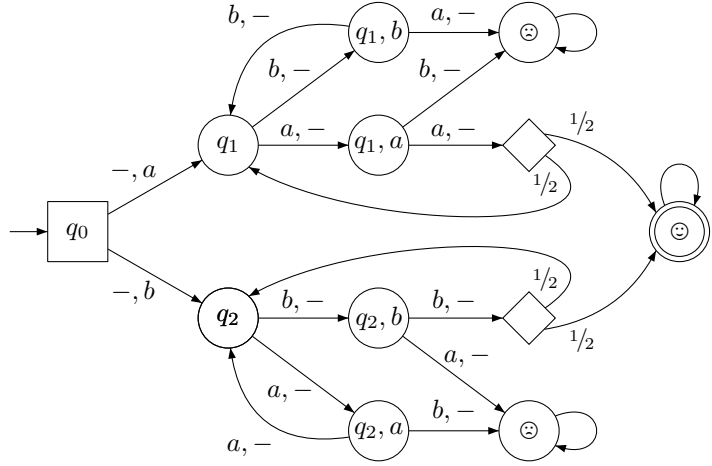

\begin{theorem}
Given a two-sided partial-observation stochastic game 
$G$ with a reachability objective we can construct in time polynomial 
in the size of the game and size of the action sets a
two-sided partial-observation stochastic game $H$
such that there exists a pure almost-sure winning strategy in $G$ iff there exists a 
randomized action-invisible almost-sure winning strategy in $H$.  
\end{theorem}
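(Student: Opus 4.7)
The plan is to invoke the construction of $H$ described just before the theorem and the correctness captured by Lemma~\ref{lemm:red-rand-pure}, and only verify the complexity claim. First I would recall the construction: $H$ has state space $Q \cup (Q \times A_1) \cup \{\sink\}$, the same action sets and observation mappings as $G$, and the transition function that splits each step of $G$ into two phases — a ``commit'' phase from $q \in Q$ storing player~$1$'s action in $(q,a)$, and a ``confirm'' phase from $(q,a)$ that mimics $\trans_G(q,a,b)$ if player~$1$ repeats $a$, and sends the play to $\sink$ otherwise. The number of states is $|Q|(1+|A_1|)+1$ and each entry of $\trans_H$ is either $0$, $1$, or a value of $\trans_G$, so both the state space and the transition table can be produced in time polynomial in $|G|$ and $|A_1|$, establishing the complexity part of the theorem.

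Next I would argue the ``if'' direction: given a randomized action-invisible almost-sure winning strategy $\straa_H$ for $\Reach(\target)$ in $H$, I would first normalize it to a pure strategy. The key observation — already used in the proof of Lemma~\ref{lemm:red-rand-pure} — is that if along some prefix compatible with $\straa_H$ ending in a state of the form $(q,a)$, the strategy $\straa_H$ plays a distribution with support of size at least two, then because actions are invisible to player~$1$, after taking the next step to some $(q',a') \in Q \times A_1$ with positive probability, player~$1$ cannot distinguish between the different actions just sampled, and the next prescribed distribution must assign positive mass to some action $a'' \neq a'$, which drives the play into $\sink$ with positive probability — contradicting almost-sure reachability of $\target$. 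Hence $\straa_H$ can be taken to be pure, and collapsing each two-step commit-confirm pair yields a pure observation-based almost-sure winning strategy $\straa_G$ in $G$.

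For the converse, given a pure almost-sure winning $\straa_G$ in $G$, I construct $\straa_H$ that, at a state $q \in Q$, plays $\straa_G$ applied to the $G$-prefix obtained by projecting out the $Q \times A_1$ components, and at a state $(q,a)$ plays the same action $a$ again. Since the probabilistic transitions from $(q,a)$ on input $a$ coincide with $\trans_G(q,a,b)$, every $H$-play compatible with $\straa_H$ projects to a $G$-play compatible with $\straa_G$, so $\target$ is reached with probability~$1$; and observation-basedness is inherited since $\obs_i^H$ only depends on the first component.

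The main obstacle is really the argument forcing $\straa_H$ to be pure: one must be careful that invisibility of actions is used essentially (so that player~$1$ cannot condition the second-phase choice on which action was actually sampled), and that the claim is robust to player~$2$'s strategy. This was handled by the positive-probability branching into two distinct $(q,a')$ and $(q,a'')$ states, from either of which any next action is wrong for at least one of the two branches. With that step in place, the theorem is simply the combination of Lemma~\ref{lemm:red-rand-pure} and the polynomial-size bound on $H$.
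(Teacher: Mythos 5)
Your proposal is correct and follows essentially the same route as the paper: the same two-phase commit/confirm construction of $H$ with state space $Q \cup (Q\times A_1)\cup\{\sink\}$, the same purification argument exploiting action-invisibility (a non-singleton support produces two positive-probability, observationally indistinguishable branches $(q,a)$ and $(q,a')$, from at least one of which the common next-step distribution sends the play to $\sink$ with positive probability), and the same repeat-twice simulation for the converse. The only quibble is a phrasing slip in your middle paragraph---the support of size at least two should be considered at a state $q\in Q$ of the commit phase (as your closing summary correctly states), since at a state $(q,a)$ a non-singleton support already reaches $\sink$ directly---but this does not affect the substance of the argument.
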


\smallskip\noindent{\bf Belief-based strategies are not sufficient.}
We illustrate our reduction with the following example that shows
belief-based (belief-only) randomized action-invisible 
strategies are not sufficient for almost-sure reachability in one-sided
partial-observation games (player~1 partial and player~2 perfect), 
showing that a remark (without proof) of~\cite[p.4]{CDHR07} 
and the result and construction of~\cite[Theorem~1]{GS09} are wrong. 

\begin{example}
We illustrate the reduction of 
on the example
of \figurename~\ref{fig:GS09-wrong-positive}. The result of the reduction
is given in \figurename~\ref{fig:GS09-wrong-positive-invisible-actions}.
Remember that Example~\ref{ex:one} showed
that belief-based pure strategies are not sufficient for almost-sure winning.
We show that belief-based randomized strategies are not sufficient for 
almost-sure winning in the game of \figurename~\ref{fig:GS09-wrong-positive-invisible-actions}.
First, in $\{q_1,q_2\}$ player~$1$ has to play pure since he has to be able to 
repeat the same action to avoid reaching a sink state $\frownie$ with positive
probability. Now, the argument is the same as in Example~\ref{ex:one}: playing
always the same action (either $a$ or $b$) in $\{q_1,q_2\}$ is not even positive
winning as player~$2$ can choose the state in this set (either $q_2$ or $q_1$).
\ee
\end{example}

Note that our reduction preserves the structure and memory of almost-sure 
winning strategies,
hence the non-elementary lower bound given in Theorem~\ref{theo:non-elementary}
for pure strategies also transfers to randomized action-invisible strategies 
by the same reduction.

\begin{corollary}\label{coro:non-elementary}
For one-sided partial-observation stochastic games, with player~1 partial and
player~2 perfect, belief-based randomized action-invisible strategies
are not sufficient for almost-sure winning for reachability objectives.
For two-sided partial-observation stochastic games, memory of non-elementary 
size is necessary in general for almost-sure winning for randomized action-invisible 
strategies for reachability objectives.
\end{corollary}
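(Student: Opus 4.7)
The strategy is to derive both parts of the corollary from our reduction of pure strategies to randomized action-invisible strategies (Section~\ref{sec:pure-2-randomized}, Lemma~\ref{lemm:red-rand-pure}) together with the already-established memory lower bounds for pure strategies. The reduction is polynomial in the size of the game and preserves the one-sided structure from player~1's side: when player~1 is blind (resp.\ has perfect observation) in $G$, player~1 remains blind (resp.\ has the same distinguishing power on $Q$-states) in $H$.

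For the first part, I would instantiate the reduction on the game $G$ of \figurename~\ref{fig:GS09-wrong-positive}, obtaining the game $H$ depicted in \figurename~\ref{fig:GS09-wrong-positive-invisible-actions}, which is one-sided with player~1 partial and player~2 perfect. By Lemma~\ref{lemm:red-rand-pure}, $H$ has a randomized action-invisible almost-sure winning strategy (since $G$ has a pure one, as established in Example~\ref{ex:one}). It then suffices to argue that no belief-based such strategy exists in $H$. The key observation is that whenever the belief in $H$ is $\{q_1,q_2\}$, the support of the randomized action chosen must be a singleton, for otherwise the next step leads with positive probability to two distinct states $(q_i,a)$ and $(q_i,a')$, and any subsequent action drives at least one of them to a sink, defeating almost-sure winning. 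Thus the randomized strategy behaves purely on belief $\{q_1,q_2\}$, and the argument of Example~\ref{ex:one} applies verbatim: a fixed pure action from $\{q_1,q_2\}$ fails, so belief alone cannot suffice.

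For the second part, I would apply the same reduction to the family of one-sided games from Theorem~\ref{theo:non-elementary} (player~1 perfect, player~2 partial) whose pure almost-sure winning strategies need memory of non-elementary size. The reduced games are two-sided (player~2 retains partial observation on $Q$-states, and both players cannot distinguish $q$ from $(q,a)$, making the game effectively two-sided). Lemma~\ref{lemm:red-rand-pure} gives an equivalence between pure almost-sure winning in $G$ and randomized action-invisible almost-sure winning in $H$. To transfer the lower bound, I would refine the ``if'' direction of Lemma~\ref{lemm:red-rand-pure}: as in its proof, any randomized action-invisible almost-sure winning strategy in $H$ is forced to be pure (any non-singleton support at some prefix creates two distinct $(q,a),(q,a')$ reachable with positive probability, from which the sink is hit in the next step no matter which action is played next). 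Hence every randomized action-invisible winning strategy in $H$ determines a pure winning strategy in $G$ using at most the same memory (up to a factor of two, since the strategy in $H$ repeats each $G$-action twice). Consequently the non-elementary lower bound for pure strategies in $G$ carries over to randomized action-invisible strategies in $H$, and the bound holds for two-sided games as required.

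The main obstacle is the last step: arguing that the memory used by a randomized action-invisible strategy in $H$ cannot be much smaller than that of the corresponding pure strategy in $G$. This hinges on the observation above that almost-sure winning in $H$ forces purity, which I view as the essential extra content beyond a direct reuse of Lemma~\ref{lemm:red-rand-pure} and Theorem~\ref{theo:non-elementary}. Everything else is bookkeeping: checking that the polynomial blow-up of the reduction does not inflate the strategy memory non-trivially, and verifying that the observation structure of the reduced game is compatible with the one-sided/two-sided classification required by the two parts of the corollary.
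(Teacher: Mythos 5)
Your proposal is correct and follows essentially the same route as the paper: both parts are obtained by instantiating the pure-to-randomized reduction of Lemma~\ref{lemm:red-rand-pure} on the game of \figurename~\ref{fig:GS09-wrong-positive} (for part one) and on the non-elementary family of Theorem~\ref{theo:non-elementary} (for part two), using the forced-purity argument (non-singleton support leads to the sink with positive probability) to transfer the pure-strategy lower bounds. The paper states the memory-preservation step more tersely ("the reduction preserves the structure and memory of almost-sure winning strategies"), but your elaboration of why a randomized action-invisible winning strategy in $H$ induces a pure winning strategy in $G$ of essentially the same memory is exactly the intended justification.
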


\clearpage


\begin{thebibliography}{10}

\bibitem{AbadiLamportWolper}
M.~Abadi, L.~Lamport, and P.~Wolper.
\newblock Realizable and unrealizable specifications of reactive systems.
\newblock In {\em ICALP'89}, LNCS 372, pages 1--17. Springer, 1989.

\bibitem{AHK02}
R.~Alur, T.~A. Henzinger, and O.~Kupferman.
\newblock Alternating-time temporal logic.
\newblock {\em Journal of the ACM}, 49:672--713, 2002.

\bibitem{AMV07}
B.~Aminof, A.~Murano, and M.~Y. Vardi.
\newblock Pushdown module checking with imperfect information.
\newblock In {\em Proc. of CONCUR: Concurrency Theory}, LNCS 4703, pages
  460--475. Springer, 2007.

\bibitem{BBG08}
C.~Baier, N.~Bertrand, and M.~Gr{\"o}{\ss}er.
\newblock On decision problems for probabilistic {B}{\"u}chi automata.
\newblock In {\em Proc. of FoSSaCS: Foundations of Software Science and
  Computational Structures}, LNCS 4962, pages 287--301. Springer, 2008.

\bibitem{BBG09}
C.~Baier, N.~Bertrand, and M.~Gr{\"o}{\ss}er.
\newblock The effect of tossing coins in omega-automata.
\newblock In {\em Proc. of CONCUR: Concurrency Theory}, LNCS 5710, pages
  15--29. Springer, 2009.

\bibitem{BG05}
C.~Baier and M.~Gr{\"o}{\ss}er.
\newblock Recognizing omega-regular languages with probabilistic automata.
\newblock In {\em Proc. of LICS}, pages 137--146, 2005.

\bibitem{BGG09}
N.~Bertrand, B.~Genest, and H.~Gimbert.
\newblock Qualitative determinacy and decidability of stochastic games with
  signals.
\newblock In {\em Proc. of LICS: Logic in Computer Science}, pages 319--328.
  IEEE Computer Society, 2009.

\bibitem{BD08}
D.~Berwanger and L.~Doyen.
\newblock On the power of imperfect information.
\newblock In {\em Proc. of FSTTCS: Foundations of Software Technology and
  Theoretical Computer Science}, Dagstuhl Seminar Proceedings 08004.
  Internationales Begegnungs- und Forschungszentrum fuer Informatik (IBFI),
  2008.

\bibitem{Bukharaev}
R.~G. Bukharaev.
\newblock Probabilistic automata.
\newblock {\em Journal of Mathematical Sciences}, 13:359--386, 1980.

\bibitem{CSV09a}
R.~Chadha, A.~P. Sistla, and M.~Viswanathan.
\newblock On the expressiveness and complexity of randomization in finite state
  monitors.
\newblock {\em Journal of the ACM}, 56:1--44, 2009.

\bibitem{CSV09}
R.~Chadha, A.~P. Sistla, and M.~Viswanathan.
\newblock Power of randomization in automata on infinite strings.
\newblock In {\em Proc. of CONCUR: Concurrency Theory}, volume 5710 of {\em
  Lecture Notes in Computer Science}, pages 229--243. Springer, 2009.

\bibitem{CSV10}
R.~Chadha, A.~P. Sistla, and M.~Viswanathan.
\newblock Model checking concurrent programs with nondeterminism and
  randomization.
\newblock In {\em Proc. of FSTTCS: Foundations of Software Technology and
  Theoretical Computer Science}, volume~8 of {\em LIPIcs}, pages 364--375.
  Schloss Dagstuhl - Leibniz-Zentrum fuer Informatik, 2010.

\bibitem{CD10b}
K.~Chatterjee and L.~Doyen.
\newblock The complexity of partial-observation parity games.
\newblock In {\em Proc. of LPAR 2010: Logic for Programming, Artificial
  Intelligence, and Reasoning}, LNCS 6397, pages 1--14. Springer-Verlag, 2010.

\bibitem{CDGH10}
K.~Chatterjee, L.~Doyen, H.~Gimbert, and T.~A. Henzinger.
\newblock Randomness for free.
\newblock In {\em Proc. of MFCS 2010}. Springer, 2010.

\bibitem{CDH10a}
K.~Chatterjee, L.~Doyen, and T.~A. Henzinger.
\newblock Qualitative analysis of partially-observable {M}arkov decision
  processes.
\newblock In {\em Proc. of MFCS 2010: Mathematical Foundations of Computer
  Science}, LNCS 6281, pages 258--269. Springer-Verlag, 2010.

\bibitem{CDHR07}
K.~Chatterjee, L.~Doyen, T.~A. Henzinger, and J.-F. Raskin.
\newblock Algorithms for omega-regular games of incomplete information.
\newblock {\em Logical Methods in Computer Science}, 3(3:4), 2007.

\bibitem{Con92}
A.~Condon.
\newblock The complexity of stochastic games.
\newblock {\em Information and Computation}, 96(2):203--224, 1992.

\bibitem{InterfaceAutomata}
L.~de~Alfaro and T.~A. Henzinger.
\newblock Interface automata.
\newblock In {\em Proceedings of the Ninth Annual Symposium on Foundations of
  Software Engineering}, pages 109--120. ACM Press, 2001.

\bibitem{AHK07}
L.~de~Alfaro, T.~A. Henzinger, and O.~Kupferman.
\newblock Concurrent reachability games.
\newblock {\em Theor. Comput. Sci.}, 386(3):188--217, 2007.

\bibitem{DDR06}
M.~{De Wulf}, L.~Doyen, and J.-F. Raskin.
\newblock A lattice theory for solving games of imperfect information.
\newblock In {\em Proc. of HSCC: Hybrid Systems---Computation and Control},
  LNCS 3927, pages 153--168. Springer-Verlag, 2006.

\bibitem{Dill89book}
D.~L. Dill.
\newblock {\em Trace Theory for Automatic Hierarchical Verification of
  Speed-independent Circuits}.
\newblock The MIT Press, 1989.

\bibitem{DF08}
R.~Dimitrova and B.~Finkbeiner.
\newblock Abstraction refinement for games with incomplete information.
\newblock In {\em Proc. of FSTTCS: Foundations of Software Technology and
  Theoretical Computer Science}, volume~2 of {\em LIPIcs}, pages 175--186.
  Schloss Dagstuhl - Leibniz-Zentrum fuer Informatik, 2008.

\bibitem{DR10}
L.~Doyen and J.-F. Raskin.
\newblock Antichains algorithms for finite automata.
\newblock In {\em Proc. of TACAS: Tools and Algorithms for the Construction and
  Analysis of Systems}, LNCS 6015, pages 2--22. Springer-Verlag, 2010.

\bibitem{EmersonJutla91}
E.~A. Emerson and C.~Jutla.
\newblock Tree automata, mu-calculus and determinacy.
\newblock In {\em Proceedings of the 32th Annual Symposium on Foundations of
  Computer Science}, pages 368--377. IEEE Computer Society Press, 1991.

\bibitem{GO10}
H.~Gimbert and Y.~Oualhadj.
\newblock Probabilistic automata on finite words: Decidable and undecidable
  problems.
\newblock In {\em Proc. of ICALP (2)}, volume LNCS 6199, pages 527--538.
  Springer, 2010.

\bibitem{GS09}
V.~Gripon and O.~Serre.
\newblock Qualitative concurrent stochastic games with imperfect information.
\newblock In {\em Proc. of ICALP (2)}, LNCS 5556, pages 200--211. Springer,
  2009.

\bibitem{HK99}
T.~A. Henzinger and P.W. Kopke.
\newblock Discrete-time control for rectangular hybrid automata.
\newblock {\em Theoretical Computer Science}, 221:369--392, 1999.

\bibitem{Kechris}
A.~Kechris.
\newblock {\em Classical Descriptive Set Theory}.
\newblock Springer, 1995.

\bibitem{Konig36}
D.~K\"onig.
\newblock {\em Theorie der endlichen und unendlichen Graphen}.
\newblock Akademische Verlagsgesellschaft, Leipzig, 1936.

\bibitem{KV00a}
O.~Kupferman and M.~Y. Vardi.
\newblock Synthesis with incomplete informatio.
\newblock In {\em Advances in Temporal Logic}, pages 109--127. Kluwer Academic
  Publishers, January 2000.

\bibitem{PT87}
C.~H. Papadimitriou and J.~N. Tsitsiklis.
\newblock The complexity of {M}arkov decision processes.
\newblock {\em Mathematics of Operations Research}, 12:441--450, 1987.

\bibitem{PAZBook}
A.~Paz.
\newblock {\em {Introduction to probabilistic automata}}.
\newblock Academic Press, Inc. Orlando, FL, USA, 1971.

\bibitem{PR89}
A.~Pnueli and R.~Rosner.
\newblock On the synthesis of a reactive module.
\newblock In {\em Proceedings of the 16th Annual Symposium on Principles of
  Programming Languages}, pages 179--190. ACM Press, 1989.

\bibitem{RabinProb63}
M.~O. Rabin.
\newblock Probabilistic automata.
\newblock {\em Information and Control}, 6:230--245, 1963.

\bibitem{RamadgeWonham87}
P.~J. Ramadge and W.~M. Wonham.
\newblock Supervisory control of a class of discrete-event processes.
\newblock {\em SIAM Journal of Control and Optimization}, 25(1):206--230, 1987.

\bibitem{Reif79}
J.~H. Reif.
\newblock Universal games of incomplete information.
\newblock In {\em Proc. of STOC: Symposium on Theory of Computing}, pages
  288--308. ACM, 1979.

\bibitem{Reif84}
J.~H. Reif.
\newblock {T}he complexity of two-player games of incomplete information.
\newblock {\em Journal of Computer and System Sciences}, 29:274--301, 1984.

\bibitem{RP80}
J.~H. Reif and G.~L. Peterson.
\newblock A dynamic logic of multiprocessing with incomplete information.
\newblock In {\em Proc. of POPL}, pages 193--202. ACM, 1980.

\bibitem{RSV03}
D.~Rosenberg, E.~Solan, and N.~Vieille.
\newblock Stochastic games with imperfect monitoring (discussion paper).
\newblock Technical Report 1376, Northwestern University, Center for
  Mathematical Studies in Economics and Management Science, July, 2003.

\bibitem{Sha53}
L.~S. Shapley.
\newblock Stochastic games.
\newblock {\em Proc.\ Nat.\ Acad.\ Sci. {USA}}, 39:1095--1100, 1953.

\bibitem{SorinBook}
S.~Sorin.
\newblock {\em A first course in zero-sum repeated games}.
\newblock Springer, 2002.

\bibitem{Thomas97}
W.~Thomas.
\newblock Languages, automata, and logic.
\newblock In {\em Handbook of Formal Languages}, volume 3, Beyond Words,
  chapter~7, pages 389--455. Springer, 1997.

\bibitem{TBG09}
M.~Tracol, C.~Baier, and M.~Gr{\"o}{\ss}er.
\newblock Recurrence and transience for probabilistic automata.
\newblock In {\em Proc. of FSTTCS: Foundations of Software Technology and
  Theoretical Computer Science}, volume~4 of {\em LIPIcs}, pages 395--406.
  Schloss Dagstuhl - Leibniz-Zentrum fuer Informatik, 2009.

\bibitem{Var85}
M.~Y. Vardi.
\newblock Automatic verification of probabilistic concurrent finite-state
  systems.
\newblock In {\em Proc. of FOCS: Foundations of Computer Science}, pages
  327--338. IEEE Computer Society Press, 1985.

\end{thebibliography}
\end{document}